\pgfplotsset{compat=newest}
\title{Distributed Least Squares in Small Space via Sketching and Bias Reduction}
\author{Sachin Garg\thanks{Department of Electrical Engineering \& Computer Science, University of Michigan, \texttt{sachg@umich.edu}} \and Kevin Tan\thanks{Department of Statistics, University of Pennsylvania, \texttt{kevtan@umich.edu}} \and  Micha{\l} Derezi{\'n}ski\thanks{Department of Electrical Engineering \& Computer Science, University of Michigan, \texttt{derezin@umich.edu}} }
\date{}
\newif\ifarxiv
\def\nnz{{\mathrm{nnz}}}
\def\Bernoulli{{\mathrm{Bernoulli}}}
\def\poly{{\mathrm{poly}}}
\def\polylog{{\mathrm{polylog}}}
\def\Rb{{\mathbf{R}}}
\def\xib{\boldsymbol\xi}
\def\G{\mathbf G}
\def\Rb{\mathbf R}
\def\r{\mathbf r}
\def\R{\mathbf R}
\newcommand{\BlackBox}{\rule{1.5ex}{1.5ex}}  
\DeclareMathOperator*{\argmin}{\mathop{\mathrm{argmin}}}
\DeclareMathOperator*{\diag}{\mathop{\mathrm{diag}}}
\def\x{\mathbf x}
\def\y{\mathbf y}
\def\z{\mathbf z}
\def\a{\mathbf a}
\def\b{\mathbf b}
\def\v{\mathbf v}
\def\e{\mathbf e}
\def\zero{\mathbf 0}
\def\u{\mathbf u}
\def\B{\mathbf B}
\def\A{\mathbf A}
\def\C{\mathbf C}
\def\U{\mathbf U}
\def\Rb{\mathbf R}
\def\S{\mathbf S}
\def\Z{\mathbf Z}
\def\I{\mathbf I}
\def\A{\mathbf A}
\def\P{\mathbf P}
\def\Q{\mathbf Q}
\def\E{\mathbb E}
\def\R{\mathbb R} 
\def\tr{\mathrm{tr}}
\let\origtop\top
\renewcommand\top{{\scriptscriptstyle{\origtop}}} 
\definecolor{silver}{cmyk}{0,0,0,0.3}
\definecolor{yellow}{cmyk}{0,0,0.9,0.0}
\definecolor{reddishyellow}{cmyk}{0,0.22,1.0,0.0}
\definecolor{black}{cmyk}{0,0,0.0,1.0}
\definecolor{darkYellow}{cmyk}{0.2,0.4,1.0,0}
\definecolor{darkSilver}{cmyk}{0,0,0,0.1}
\definecolor{grey}{cmyk}{0,0,0,0.5}
\definecolor{darkgreen}{cmyk}{0.6,0,0.8,0}
\newcommand{\Green}[1]{{\color{darkgreen}  {#1}}}
\newcommand{\Blue}[1]{\color{blue}{#1}\color{black}}
\newcommand{\Brown}[1]{{\color{brown}{#1}\color{black}}}
\newenvironment{proof}{\par\noindent{\bf Proof\ }}{\hfill\BlackBox\\[2mm]}
\newtheorem{theorem}{Theorem}
\newtheorem{example}{Example}
\newtheorem{lemma}{Lemma}
\newtheorem{proposition}{Proposition}
\newtheorem{remark}{Remark}
\newtheorem{corollary}{Corollary}
\newtheorem{definition}{Definition}
\newtheorem{conjecture}[theorem]{Conjecture}
\newtheorem{claim}[theorem]{Claim}
\newcommand\es{{\mathcal{E}}}
\newcommand\ep{{\mathcal{E}^{'}}}
\newcommand\eone{{\mathcal{E}_1}}
\newcommand\etwo{{\mathcal{E}_2}}
\newcommand\ethr{{\mathcal{E}_3}}
\newcommand\ind{{\boldsymbol{1}}}
\newcommand{\nsq}[1]{\left\lVert#1\right\rVert^2}
\newcommand{\Qm}{\Q_{-i}}
\newcommand{\Qmj}{\Q_{-ij}}
\newcommand{\norm}[1]{\left\lVert#1\right\rVert}
\begin{document}

\maketitle
\begin{abstract}%
  Matrix sketching is a powerful tool for reducing the size of large data matrices. Yet there are fundamental limitations to this size reduction when we want to recover an accurate estimator for a task such as least square regression. We show that these limitations can be circumvented in the distributed setting by designing sketching methods that minimize the bias of the estimator, rather than its error. In particular, we give a sparse sketching method running in optimal space and current matrix multiplication time, which recovers a nearly-unbiased least squares estimator using two passes over the data. This leads to new communication-efficient distributed averaging algorithms for least squares and related tasks, which directly improve on several prior approaches. Our key novelty is a new bias analysis for sketched least squares, giving a sharp characterization of its dependence on the sketch sparsity. The techniques include new higher-moment restricted Bai-Silverstein inequalities, which are of independent interest to the non-asymptotic analysis of deterministic equivalents for random matrices that arise from sketching.
\end{abstract}
\section{Introduction}
\label{s:intro}
Matrix sketching is a powerful collection of randomized techniques for compressing large data matrices, developed over a long line of works as part of the area of Randomized Numerical Linear Algebra \citep[RandNLA, see e.g.,][]{woodruff2014sketching,drineas2016randnla,martinsson2020randomized,randlapack_book}. In the most basic setting, sketching can be used to reduce the large dimension $n$ of a data matrix $\A\in\R^{n\times d}$ by applying a random sketching matrix (operator) $\S\in\R^{m\times n}$ to obtain the sketch $\tilde\A=\S\A\in\R^{m\times d}$ where $m\ll n$. For example, sketching can be used to approximate the solution to the least squares problem, $\x^*=\argmin_\x L(\x)$ where $L(\x)=\|\A\x-\b\|^2$, by using a sketched estimator $\tilde \x =\argmin_\x\|\tilde\A\x-\tilde\b\|^2$, where $\tilde\A=\S\A$ and $\tilde\b=\S\b$. 

Perhaps the simplest form of sketching is subsampling, where the sketching operator $\S$ selects a random sample of the rows of matrix $\A$. However, the real advantage of sketching as a framework emerges as we consider more complex operators $\S$, such as sub-Gaussian matrices \citep{achlioptas2003}, randomized Hadamard transforms \citep{ailon2009fast}, and sparse random matrices \citep{clarkson2013low}. These approaches have been shown to ensure higher quality and more robust compression of the data matrix, e.g., leading to provable $\epsilon$-approximation guarantees for the estimate $\tilde\x$ in the least squares task, i.e.,  $L(\tilde\x)\leq (1+\epsilon)L(\x^*)$. Nevertheless, there are fundamental limitations to how far we can compress a data matrix using sketching while ensuring an $\epsilon$-approximation. These limitations pose a challenge particularly in space-limited computing environments, such as for streaming algorithms where we observe the matrix $\A$, say, one row at a time, and we have limited space for storing the sketch \citep{clarkson2009numerical}.

One strategy for overcoming the fundamental limitations of sketching as a compression tool is to look beyond the single approximation guarantee provided by a sketching-based estimator $\tilde\x$, and consider how its broader statistical properties can be leveraged in a given computing environment. To that end, many recent works have demonstrated both theoretically and empirically that sketching-based estimators often exhibit not only approximation robustness but also statistical robustness, for instance enjoying sharp confidence intervals, effectiveness of statistical inference tools such as bootstrap and cross-validation, as well as accuracy boosting techniques such as distributed averaging \citep[e.g.,][]{lopes2019bootstrap,DobLiu18_TR,lacotte2020optimal,lejeune2022asymptotics}. Yet, these results have had limited impact on the traditional computational complexity analysis in RandNLA and sketching literature, since many of them either impose additional assumptions, or focus on sharpening the constant factors, or require using somewhat more expensive sketching techniques. The goal of this work is to demonstrate that statistical properties of sketching-based estimators can have a substantial impact on the computational trade-offs that arise in RandNLA.

Our key motivating example is the above mentioned least squares regression task. It is well understood that for an $n\times d$ least squares task, to recover an $\epsilon$-approximate solution out of an $m\times d$ sketch, we need sketch size at least $m=\Omega(d/\epsilon)$. This has been formalized in the streaming setting with a lower bound of $\Omega(\epsilon^{-1}d^2\log(nd))$ bits of space required, when all of the input numbers use $O(\log(nd))$ bits of precision \citep{clarkson2009numerical}. One setting where this can be circumvented is in the distributed computing model where the bits can be spread out across many machines, so that the per-machine space can be smaller. Here, one could for instance hope that we can maintain small $O(d)\times d$ sketches in $q=O(1/\epsilon)$ machines and then combine their estimates to recover an $\epsilon$-approximate solution. A simple and attractive approach is to average the estimates $\tilde\x_i$ produced by the individual machines, returning $\hat\x = \frac1q\sum_{i=1}^q\tilde\x_i$, as this only requires each machine to communicate $O(d\log(nd))$ bits of information about its sketch. This approach requires the sketching-based estimates $\tilde\x_i$ to have sufficiently small bias for the averaging scheme to be effective. While this has been demonstrated empirically in many cases, existing theoretical results still require relatively expensive sketching methods to recover low-bias estimators, leading to an unfortunate trade-off in the distributed averaging scheme between the time and space complexity required.

In this work, we address the time-space trade-off in distributed averaging of sketching-based estimators, by giving a sharp characterization of how their bias depends on the sparsity of the sketching matrix. Remarkably, we show that in the distributed streaming environment one can compress the data down to the minimum size of $O(d^2\log(nd))$ bits at no extra computational cost, while still being able to recover an $\epsilon$-approximate solution for least squares and related problems. Importantly, our results require the sketching matrix to be slightly denser than is necessary for obtaining approximation guarantees on a single estimate, and thus, cannot be recovered  by standard RandNLA sampling methods such as approximate leverage score sampling \citep{drineas2006sampling}. 

\begin{figure}
\centering
\begin{tikzpicture}[scale=0.8,font=\footnotesize]

    \tikzstyle{arr}=[->,dashed,blue!50, line width=0.75, shorten >=2pt, shorten <=2pt]
      \pgfmathsetmacro{\aaa}{0.4};
      \pgfmathsetmacro{\bbb}{0.8};
      \pgfmathsetmacro{\ccc}{1.4};
      \pgfmathsetmacro{\ddd}{1.8};
      \pgfmathsetmacro{\eee}{2.4};
      \pgfmathsetmacro{\fff}{2.8};      

      \pgfmathsetmacro{\aa}{0};      
      \pgfmathsetmacro{\bb}{0.2};      
      \pgfmathsetmacro{\cc}{1.8};      
  \draw (7,2.3) node {$\A$};  
  \draw[fill=red!20] (6,-2) rectangle (8,2);

    \draw[fill=blue!30] (6, 1.8 - \aaa) rectangle (8, 2 - \aaa);
    \draw[fill=blue!30] (6, 1.8 - \bbb) rectangle (8, 2 - \bbb);    
    \draw[fill=blue!30] (6, 1.8 - \ccc) rectangle (8, 2 - \ccc);
    \draw[fill=blue!30] (6, 1.8 - \ddd) rectangle (8, 2 - \ddd);            
    \draw[fill=blue!30] (6, 1.8 - \eee) rectangle (8, 2 - \eee);
    \draw[fill=blue!30] (6, 1.8 - \fff) rectangle (8, 2 - \fff);            

  \draw (10,2.3) node {Subsample};  
\draw (13,2.3) node {$\tilde O(d/\epsilon)\times d$};  
  \draw[fill=blue!10] (12,-1) rectangle (14,2);
\draw (13,1.625) node (A) {\scriptsize $\tilde O(1/\epsilon)$ rows };
\draw (13,0.825) node (B) {\scriptsize $\tilde O(1/\epsilon)$ rows };
\draw (13,0.3) node { \vdots };
\draw (13,-0.625) node (C) {\scriptsize $\tilde O(1/\epsilon)$ rows };
\draw[dashed] (12,1.25) -- (14,1.25);
\draw[dashed] (12,0.5) -- (14,0.5);
\draw[dashed] (12,-0.25) -- (14,-0.25);

\draw[arr] (8,1.9-\aaa) -- (A);
\draw[arr] (8,1.9-\bbb) -- (A);
\draw[arr] (8,1.9-\ccc) -- (B);
\draw[arr] (8,1.9-\ddd) -- (B);
\draw[arr] (8,1.9-\eee) -- (C);
\draw[arr] (8,1.9-\fff) -- (C);

\draw (16,2.3) node {Sketch};  
\draw (19,2.3) node {$O(d)\times d$}; 
\draw[fill=green!20] (18,0) rectangle (20,2);

\draw[fill=green!85!black] (18, 1.8 - \aa) rectangle (20, 2 - \aa);
\draw[fill=green!85!black] (18, 1.8 - \bb) rectangle (20, 2 - \bb);
\draw[fill=green!85!black] (18, 1.8 - \cc) rectangle (20, 2 - \cc);

\draw[arr,green!50!black] (A) -- (18,1.9-\aa);
\draw[arr,green!50!black] (B) -- (18,1.9-\bb);
\draw[arr,green!50!black] (C) -- (18,1.9-\cc);

 \draw (19.08,1.1) node {\vdots};
  \end{tikzpicture}%
  \caption{Illustration of the leverage score sparsification algorithm used in Theorem \ref{t:main}. Each row of the sketch mixes $\tilde O(1/\epsilon)$ leverage score samples from $\A$. Remarkably, the $\epsilon$-error guarantee of the subsampled estimator is retained as $\epsilon$-bias of the sketched estimator.}
  \label{fig:vis}
\end{figure}
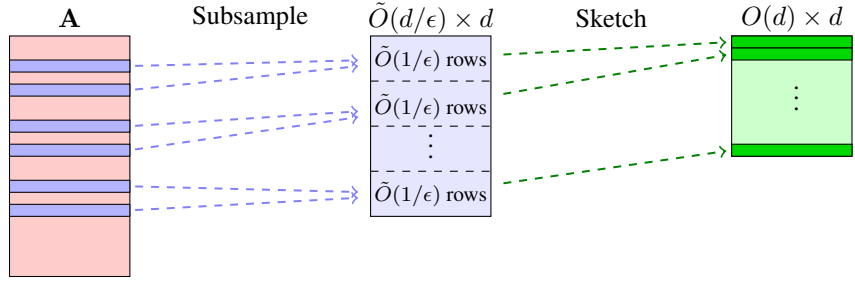

Before we state the full result in the distributed setting, we give our main technical contribution. which is the following efficient construction of a low-bias least squares estimator in a single pass using only $O(d^2\log(nd))$ bits of space, assuming all numbers use $O(\log(nd))$ bits of precision. Below, $\gamma>0$ denotes an arbitrarily small constant.

\begin{theorem}\label{t:main} 
      Given streaming access to $\A\in\R^{n\times d}$ and $\b\in\R^n$, and direct access to a preconditioner matrix $\P\in\R^{d\times d}$ such that $\kappa(\A\P)\le\alpha$, within a single pass over $(\A,\b)$, in  $O(\gamma^{-1}\nnz(\A)+ 
      \epsilon^{-1}\alpha d^{2+\gamma}\polylog(d))$ time and $O(d^2\log(nd))$ bits of
      space, we can construct a randomized estimator $\tilde \x$ for the least squares solution $\x^*=\argmin_\x\|\A\x-\b\|^2$ such that:
    \begin{align*}
      \textnormal{(Bias)}\quad\big\|\A\E[\tilde\x]-\b\big\|^2&\leq (1+\epsilon)\|\A\x^*-\b\|^2,\\
      \textnormal{(Variance)}\quad\E\big[\|\A\tilde\x-\b\|^2\big]&\leq 2\,\|\A\x^*-\b\|^2.
    \end{align*}
     
  \end{theorem}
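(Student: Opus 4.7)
The plan is to construct $\tilde\x$ via a two-stage sketch $\S=\S_2\S_1$ maintained in one streaming pass. Here $\S_1\in\R^{m_1\times n}$ is an approximate leverage-score subsampler with $m_1=\widetilde O(d/\epsilon)$ rescaled Bernoulli rows, drawn from the overestimates $\tilde\ell_i=\|\P^\top\a_i\|^2$, which satisfy $\ell_i\le\tilde\ell_i\le\alpha^2\ell_i$ and $\sum_i\tilde\ell_i=O(\alpha^2 d)$ whenever $\kappa(\A\P)\le\alpha$; and $\S_2\in\R^{m\times m_1}$ is a sparse Rademacher sketch with $m=O(d)$ rows and $O(1)$ nonzeros per column. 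Because $\S_2\S_1$ has $O(1)$ nonzeros per column, the sketch $(\S\A,\S\b)$ can be maintained on the fly in $O(d^2)$ words of storage: for each row $\a_i^\top$, evaluate $\tilde\ell_i$ using the in-memory $\P$, decide in $O(1)$ whether to keep $\a_i$, and if so add a signed rescaled copy of $\a_i$ to $O(1)$ sketch rows chosen by a random hash. The final estimator $\tilde\x=(\S\A)^+\S\b$ is then computed in $O(\epsilon^{-1}\alpha d^{2+\gamma}\polylog(d))$ time via a preconditioned iterative solve exploiting $\P$.

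For the variance bound I would argue in two composition steps. Standard leverage-score sampling with overestimates of total mass $O(\alpha^2 d)$ and $m_1=\widetilde O(d/\epsilon)$ samples makes $\S_1\A$ a constant-factor subspace embedding of $\A$ (in fact, much stronger, but constant factor suffices here). The outer $O(1)$-sparse Rademacher sketch $\S_2$ with $m=O(d)$ rows is a constant-factor OSE for the fixed $d$-dimensional column space of $\S_1\A$ by now-standard sparse embedding results. Composing, $\S\A$ is a constant-factor subspace embedding of $\A$, and the classical sketched least-squares analysis then yields $\E\|\A\tilde\x-\b\|^2\le 2\|\A\x^*-\b\|^2$.

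The hard part, and the core contribution of the theorem, is the bias bound $\|\A\E[\tilde\x]-\b\|^2\le(1+\epsilon)\|\A\x^*-\b\|^2$. This cannot be extracted from subspace-embedding arguments, because $\tilde\x=(\S\A)^+\S\b$ is nonlinear in $\S$ and subspace embedding only controls $\E[L(\tilde\x)]$, not $L(\E[\tilde\x])$. Here I would invoke the higher-moment restricted Bai--Silverstein inequalities highlighted in the abstract to compute a sharp non-asymptotic deterministic equivalent for $\E[(\S\A)^+\S]$. The guiding structural point is that the inner stage $\S_1$ at oversampling rate $\widetilde O(d/\epsilon)$ already makes $\S_1\A$ a $(1\pm\epsilon)$-embedding, so the only remaining question is whether the outer $\S_2$ introduces extra bias; the choices $m=O(d)$ and $O(1)$ column sparsity are calibrated precisely at the threshold where the bias contributed by $\S_2$ is dominated by the $\epsilon$-bias inherited from $\S_1$. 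Pushing the deterministic-equivalent analysis through for sparse sub-Gaussian entries, rather than the dense Gaussian entries for which such statements are classical, is the principal obstacle, and it is exactly why the sketch must be slightly denser than what a bare subspace embedding would require.
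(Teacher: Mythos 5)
Your proposal correctly identifies the architecture the paper uses (a leverage-score-guided sparse sketch of size $O(d)\times d$ in which each sketch row mixes $\tilde O(1/\epsilon)$ sampled rows, with the variance following from a subspace-embedding argument and the bias being the hard part), but it has a genuine gap at exactly the point you flag as "the core contribution": the bias bound is not proved, only attributed to "higher-moment restricted Bai--Silverstein inequalities" yielding "a sharp deterministic equivalent for $\E[(\S\A)^+\S]$." That deterministic-equivalent computation \emph{is} the theorem's content. The paper carries it out via a leave-one-out decomposition: writing $\tilde\x=\gamma\Q\U^\top\S^\top\S\b$ with $\Q=(\gamma\U^\top\S^\top\S\U)^{-1}$, applying Sherman--Morrison to replace $\Q$ by $\Q_{-i}$, splitting the bias into a term $\Z_0\r$ (controlled because $\E[\x_i\x_i^\top]=\I$ and $\U^\top\r=0$, up to the conditioning event) and a dominant term $\Z_2\r$ involving $\frac{\gamma}{\gamma_i}-1$; the latter is handled by a H\"older split with exponents $p=O(\log n)$ and $q=1+1/O(\log n)$, a new restricted Bai--Silverstein moment inequality proved via Rosenthal's inequality and a matrix-Chernoff bound on $\|\U_{\xib}^\top\U_{\xib}\|$, and an Azuma-type martingale bound on $\tr(\Q_{-i})-\E\tr(\Q_{-i})$. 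None of these steps appear in your proposal, and your guiding heuristic (that the $\epsilon$-bias is "inherited from $\S_1$" while $\S_2$ only needs to not add bias) does not decompose cleanly, since the estimator $(\S_2\S_1\A)^+\S_2\S_1\b$ is not a perturbation of the $\S_1$-only estimator.

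There are also two concrete problems with your specific construction. First, the paper's LESS embedding has i.i.d.\ rows, with the entries of each column of $\S$ sampled i.i.d.\ Bernoulli; the proof explicitly relies on this (both for the Sherman--Morrison leave-one-out step and for the per-row sparsifier analysis). Your $\S_2\S_1$ with a shared subsample and a hash assigning each kept row to $O(1)$ buckets destroys that independence across rows. Second, your variance argument asserts that an $m=O(d)$-row sketch with $O(1)$ nonzeros per column is a constant-factor OSE "by now-standard sparse embedding results"; this is false in general (CountSketch with one nonzero per column needs $m=\Omega(d^2)$, and OSNAP-type bounds with $O(1)$ nonzeros per column need $m=d^{1+\Omega(1)}$). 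The paper instead invokes the LESS subspace-embedding theorem of Chenakkod et al., which needs $\mathrm{polylog}(d)$ nonzeros per row with leverage-score-dependent placement. A minor further issue: computing $\tilde\ell_i=\|\P^\top\a_i\|^2$ exactly costs $O(d\cdot\nnz(\a_i))$ per row, i.e.\ $O(d\,\nnz(\A))$ total, exceeding your claimed $O(\gamma^{-1}\nnz(\A))$; the paper first compresses $\P$ to $\tilde\P=\P\G$ with a $d\times O(1/\gamma)$ Gaussian matrix.
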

  \begin{remark}
    The above construction assumes access to a preconditioner matrix $\P$ with $\kappa(\A\P)\le\alpha$ (where $\kappa$ denotes the condition number). Such matrix can be obtained efficiently with $\alpha=O(1)$ in a separate single pass, leading to a two-pass algorithm described later in Theorem \ref{t:two-passes}.
  \end{remark}
Our estimator $\tilde\x$ is constructed  at the end of the data pass from a sketch $(\tilde\A,\tilde\b)$ where $\tilde\A=\S\A$ and $\tilde\b=\S\b$, by minimizing $\|\tilde\A\x-\tilde\b\|^2$ using preconditioned conjugate gradient. Here, $\S$ is a carefully constructed sparse sketching matrix which is inspired by the so-called leverage score sparsified (LESS) embeddings \citep{derezinski2021sparse}. Leverage scores represent the relative importances of the rows of $\A$ which are commonly used for subsampling in least squares (see Definition \ref{def:lev_sc}), and their estimates can be easily obtained in a single pass by using the preconditioner matrix $\P$. 

Our time complexity bound of $\tilde O(\nnz(\A) + d^2/\epsilon)$ matches the time it would take (for a single machine) to subsample $\tilde O(d/\epsilon)$ rows of $\A$ according to the approximate leverage scores and produce an estimator $\tilde\x$ that achieves the $\epsilon$-error bound $\|\A\tilde\x-\b\|^2\leq (1+\epsilon)\|\A\x^*-\b\|^2$. However, this strategy requires either maintaining $\tilde O (d^2/\epsilon)$ bits of space for the sketch, or computing $\tilde\x$ directly along the way, blowing up the runtime to $\tilde O(d^\omega/\epsilon)$. Since approximate leverage score sampling leads to significant least squares bias, averaging can only improve this to $\tilde O(d^\omega/\sqrt\epsilon)$ (see Table \ref{tab:comparison}). An alternate strategy would be to combine leverage score sampling with a preconditioned mini-batch stochastic gradient descent (Weighted Mb-SGD), with mini-batches chosen so that they fit in $\tilde O(d^2)$ space. This achieves the same time and space complexity as our method, but due to the streaming access to $\A$ and the sequential nature of SGD, it requires $O(1/\epsilon)$ data passes.

Instead, our algorithm essentially mixes an $\tilde O(d/\epsilon)$ size leverage score sample into an $O(d)$ size sketch, merging $\tilde O(1/\epsilon)$ rows of $\A$ into a single row of the sketch (see Figure \ref{fig:vis}). This results in better data compression compared to direct leverage score sampling, with only $\tilde O(d^2)$ bits of space, while retaining the same $\tilde O(\nnz(\A)+d^2/\epsilon)$ runtime complexity as the above approaches and requiring only a single data pass. The resulting estimator $\tilde\x$ can no longer recover the $\epsilon$-error bound, but remarkably, its expectation $\E[\tilde\x]$ still does. To turn this into an improved estimator in a distributed model, we can simply average $q=1/\epsilon$ such estimators, i.e., $\hat\x=\frac1q\sum_{i=1}^q\tilde\x_i$, obtaining $\E\|\A\hat\x-\b\|^2\leq (1+2\epsilon)\|\A\x^*-\b\|^2$. 
As shown in Table \ref{tab:comparison}, ours is the first result in this model to achieve $\tilde O(d^2)$ space in a single pass and faster than current matrix multiplication time $O(d^\omega)$.

\begin{table}
\hspace{-1mm}\begin{tabular}{r||c|c|c}
Reference & 
Method & Total runtime & Parallel passes\\
    \hline\hline
 Folklore    & Gaussian sketch & $O(nd^{\omega-1})$ &1\\ 
\cite{wang2018sketched}& Leverage Score Sampling & $\nnz(\A) + \tilde O(d^\omega/\sqrt\epsilon)$ &1\\
\cite{alv22}& Determinantal Point Process & $\nnz(\A) + \tilde O(d^\omega)$ &$\log^3(n/\epsilon)$\\
 \cite{chenakkod2023optimal}& Weighted Mb-SGD (sequential) & $\nnz(\A) + \tilde O(d^2/\epsilon)$ & $1/\epsilon$\\ 
    \hline
\textbf{This work} (Thm. \ref{t:main}) & Leverage Score Sparsification & $\nnz(\A) +
                                               \tilde O( d^2/\epsilon )$ & 1
  \end{tabular}                                      
  \caption{Comparison of time complexities and parallel passes over the data required for different methods to obtain a $(1+\epsilon)$-approximation in $O(d^2\log(nd))$ bits of space for an $n\times d$ least squares problem $(\A,\b)$, given a preconditioner $\P$ such that $\kappa(\A\P)=O(1)$ (see Section \ref{s:preliminaries} for our computational model). We include the fully sequential  Weighted Mb-SGD as a reference.}\label{tab:comparison}
\end{table}

Finally, by incorporating a preconditioning scheme, we illustrate how our construction can be used to design the first algorithm that solves least squares in current matrix multiplication time, constant parallel passes and $O(d^2\log(nd))$ bits of space. We note that the $O(d^\omega)$ cost  comes only from the worst-case complexity of constructing the preconditioner $\P$, which can often be accelerated in practice. The computational model used in Theorem \ref{t:two-passes} is described in detail in Section \ref{s:preliminaries}, whereas applications of our results beyond least squares are discussed in Section~\ref{s:conclusions}.
  
  \begin{theorem}\label{t:two-passes}
    Given $\A\in\R^{n\times d}$ and $\b\in\R^n$ in the distributed
    model, using two parallel passes with $O(1/\epsilon)$ machines, we can compute $\tilde \x$
    such that with probability $0.9$
    $$\|\A\tilde\x-\b\|\leq (1+\epsilon) \|\A\x^*-\b\|$$
    in $O(\gamma^{-1}\nnz(\A) +
    d^\omega +\epsilon^{-1}d^{2+\gamma}\polylog(d))$ time, $O(d^2\log(nd))$ bits of space and
    $O(d\log(nd))$ bits of communication. 
  \end{theorem}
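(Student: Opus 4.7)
The plan is to combine a single-pass preconditioner construction with a parallel deployment of Theorem~\ref{t:main} across $q = \Theta(1/\epsilon)$ machines, followed by simple averaging of the resulting estimators. The bulk of the technical work is already contained in Theorem~\ref{t:main}; the present theorem is a reduction that converts its bias/variance guarantees into a high-probability error bound through independent averaging.

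\textbf{Pass 1 (building the preconditioner).} I would apply an oblivious subspace embedding $\S_0$ of dimension $m_0 = O(d\polylog d)$ with near input-sparsity application time (e.g.\ a sparse OSNAP-style sketch with column-sparsity $O(\polylog d)$) to the streamed rows of $\A$, in time $O(\gamma^{-1}\nnz(\A) + d^{2+\gamma})$ and space $O(m_0 d\log(nd)) = O(d^2\log(nd))$ bits. After the pass, compute a thin QR factorization $\S_0\A = \Q\R$ in time $O(d^\omega)$ and set $\P = \R^{-1}$, which gives $\kappa(\A\P) = O(1)$ with constant probability. Because $\S_0$ is oblivious, all $q$ machines can regenerate the same $\P$ from a shared random seed (and from the same aggregated sketch), so no machine needs to transmit $\P$ during the subsequent pass; each machine holds $\P$ in its $O(d^2\log(nd))$-bit working memory.

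\textbf{Pass 2 (parallel sketch-and-solve) and aggregation.} In parallel across the $q = \Theta(1/\epsilon)$ machines, each machine independently runs the single-pass algorithm of Theorem~\ref{t:main} with preconditioner $\P$ (so $\alpha = O(1)$) and accuracy parameter $\epsilon$, producing an estimator $\tilde\x_i$ in time $O(\gamma^{-1}\nnz(\A) + \epsilon^{-1}d^{2+\gamma}\polylog d)$ and space $O(d^2\log(nd))$ bits. The guarantees of Theorem~\ref{t:main} give, writing $L^* = \|\A\x^*-\b\|^2$, both $\|\A\E[\tilde\x_i]-\b\|^2 \le (1+\epsilon)L^*$ and $\E\|\A\tilde\x_i-\b\|^2 \le 2L^*$. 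Each machine then transmits its $d$-dimensional $\tilde\x_i$ to a coordinator node ($O(d\log(nd))$ bits per machine), which computes $\hat\x = \tfrac{1}{q}\sum_{i=1}^q \tilde\x_i$.

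\textbf{Bias-variance analysis.} Using $\A^\top(\A\x^* - \b) = 0$, the excess loss satisfies $L(\x) - L^* = \|\A(\x - \x^*)\|^2$. Decomposing around the common mean $\mu = \E[\tilde\x_i]$, and using independence of the $\tilde\x_i$,
\begin{align*}
\E\|\A(\hat\x - \x^*)\|^2 &= \|\A(\mu - \x^*)\|^2 + \tfrac{1}{q}\,\E\|\A(\tilde\x_i - \mu)\|^2.
\end{align*}
The bias bound gives $\|\A(\mu-\x^*)\|^2 \le \epsilon L^*$, while $\E\|\A(\tilde\x_i - \mu)\|^2 \le \E\|\A(\tilde\x_i - \x^*)\|^2 = \E\|\A\tilde\x_i - \b\|^2 - L^* \le L^*$ by the variance bound. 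Hence $\E[L(\hat\x)] \le (1 + \epsilon + 1/q)L^* = (1+O(\epsilon))L^*$. Markov's inequality boosts this to $L(\hat\x) \le (1+\epsilon)^2 L^*$ with probability at least $0.9$ (after rescaling $\epsilon$ by a constant), and taking square roots yields the desired $\ell_2$ guarantee.

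\textbf{Main obstacle.} The substantive work lives entirely inside Theorem~\ref{t:main}; here the only subtle point is that Theorem~\ref{t:main} controls the loss of the \emph{mean} estimator $\A\mu$ rather than the mean of the loss $\E[L(\tilde\x_i)]$, which is only bounded by $2L^*$. The step that makes the argument work is recognizing that averaging $q = \Theta(1/\epsilon)$ independent copies contracts exactly the variance piece $\E\|\A(\tilde\x_i - \mu)\|^2$ by a factor of $q$, so even though each individual $\tilde\x_i$ has $\Theta(L^*)$ error, the low-bias guarantee is precisely what is needed for the averaged estimator to achieve the $(1+\epsilon)L^*$ target. Ensuring that the preconditioner construction in Pass~1 fits in the same $O(d^2\log(nd))$ space budget and does not inflate communication is a routine but necessary check, achieved by using an oblivious sketch so that $\P$ can be generated from shared randomness rather than broadcast.
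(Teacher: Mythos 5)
Your proposal is correct and follows essentially the same route as the paper: a one-pass subspace-embedding sketch followed by QR to get the preconditioner $\P$ (the paper uses the LESS-based Theorem~1.4 of \cite{chenakkod2023optimal} where you use an OSNAP-style sketch, an immaterial difference), then $q=\Theta(1/\epsilon)$ parallel invocations of Theorem~\ref{t:main}, averaging, the bias--variance decomposition $\E\|\A(\hat\x-\x^*)\|^2=\|\A(\mu-\x^*)\|^2+\frac1q\E\|\A(\tilde\x_i-\mu)\|^2$, and Markov's inequality. Your write-up actually spells out the averaging step the paper leaves implicit; just make sure Markov is applied to the nonnegative excess $L(\hat\x)-L^*$ rather than to $L(\hat\x)$ itself.
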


\paragraph{Our Techniques.}
At the core of our analysis are techniques inspired by asymptotic random matrix theory (RMT) in the proportional limit \citep[e.g., see][]{bai2010spectral}. Here, in order to establish the limiting spectral distribution (such as the Marchenko-Pastur law) of a random matrix $\tilde \A^\top\tilde\A$ whose dimensions diverge to infinity, one aims to show the convergence of the Stieltjes transform of its resolvent matrix $(\tilde\A^\top\tilde\A-z\I)^{-1}$. Recently, \cite{derezinski2021sparse} showed that these techniques can be adapted to sparse sketching matrices (via leverage score sparsification) in order to characterize the bias of the sketched inverse covariance $(\tilde\A^\top\tilde\A)^{-1}$, where $\tilde\A=\S\A$. 

Our main contribution is two-fold. First, we show that a similar argument can also be applied to analyze the bias of the least squares estimator, $\tilde\x=(\tilde\A^\top\tilde\A)^{-1}\tilde\A^\top\tilde\b$. Unlike the inverse covariance, this estimator no longer takes the form of a resolvent matrix, but its bias is also associated with the inverse, which means that we can use a leave-one-out argument to characterize the effect of removing a single row of the sketch on the estimation bias. Our second main contribution is to improve the sharpness of the bounds relative to the sparsity of the sketching matrix by combining a careful application of H\"older's inequality with a higher moments analysis of the restricted Bai-Silverstein inequality for quadratic forms. Those improvements are not only applicable to the least squares analysis, but also to all existing RMT-style results for LESS embeddings, including the aforementioned inverse covariance estimation, as well as applications in stochastic optimization, resulting in the sketching cost of LESS embeddings dropping below matrix multiplication time.

\section{Related Work}
\label{s:related}

\paragraph{Randomized numerical linear algebra.}
RandNLA sketching techniques have been developed over a long line of works, starting from fast least squares approximations of \cite{sarlos2006improved}; for an overview, see \cite{woodruff2014sketching,drineas2016randnla,martinsson2020randomized,randlapack_book} among others. Since then, these methods have been used in designing fast algorithms not only for least squares but also many other fundamental problems in numerical linear algebra and optimization including low-rank approximation \citep{cohen2017input,li2020input}, $l_p$ regression \cite{cohen2015lp}, solving linear systems  \cite{gower2015randomized,derezinski2023solving} and more. Using sparse random matrices for matrix sketching also has a long history, including data-oblivious sketching methods such as CountSketch \citep{clarkson2013low}, OSNAP \citep{nelson2013osnap}, and more \citep{mm-sparse}. Leverage score sparsification (LESS) was introduced by \cite{derezinski2021sparse} as a data-dependent sparse sketching method to enable RMT-style analysis for sketching (see below).

\paragraph{Unbiased estimators for least squares.}
To put our results in a proper context, let us consider other approaches for producing near-unbiased estimators for least squares, see also Table \ref{tab:comparison}. First, a well known folklore result states that the least squares estimator computed from a dense Gaussian sketching matrix is unbiased. The bias of other sketching methods, including leverage score sampling and OSNAP, has been studied by \cite{wang2018sketched}, showing that these methods need a $\sqrt\epsilon$-error guarantee to achieve an $\epsilon$-bias which leads to little improvement unless $\epsilon$ is extremely small and the sketch size is sufficiently large. Another approach of constructing unbiased estimators for least squares, first proposed by \cite{unbiased-estimates}, is based on subsampling with a non-i.i.d. importance sampling distribution based on Determinantal Point Processes \citep[DPPs,][]{dpp-ml,dpps-in-randnla}. However, despite significant efforts \citep{k-dpp,alpha-dpp,alv22}, sampling from DPPs remains quite expensive: the fastest known algorithm requires running a Markov chain for $\polylog(n/\epsilon)$ many steps, each of which requires a separate data pass and takes $O(d^\omega)$ time. Other approaches have also been considered which provide partial bias reduction for i.i.d. RandNLA subsampling schemes in various regimes that are are either much more expensive or not directly comparable to ours \citep{agarwal2017second,distributed-newton}.

\paragraph{Statistical and RMT analysis of sketching.}
Recently, there has been significant interest in statistical and random matrix theory (RMT) analysis of matrix sketching. These approaches include both asymptotic analysis via limiting spectral distributions and deterministic equivalents \citep{lopes2019bootstrap,DobLiu18_TR,lacotte2020optimal,lejeune2022asymptotics}, as well as non-asymptotic analysis under statistical assumptions \citep{ma2022asymptotic,GarveshMahoney_JMLR,ridge-leverage-scores}. A number of works have shown that the RMT-style techniques based on deterministic equivalents can be made rigorously non-asymptotic for certain sketching methods such as dense sub-Gaussian \citep{precise-expressions}, LESS matrices \citep{newton-less,derezinski2021sparse}, and other sparse matrices \citep{chenakkod2023optimal}, which has been applied to low-rank approximation, fast subspace embeddings and stochastic optimization, among others. Our new analysis can be viewed as a general strategy for directly improving the sparsity required by LESS embeddings (and thereby, the sketching time complexity) in many of these applications, specifically those that rely on analysis inspired by the calculus of deterministic equivalents via generalized Stieltjes transforms  (see Section \ref{s:conclusions} for an example).

\section{Preliminaries} \label{s:preliminaries}
In this section, we introduce the notation and computational model used in our main results. We also provide some preliminary definitions and lemmas required for proving our main results.

\paragraph{Notations.} In all our results, we use lowercase letters to denote scalars, lowercase boldface for vectors, and uppercase boldface for matrices. The norm $\|\cdot\|$ denotes the spectral norm for matrices and the Euclidean norm for vectors, whereas $\|\cdot\|_{F}$ denotes the Frobenius norm for matrices. We use $\preceq$ to denote the psd ordering of matrices.

\paragraph{Computational model.}
We first clarify the computational model that is used in Theorem \ref{t:two-passes}. This model is essentially an abstraction of the standard distributed averaging framework, which is ubiquitous across ML, statistics, and optimization.        
We consider a central data server storing $(\A,\b)$, and $q$ machines. The $j$th machine has a handle $\mathrm{Stream}(j)$, which can be used to \emph{open} a stream and to \emph{read} the next row/label pair $(\a_i,b_i)$ in the stream. After a full pass, the machine can re-open the handle and begin another pass over the data. The machines can operate their streams entirely asynchronously, and each has its own limited local storage space, e.g., in Theorem \ref{t:two-passes} we use $O(d^2\log(nd))$ bits of space per machine. At the end, they can communicate some information back to the server, e.g., in Theorem \ref{t:two-passes}, they communicate their final estimate vectors $\tilde\x_j$, using $O(d\log(nd))$ bits of communication. Then, the server computes the final estimate, in our case via averaging, $\tilde\x = \frac1q\sum_{i=1}^q\tilde\x_i$, which can be done either directly or via a map-reduce type architecture.

We define the \emph{parallel passes} required by such an algorithm as the maximum number of times the stream is opened by any single machine. We analogously define time/space/communication costs by taking a maximum over the costs required by any single machine (for communication, this refers only to the number of bits sent from the machine back to the server).

\paragraph{Definitions and useful lemmas.} In our framework, we construct a sparse sketching matrix $\S$ where sparsification is achieved using a probability distribution over rows of data matrix $\A$, that is proportional to the leverage scores of $\A$. Since we do not have access to the exact leverage scores for $\A$, and it is computationally prohibitive to compute them, we use approximate leverage scores. The next definition \cite[following, e.g.,][]{chenakkod2023optimal} provides the explicit definition of exact and approximate leverage scores for our setting.

\begin{definition}[$(\beta_1,\beta_2)$-approximate leverage scores]{\label{def:lev_sc}}
Fix a matrix $\A \in \R^{n \times d}$ and consider matrix $\U \in \R^{n \times d}$ with orthonormal columns spanning the column space of $\A$. Then, the leverage scores $l_i, 1\leq i \leq n$ are defined as the row norms squared of $\U$, i.e., $l_i=\|\u_i\|^2$, where $\u_i^\top$ is the $i$th row of $\U$. Furthermore, consider fixed $\beta_1,\beta_2>1$. Then $\tilde{l_i}$ are called $(\beta_1,\beta_2)$-approximate leverage scores for $\A$ if the following holds for all $i$
\begin{align*}
    \frac{l_i}{\beta_1} \leq \tilde l_i \ \ \text{and} \ \  \sum_{i=1}^{n}{\tilde l_i} \leq \beta_2\cdot d.
\end{align*}
\end{definition}
The approximate leverage scores can be computed by first constructing a preconditioner matrix $\P\in\R^{d\times d}$ such that $\kappa(\A\P)=O(1)$, which takes $O(\nnz(\A)+d^\omega)$ in a single pass, and then relying on the following norm approximation scheme.
\begin{lemma}[Based on Lemma 7.2 from \cite{chepurko2022near}]\label{l:lev}
Given $\A\in\R^{n\times d}$ and $\P\in\R^{d\times d}$, using a single pass over $\A$ in  time $O(\gamma^{-1}(\nnz(\A)+d^2))$ for small constant $\gamma>0$, we can compute  estimates $\tilde l_1,...,\tilde l_n$ such that with probability $\geq 0.95$:
\begin{align*}
n^{-\gamma}\|e_i^\top \A\P\|^2\leq \tilde l_i\leq O(\log(n))\|\e_i^\top
\A\P\|^2\quad\forall i\qquad\text{and}\qquad \sum_i\tilde l_i \leq O(1)\cdot \|\A\P\|_F^2.
\end{align*}
\end{lemma}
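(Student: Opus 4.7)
The plan is to use a small Johnson--Lindenstrauss-style right-sketch of $\A\P$ to estimate each row's squared norm. Let $k = \Theta(1/\gamma)$ and draw $\G \in \R^{d\times k}$ with i.i.d.\ $\mathcal{N}(0,1)$ entries. First compute the small matrix $\M = \P\G \in \R^{d\times k}$ offline in $O(d^2/\gamma)$ time, then read the rows of $\A$ in a single pass, computing and emitting
$$\tilde l_i \;=\; \frac{1}{k}\,\|\e_i^\top \A\M\|^2$$
as row $\a_i$ arrives. Each row contributes $O(\nnz(\a_i)\cdot k)$ work to the streaming matrix--vector products, so the overall runtime is $O(\gamma^{-1}(\nnz(\A)+d^2))$, and the working space is dominated by storing $\M$, which is $O(d/\gamma)$ words.

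For correctness, condition on $\a_i$ and note that $\e_i^\top\A\M = (\e_i^\top\A\P)\G$ is a $k$-dimensional Gaussian vector with covariance $\|\e_i^\top\A\P\|^2\,\I_k$. Therefore $\tilde l_i/\|\e_i^\top\A\P\|^2$ is distributed as $\chi^2_k/k$ (and both sides vanish identically when $\e_i^\top\A\P=\zero$). The Laurent--Massart upper tail gives $\Pr[\chi^2_k \geq k + c\log n] \leq n^{-2}$ for a sufficiently large constant $c$, which yields $\tilde l_i \leq O(\log n)\,\|\e_i^\top\A\P\|^2$ simultaneously for all $i$ with probability at least $1-1/n$ by a union bound. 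The lower bound uses the standard small-ball estimate $\Pr[\chi^2_k \leq t] \leq (t\,\ee/k)^{k/2}$, so
$$\Pr\!\big[\tilde l_i < n^{-\gamma}\|\e_i^\top\A\P\|^2\big] \;\leq\; (\ee\cdot n^{-\gamma})^{k/2}.$$
Choosing $k \geq 4/\gamma + O(1)$ makes this $\leq n^{-2}$, and a union bound over $i$ gives the lower bound for all rows simultaneously.

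The trace bound is the easy ingredient: since $\E[\tilde l_i] = \|\e_i^\top\A\P\|^2$ unbiasedly, $\E[\sum_i \tilde l_i] = \|\A\P\|_F^2$, so Markov's inequality gives $\sum_i \tilde l_i \leq 100\cdot\|\A\P\|_F^2$ with probability $\geq 0.99$. Intersecting the three high-probability events and adjusting the hidden constants produces the claimed $0.95$ success probability.

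The main technical care is the simultaneous tuning of the sketch width $k = \Theta(1/\gamma)$: making $k$ smaller makes the $\chi^2_k$ left tail too heavy to survive a union bound over $n$ rows, which would break the $n^{-\gamma}$ lower bound; making $k$ larger blows up both the $O(d^2 k)$ cost of forming $\M$ and the per-row streaming cost beyond the $O(\gamma^{-1}(\nnz(\A)+d^2))$ budget. The polynomial slack $n^{-\gamma}$ in the lower bound is exactly the price for staying in this constant-few-columns regime instead of paying the standard $O(\epsilon^{-2}\log n)$ JL width, and it is precisely the weakest guarantee that suffices when $\tilde l_i$ is later used as a sampling distribution in Theorem~\ref{t:main}.
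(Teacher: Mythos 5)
Your proposal is correct and matches the construction the paper uses: the paper imports this lemma from Lemma 7.2 of \cite{chepurko2022near} without reproving it, but its proof of Theorem \ref{t:main} describes exactly your scheme --- sketch $\P$ with a $d\times O(1/\gamma)$ Gaussian matrix $\G$ and set $\tilde l_i=\|\a_i^\top\P\G\|^2$ --- and your $\chi^2_k$ tail bounds (Laurent--Massart for the $O(\log n)$ upper tail, the small-ball estimate $(t\ee/k)^{k/2}$ for the $n^{-\gamma}$ lower tail, Markov for the Frobenius-norm sum) supply the right quantitative details, including the correct tuning $k=\Theta(1/\gamma)$.
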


\noindent
In the next definition, we give the sparse sketching strategy used in our analysis. This approach is similar to the original leverage score sparsification proposed by \cite{derezinski2021sparse}, except: 1) we adapted it so that it can be implemented effectively in a single pass, and 2) we use it in a much sparser regime (fewer non-zeros per row).

\begin{definition}[$(s,\beta_1,\beta_2)$-LESS embedding]{\label{def:sparsifer}}
Fix a matrix $\A \in \R^{n \times d}$ and some $s\geq 0$. Let the tuple $(\tilde l_1,\cdots,\tilde l_n)$ denote $(\beta_1,\beta_2)$-approximate leverage scores for $\A$. Let $p_i =\min\{1,\frac{s\beta_1\tilde l_i}{d}\}$. We define a $(s,\beta_1,\beta_2)$-approximate leverage score sparsifier $\xib$ as follows.
\begin{align*}
    \xib = \left(\frac{b_1}{\sqrt{p_1}},\cdots,\frac{b_n}{\sqrt{p_n}}\right)
    \quad\text{where}\quad b_i \sim \Bernoulli(p_i).
\end{align*}
Moreover, we define the  $(s,\beta_1,\beta_2)$-leverage score sparsified (LESS) embedding of size $m$ as matrix $\S \in \R^{m \times d}$ with i.i.d. rows $\frac{1}{\sqrt{m}}{\x_i}$ such that 
     $\x_i = \diag(\xib_i)\y_i$
 where $\xib_i$ denotes a randomly generated $(\beta_1,\beta_2)$-approximate leverage score sparsifier and $\y_i \in \R^{n}$ consist of random $\pm 1$ entries.
\end{definition}
\begin{remark}
    Note that the expected number of non-zero entries in $\xib$ is upper bounded by $s\beta_1\beta_2$. The parameter $s$ allows us to control the sparsity of any row in sketching matrix $\S$. To have at most $O(r)$ non-zero entries in any row of $\S$, we can choose $s \approx \frac{r}{\beta_1\beta_2}$.
\end{remark}
We use the notion of an $(\epsilon,\delta)$ unbiased estimator of $\A$  as defined in \cite{derezinski2021sparse}. 
\begin{definition}[$(\epsilon,\delta)$-unbiased estimator]
    For $\epsilon,\delta>0$, a random positive definite matrix $\B \in \R^{d \times d}$ is called an $(\epsilon,\delta)$ unbiased estimator of $\A$ if there exists an event $\mathcal{E}$ with $\Pr(\es) \geq 1-\delta$ such that,
    \begin{align*}
        \frac{1}{1+\epsilon}\A \preceq \E_\es[\B] \preceq (1+\epsilon)\A  \quad \text{and},
        \quad \B\preceq O(1)\cdot \A,
    \end{align*}
    when conditioned on the event $\es$.
\end{definition}

A key property of a sketching matrix is the subspace embedding property, defined below. It was recently shown by \cite{chenakkod2023optimal} that LESS embeddings require only polylogarithmically many non-zeros per row of $\S$ to prove that $\S$ is a subspace embedding for the data matrix $\A$ with the optimal $m=O(d)$ sketching dimension. The following lemma forms one of the structural conditions we use in our analysis. 
\begin{lemma}[Subspace embedding for LESS, Theorem 1.3, \cite{chenakkod2023optimal}]{\label{subspace_embedding}}
    Fix $\eta,\delta>0$. Consider $\beta_1,\beta_2>1$ and a full rank matrix $\A \in \R^{n \times d}$. Then for a  $(\beta_1,\beta_2)$-leverage score sparsified embedding $\S \in\R^{m \times n}$ with $s \geq O(\log^4d/\eta^4)$ and $m = \mathcal{O}(d+\log(1/\delta)/\eta^2)$, we have
    \begin{align}
        \frac{1}{1+\eta}\cdot\A^\top\A \preceq \A^\top\S^\top\S\A \preceq (1+\eta)\cdot\A^\top\A.\label{eq:subspace_embedding}
    \end{align}
\end{lemma}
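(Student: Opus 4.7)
}
The plan is to reduce the subspace embedding property to a $d\times d$ matrix concentration inequality on an $\I_d$--centered sum of independent rank-one terms, and then to invoke matrix Bernstein using a Hanson--Wright-style control of each term conditional on the sparsity pattern. Let $\U\in\R^{n\times d}$ have orthonormal columns spanning the column space of $\A$, and write $\A=\U\R$ for some nonsingular $\R\in\R^{d\times d}$. Then the claimed inequality $\frac1{1+\eta}\A^\top\A\preceq\A^\top\S^\top\S\A\preceq(1+\eta)\A^\top\A$ is equivalent, after conjugating by $\R^{-1}$, to $\|\U^\top\S^\top\S\U-\I_d\|\le\eta$. Writing $\S$ row-wise as $\S=\frac1{\sqrt m}[\x_1,\dots,\x_m]^\top$, we have $\U^\top\S^\top\S\U=\frac1m\sum_{i=1}^m\z_i\z_i^\top$ with $\z_i=\U^\top\x_i$. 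A direct moment computation using $\E[b_{i,j}^2]/p_j=1$ for the Bernoulli entries and independence of the Rademacher signs gives $\E[\z_i\z_i^\top]=\I_d$, so the problem is to show concentration of $\frac1m\sum_i(\z_i\z_i^\top-\I_d)$ in spectral norm.

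Next I would analyze a single row. Conditional on the Bernoulli support $T_i\subseteq[n]$ of $\xib_i$, the vector $\z_i=\sum_{j\in T_i}\frac{y_{i,j}}{\sqrt{p_j}}\U_j$ is a Rademacher-weighted linear combination of rows $\U_j$ of $\U$. A standard Hanson--Wright bound yields $\|\z_i\|^2$ concentrated around $\sum_{j\in T_i}\|\U_j\|^2/p_j$ with subexponential tails governed by $\max_{j\in T_i}\|\U_j\|^2/p_j$. Since $p_j\gtrsim s\,l_j/d$ and the $\tilde l_j$ satisfy the leverage-score approximation of Definition \ref{def:lev_sc}, each ratio $\|\U_j\|^2/p_j=l_j/p_j\lesssim d/s$. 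Taking expectation over the Bernoullis yields $\E\sum_{j\in T_i}l_j/p_j=\sum_j l_j=d$, and a Chernoff bound on the same sum shows it does not exceed $2d$ with failure probability exponentially small in $s$. Combining the two concentration statements gives $\|\z_i\|^2\lesssim d$ and sharp higher moments $\E\|\z_i\|^{2k}\lesssim(Cd)^k$ for all $k\le s/\polylog(d)$, with the leverage-score weighting precisely cancelling the heterogeneous row magnitudes of $\U$.

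With these per-row bounds in hand, I would invoke a matrix Bernstein (or noncommutative Khintchine) inequality for $\frac1m\sum_i(\z_i\z_i^\top-\I_d)$, using $\|\z_i\z_i^\top\|\lesssim d$ on the good event and the variance bound $\bigl\|\sum_i\E[\z_i\z_i^\top\z_i\z_i^\top]\bigr\|\lesssim md$, which both follow from the preceding paragraph and from $\E[\z_i\z_i^\top]=\I_d$. This produces $\|\U^\top\S^\top\S\U-\I_d\|\le\eta$ with probability at least $1-\delta$ provided $m\gtrsim d/\eta^2+\log(1/\delta)/\eta^2$. To get the claimed $m=\OO(d+\log(1/\delta)/\eta^2)$ one cannot afford an $\eta$-dependent $d$ term, so one replaces the crude $\|\z_i\z_i^\top\|\lesssim d$ bound by a truncated estimator, splitting $\z_i\z_i^\top$ into a bounded part and a small tail, and applying the moment version of matrix Bernstein to the former; the tail is absorbed via a union bound using the higher moments of $\|\z_i\|^2$.

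The main obstacle is the simultaneous optimality in sparsity and sketch size: obtaining the sharp $m=O(d+\log(1/\delta)/\eta^2)$ with only $s=O(\polylog(d)/\eta^4)$ expected nonzeros per row requires tracking moments of $\|\z_i\|^2$ of order $\approx\log d$, not just the second moment, and carefully decoupling the Bernoulli and Rademacher randomness so that the Hanson--Wright tails do not propagate the worst-case $\ell_\infty$ bound $d/s$ into the matrix Bernstein parameters. Everything else—the identity computation for $\E[\z_i\z_i^\top]$, the Chernoff bound on $\sum_{j\in T_i}l_j/p_j$, and the final application of matrix Bernstein—is mechanical, and the delicate step is exactly this truncation-plus-higher-moment argument which is the technical heart of \cite{chenakkod2023optimal}.
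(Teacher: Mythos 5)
You should note first that the paper does not prove this lemma at all: it is imported verbatim as Theorem~1.3 of \cite{chenakkod2023optimal}, so there is no internal proof to compare against, and your attempt is effectively a reconstruction of that external result. The first two-thirds of your plan are sound and match the standard opening moves: reducing to $\|\U^\top\S^\top\S\U-\I_d\|\le\eta$ by conjugation, verifying $\E[\z_i\z_i^\top]=\I_d$ from the Bernoulli/Rademacher moments, and using the leverage-score weighting $p_j\gtrsim s\,l_j/d$ together with Hanson--Wright and a Chernoff bound on $\sum_{j\in T_i}l_j/p_j$ to get $\|\z_i\|^2\lesssim d$ with subexponential control. All of this is correct and is essentially how the row-level analysis in \cite{chenakkod2023optimal} begins.

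The genuine gap is in your closing step. Matrix Bernstein with summand bound $R\lesssim d$ and variance $\sigma^2\lesssim md$ yields $m\gtrsim d\log(d/\delta)/\eta^2$, and the extra $\log d$ here is a \emph{dimensional} factor intrinsic to the inequality (it is tight already for sums of independent diagonal rank-one matrices, by a coupon-collector argument), not an artifact of heavy tails of $\|\z_i\|^2$. Consequently, the fix you propose --- truncating $\z_i\z_i^\top$ and applying a moment version of matrix Bernstein to the bounded part --- cannot remove it: after truncation you are still summing $m$ independent $d\times d$ matrices and the noncommutative Khintchine/Bernstein bound reinstates the same $\log d$. What \cite{chenakkod2023optimal} actually does to reach the optimal $m=O(d)$ embedding dimension is qualitatively different: a Gaussian universality comparison for sums of independent random matrices in the style of Brailovskaya--van Handel, combined with a separate argument controlling the smallest singular value. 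So as written your proposal establishes only the weaker $m=O(d\log(d)/\eta^2)$ statement; for the present paper this is not cosmetic, since the claimed $O(d^2\log(nd))$-bit space bound relies on the log-free $m=O(d)$. (Separately, you organize the argument around literally achieving $m=O(d+\log(1/\delta)/\eta^2)$ with the $d$ term independent of $\eta$; that reading of the lemma is almost certainly a typographical artifact of the statement, and you should not expect any concentration argument to deliver it for small $\eta$.)
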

Our main result, Theorem \ref{least_squares_result} analyses the bias of the sketched least squares estimate conditioned on the high probability event guaranteed in Lemma \ref{subspace_embedding}. For our result, it is sufficient to have $\eta = O(1)$ and therefore for $m=O(d)$ we get $\A^\top\S^\top\S\A \approx_{\eta} \A^\top\A$.
Our next structural condition is an upper bound on the high moment of the quadratic form $\x_i^\top\A\C\A^\top\x_i$ where $\frac{1}{\sqrt{m}}{\x_i}$ is the $i^{th}$ row of $\S$ and $\C$ is some fixed matrix that arises in our analysis. Note that $\E[\x_i^\top\A\C\A^\top\x_i] = \tr(\A\C\A^\top)$. An upper bound on the centered moments of $\x_i^\top\A\C\A^\top\x_i$ can be obtained using the Bai-Silverstein inequality \citep{bai2010spectral}, a classical result in random matrix theory mentioned below.
\begin{lemma}[Bai-Silverstein's Inequality, Lemma B.26, \cite{bai2010spectral}]\label{original_bai_silver} Let $\B$ be a  fixed $d \times d$ matrix and $\x$ be a random vector of independent entries. Let $\E[x_i]=0$ and $\E x_i^2=1$,and $\E|x_j|^{l} \leq \nu_l$. Then for any $p \geq 1$
\begin{align*}
    \E|\x^\top\B\x-tr(\B)|^{p} \leq (2p)^{p} \left((\nu_4\tr(\B\B^\top))^{p/2} + \nu_{2p}\tr(\B\B^\top)^{p/2}\right).
\end{align*}
\end{lemma}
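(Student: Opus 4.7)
The plan is the classical decomposition of the centered quadratic form into a diagonal piece (a sum of independent mean-zero random variables) and an off-diagonal piece (a sum of martingale differences), and then apply Rosenthal and Burkholder-type moment inequalities to each. Concretely, write
$$\x^\top\B\x-\tr(\B)\;=\;\underbrace{\sum_{i=1}^d B_{ii}(x_i^2-1)}_{D}\;+\;\underbrace{\sum_{i\ne j} B_{ij}x_ix_j}_{O},$$
so that $\E|\x^\top\B\x-\tr(\B)|^p\le 2^{p-1}(\E|D|^p+\E|O|^p)$ by convexity, which is where the factor of $2^p$ in $(2p)^p$ will come from.

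For the diagonal term $D$, the summands $B_{ii}(x_i^2-1)$ are independent and mean zero, so Rosenthal's inequality gives $\E|D|^p\lesssim p^p\big((\sum_i B_{ii}^2\,\E(x_i^2-1)^2)^{p/2}+\sum_i|B_{ii}|^p\E|x_i^2-1|^p\big)$. Since $\E(x_i^2-1)^2\le \nu_4$, the first piece is bounded by $(\nu_4\sum_i B_{ii}^2)^{p/2}\le(\nu_4\tr(\B\B^\top))^{p/2}$; since $\E|x_i^2-1|^p\le 2^{p-1}(\nu_{2p}+1)\le 2^p\nu_{2p}$ and $\sum_i|B_{ii}|^p\le(\sum_i B_{ii}^2)^{p/2}$ (a consequence of the power-mean inequality applied entry-wise), the second piece is absorbed into $\nu_{2p}\tr(\B\B^\top)^{p/2}$.

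For the off-diagonal term $O$, filter by $\mathcal{F}_k=\sigma(x_1,\ldots,x_k)$ and define the martingale-difference sequence $d_k=x_k\sum_{j<k}(B_{kj}+B_{jk})x_j$, so that $O=\sum_{k=2}^d d_k$. Apply the Burkholder-Rosenthal inequality,
$$\E|O|^p\;\lesssim\;p^p\,\E\Bigl(\sum_k\E[d_k^2\mid\mathcal{F}_{k-1}]\Bigr)^{p/2}+p^p\sum_k\E|d_k|^p.$$
The conditional variance collapses to $\E[d_k^2\mid\mathcal{F}_{k-1}]=\sum_{j,j'<k}(B_{kj}+B_{jk})(B_{kj'}+B_{j'k})x_jx_{j'}$; taking an unconditional expectation and summing over $k$ yields a bound of the form $\nu_4\tr(\B\B^\top)$, and raising to the $p/2$ gives the $(\nu_4\tr(\B\B^\top))^{p/2}$ term. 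The individual-moment contribution $\sum_k\E|d_k|^p$ is controlled by one further application of the same Rosenthal bound to the inner sum $\sum_{j<k}(B_{kj}+B_{jk})x_j$, producing the $\nu_{2p}\tr(\B\B^\top)^{p/2}$ term.

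The main obstacle is purely bookkeeping: one must carry the Rosenthal constant $C_p=O(p)^p$ through both an outer martingale decomposition \emph{and} an inner moment bound on the conditional variances, and verify that the recursion closes with a total constant no worse than $(2p)^p$. This is done in full detail in Appendix B of \cite{bai2010spectral}; the essential reason it works is that each application of Rosenthal/Burkholder contributes $p^p$, while the convexity step and the $|x_i^2-1|^p$ bound each contribute $2^p$, and no other factor of $p$ is introduced. No step uses anything beyond the assumed moment bounds $\nu_4$ and $\nu_{2p}$ and the trace identity $\sum_{ij}B_{ij}^2=\tr(\B\B^\top)$, so the statement holds for arbitrary fixed (not necessarily symmetric) $\B$.
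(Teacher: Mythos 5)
This lemma is not proved in the paper at all --- it is imported verbatim as Lemma B.26 of \cite{bai2010spectral}, so there is no in-paper argument to compare against; the right benchmark is the classical proof in that reference, and your outline is essentially that proof: split into the diagonal sum of independent centered variables plus the off-diagonal martingale, then apply Rosenthal and Burkholder--Rosenthal. The diagonal half of your argument is fine (including the observation that $\nu_{2p}\ge 1$ by Jensen, which lets you absorb the $+1$ from $|x_i^2-1|^p$).

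The one step that is wrong as literally written is your treatment of the conditional quadratic variation. Burkholder--Rosenthal requires a bound on $\E\bigl[\bigl(\sum_k \E[d_k^2\mid\mathcal{F}_{k-1}]\bigr)^{p/2}\bigr]$, and you propose to ``take an unconditional expectation and sum over $k$ \ldots and raise to the $p/2$.'' For $p>2$ that interchange goes the wrong way: Jensen gives $\E[X^{p/2}]\ge(\E X)^{p/2}$, not $\le$. The quantity $\sum_k\E[d_k^2\mid\mathcal{F}_{k-1}]=\sum_k\bigl(\sum_{j<k}(B_{kj}+B_{jk})x_j\bigr)^2$ is itself a random quadratic form $\x^\top\Ct\x$ in the first $d-1$ coordinates, and controlling its $p/2$-th moment is exactly where the proof in \cite{bai2010spectral} invokes the lemma itself at the lower exponent $p/2$, i.e.\ the argument is an induction on $p$ with base case $p\le 2$ (where Cauchy--Schwarz suffices), using $\tr(\Ct\Ct^\top)\le\|\Ct\|\,\tr(\Ct)$-type bounds to close the recursion inside the stated constant $(2p)^p$. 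You do gesture at this in your final paragraph (``verify that the recursion closes''), so you clearly know a recursion is needed, but the explicit step in the body of your off-diagonal argument is not valid and should be replaced by the induction-on-$p$ (or an equivalent decoupling) step; with that repair the proof is the standard one and is correct.
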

The Bai-Silverstein inequality is not very effective for extremely sparse random vectors $\x$, so in our work, we prove a so-called \emph{restricted} Bai-Silverstein's inequality (see Lemma \ref{Bai_Silverstein}), where instead of an arbitrary matrix $\B$, we consider matrix $\A\C\A^\top$ (for some $\C$), so that, instead of a vector $\x$ with moment-bounded entries, we can use a row of the LESS embedding matriz $\S$ for $\A$.
 In our proof, we use Rosenthal's inequality. 
\begin{lemma}[Rosenthal's inequality, Theorem 2.5, \cite{johnson1985best}]{\label{Rosenthal_1}}
Let $1\leq p < \infty$ and $X_1,X_2,\cdots, X_n$ be mean-zero, independent and symmetric random variables with finite $p^{th}$ moments. Then,
\begin{align*}
    \left(\E\left[\sum_{i}{X_i}\right]^{p}\right)^{1/p} \leq \frac{2p}{\sqrt{\log(p)}} \cdot \max\left\{\left(\sum_{i}{\E[X_i^2]}\right)^{1/2}, \left(\sum_{i}{\E[X_i^p]}\right)^{1/p}\right\}.
\end{align*}
\end{lemma}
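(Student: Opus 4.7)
My plan is to prove Rosenthal's inequality via the classical symmetrization-plus-Khintchine route, combined with a Rosenthal-type bound for sums of independent non-negative variables that cleanly separates the ``Gaussian'' contribution (controlled by $\sum_i \E X_i^2$) from the ``Poisson'' contribution (controlled by $\sum_i \E|X_i|^p$).

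First, I would exploit symmetry directly: since each $X_i$ is symmetric, $X_i \stackrel{d}{=} \epsilon_i X_i$ for independent Rademacher signs $\epsilon_i$ independent of $(X_i)$, so that $\E|\sum_i X_i|^p = \E_X\E_\epsilon|\sum_i \epsilon_i X_i|^p$. Conditioning on $(X_i)$, the inner sum is a Rademacher series with (random) coefficients $X_i$, to which the sharp Khintchine--Kahane inequality $\|\sum_i \epsilon_i a_i\|_{L^p(\epsilon)} \le c\sqrt{p}\,\|a\|_2$ applies. Taking the $p$-th root and integrating over $X$ gives
\begin{align*}
\Bigl\|\sum_i X_i\Bigr\|_p &\le c\sqrt{p}\,\Bigl\|\bigl(\sum_i X_i^2\bigr)^{1/2}\Bigr\|_p \;=\; c\sqrt{p}\,\Bigl\|\sum_i X_i^2\Bigr\|_{p/2}^{1/2}.
\end{align*}

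Next, I would apply a Rosenthal-type bound to the non-negative independent sum $\sum_i Y_i$ with $Y_i := X_i^2$:
\begin{align*}
\Bigl\|\sum_i Y_i\Bigr\|_{p/2} &\le \sum_i \E Y_i \;+\; C\cdot\tfrac{p}{\log p}\cdot \bigl\|\max_i Y_i\bigr\|_{p/2},
\end{align*}
in which the $p/\log p$ factor is sharp by Johnson, Schechtman, and Zinn. Combining with the trivial estimate $\|\max_i Y_i\|_{p/2}^{p/2} \le \sum_i \E Y_i^{p/2} = \sum_i \E|X_i|^p$ and plugging back into the previous display yields, after taking a square root,
\begin{align*}
\Bigl\|\sum_i X_i\Bigr\|_p &\lesssim \sqrt{p}\cdot\sqrt{\tfrac{p}{\log p}}\cdot \max\Bigl(\bigl(\sum_i \E X_i^2\bigr)^{1/2},\,\bigl(\sum_i \E|X_i|^p\bigr)^{1/p}\Bigr),
\end{align*}
matching the claimed constant $2p/\sqrt{\log p}$ up to absolute factors.

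The main obstacle is obtaining the sharp $p/\sqrt{\log p}$ factor rather than the cruder $O(p)$ that a naive Hoffmann--J{\o}rgensen decoupling would yield in the non-negative Rosenthal step. Reaching sharpness requires invoking the Johnson--Schechtman--Zinn best-constant inequality for sums of non-negative independent variables (or, equivalently, integrating a Pinelis--Bennett-type exponential tail bound for symmetric sums with an optimal truncation level of order $(\sum_i \E|X_i|^p)^{1/p}$). Since the lemma is cited directly from the 1985 paper of Johnson, Schechtman, and Zinn, one can appeal to their analysis in this step to close the argument, while the symmetrization and Khintchine reductions are standard.
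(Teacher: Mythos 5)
You should first note that the paper itself offers no proof of this lemma: it is imported verbatim from \cite{johnson1985best} (Theorem 2.5 and Corollary 2.6), and Appendix A even records the companion inequality for nonnegative summands with constant $2p/\log p$ alongside the symmetric version with $2p/\sqrt{\log p}$ (Lemma~\ref{Rosenthal}). Your outline --- symmetrize, apply Khintchine conditionally to get $\|\sum_i X_i\|_p \le c\sqrt{p}\,\|\sum_i X_i^2\|_{p/2}^{1/2}$, then control the nonnegative sum $\sum_i X_i^2$ at exponent $p/2$ and bound the resulting $(p/2)$-moment term by $\sum_i \E|X_i|^p$ --- is precisely the standard way the symmetric case (constant of order $p/\sqrt{\log p}$) is deduced from the nonnegative case (constant of order $p/\log p$), so in spirit it is the right argument and consistent with the source the paper cites; as you concede, the literal constant $2p/\sqrt{\log p}$ is not recovered this way, only its order, which is all the paper needs.

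There is, however, one genuine flaw: the intermediate inequality you invoke for the nonnegative sum, $\|\sum_i Y_i\|_{p/2} \le \sum_i \E Y_i + C\,\tfrac{p}{\log p}\,\|\max_i Y_i\|_{p/2}$, is false as stated, because the $p/\log p$ factor cannot be confined to the maximum term. Take $q=p/2$ and let $Y_1,\dots,Y_n$ be i.i.d.\ Bernoulli with success probability $\lambda/n$, where $\lambda = q/\log q$ and $n$ is large: then $\sum_i Y_i$ is approximately Poisson$(\lambda)$, whose $L^q$ norm is of order $q/\log\log q$, while $\sum_i \E Y_i + C\,\tfrac{q}{\log q}\,\|\max_i Y_i\|_q \le (1+C)\,q/\log q$, so your inequality fails for large $q$. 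The correct statement (JSZ Theorem 2.5, quoted as the first display of Lemma~\ref{Rosenthal}) puts the factor $2q/\log q$ in front of $\max\{\sum_i \E Y_i,\ (\sum_i \E Y_i^q)^{1/q}\}$, i.e.\ on both terms. Fortunately this weaker, correct form still closes your argument: with $Y_i = X_i^2$ it gives $\|\sum_i X_i^2\|_{p/2} \le \tfrac{p}{\log(p/2)}\big[\sum_i \E X_i^2 + (\sum_i \E|X_i|^p)^{2/p}\big]$, and after the square root and the Khintchine factor $c\sqrt{p}$ one obtains $O\big(p/\sqrt{\log p}\big)\cdot\max\big\{(\sum_i \E X_i^2)^{1/2},\,(\sum_i \E|X_i|^p)^{1/p}\big\}$; the variance term merely picks up a harmless extra $\sqrt{p/\log p}$. (Also note the square-function step needs $p\ge 2$; the range $1\le p<2$ is trivial by Jensen.) So the architecture is sound, but the step as written would not survive refereeing without replacing it by the JSZ form of the nonnegative inequality.
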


\section{Least squares bias analysis}
\label{s:analysis}
In this section we provide an outline of the bias analysis for the sketched least squares estimator constructed using a LESS embedding, leading to the proofs of our main results, Theorems \ref{t:main}~and~\ref{t:two-passes}.
In particular, we prove the following main technical result.
\begin{theorem}[Bias of LESS-sketched least squares]\label{least_squares_result}
   Fix $\A \in\R^{n \times d}$ and let $\S$ be an $(s,\beta_1,\beta_2)$-LESS embedding of size $m$ for $\A$. Let $\S$ satisfy \eqref{eq:subspace_embedding} with $\eta =\frac{1}{2}$ and probability $1-\delta$ where $\delta< \frac{1}{m^4}$. Then there exists an event $\es$ with probability at least $1-\delta$ such that
    \begin{align*}
        L(\E_\es[\tilde\x]) -L(\x^*) = \mathcal{O} \left(\frac{d}{m^2}\left(1 + \frac{d}{s}\right)\log^9(n/\delta)\right)\cdot L(\x^*).
    \end{align*}   
\end{theorem}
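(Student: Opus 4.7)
The plan is to bound $L(\E_\Ec[\tilde\x]) - L(\x^*) = \|\A\E_\Ec[\tilde\x - \x^*]\|^2$ via the normal equations $\A^\top\r = 0$ for $\r := \b - \A\x^*$, combined with a leave-one-out decomposition of the sketched estimator. Writing $\M := \A^\top\S^\top\S\A$ and conditioning on the subspace embedding event $\Ec$ of Lemma \ref{subspace_embedding} (which also makes each leave-one-out matrix $\M_{-i}$ well-conditioned), Sherman--Morrison gives
\begin{align*}
\tilde\x - \x^* = \M^{-1}\A^\top\S^\top\S\r = \tfrac{1}{m}\sum_{i=1}^m \frac{\M_{-i}^{-1}\A^\top\x_i\x_i^\top\r}{1+z_i},
\end{align*}
with $\M_{-i} := \M - \tfrac{1}{m}\A^\top\x_i\x_i^\top\A$ and $z_i := \tfrac{1}{m}\x_i^\top\A\M_{-i}^{-1}\A^\top\x_i$. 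Inserting the deterministic equivalent $\gamma_i := \tfrac{1}{m}\tr(\A\M_{-i}^{-1}\A^\top)$ via $\tfrac{1}{1+z_i} = \tfrac{1}{1+\gamma_i} + \tfrac{\gamma_i - z_i}{(1+\gamma_i)(1+z_i)}$ splits $\tilde\x - \x^* = T_1 + T_2$. Since $(\M_{-i},\gamma_i)$ is independent of $\x_i$, $\E[\x_i\x_i^\top] = \I$, and $\A^\top\r = 0$, one has $\E[T_1] = 0$ unconditionally, so $\E_\Ec[T_1]$ is negligible via Cauchy--Schwarz against $\sqrt{\Pr(\Ec^c)} < m^{-2}$ and a crude second-moment bound on $T_1$.

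The remaining work is to bound $\|\A\E_\Ec[T_2]\|$ by testing against a vector $\w$ with $\|\A\w\| = 1$, which reduces the task to controlling
\begin{align*}
\tfrac{1}{m}\sum_i \E_\Ec\!\left[\frac{(\gamma_i - z_i)\, q_i\, (\x_i^\top\r)}{(1+\gamma_i)(1+z_i)}\right], \qquad q_i := (\A\w)^\top\A\M_{-i}^{-1}\A^\top\x_i.
\end{align*}
Each of the three scalar factors $\gamma_i - z_i$, $q_i$, and $\x_i^\top\r$ has zero mean conditional on $\M_{-i}$, but their joint third moment does not vanish. I would apply H\"older's inequality with exponents tuned near $p = \Theta(\log(n/\delta))$ and estimate the three $L^p$ norms separately. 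The restricted Bai--Silverstein inequality (Lemma \ref{Bai_Silverstein}) applied to the rank-$d$ matrix $\A\M_{-i}^{-1}\A^\top$, whose trace and squared Frobenius norm are both $\Theta(d)$ on $\Ec$, delivers $(\E|z_i - \gamma_i|^p)^{1/p} \lesssim \tfrac{\sqrt d}{m}\sqrt{1 + d/s}\cdot\polylog(n/\delta)$; the sparsity factor $\sqrt{1+d/s}$ traces back to the LESS moments $\nu_l = \E|\xi_{i,j}|^l \lesssim p_j^{1-l/2}$ with $p_j \gtrsim s\tilde l_j/d$. Rosenthal's inequality (Lemma \ref{Rosenthal_1}) bounds $(\E|q_i|^p)^{1/p} \lesssim \polylog(n/\delta)$, since $\|\A\M_{-i}^{-1}\A^\top\A\w\| = O(1)$ on $\Ec$, and $(\E|\x_i^\top\r|^p)^{1/p} \lesssim \polylog(n/\delta)\cdot\|\r\|$. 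Summing the $m$ per-$i$ contributions, squaring, and taking a supremum over $\w$ yields $\tfrac{d}{m^2}(1 + d/s)\log^9(n/\delta)\cdot\|\r\|^2 = \tfrac{d}{m^2}(1+d/s)\log^9(n/\delta)\cdot L(\x^*)$, as claimed.

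The hardest step is the restricted Bai--Silverstein inequality at high moment $p = \Theta(\log(n/\delta))$ with a sharp $(1 + d/s)$ sparsity dependence. The classical Bai--Silverstein estimate (Lemma \ref{original_bai_silver}) features a $\nu_{2p}$ term that, for LESS rows with Bernoulli probabilities $p_j \ll 1$, grows like $p_j^{1-p}$, i.e., exponentially in $p$. Avoiding this blow-up requires splitting $z_i - \gamma_i$ into a diagonal contribution $\sum_j(\xi_{i,j}^2 - 1) B_{jj}$ and an off-diagonal contribution $\sum_{j \ne k}\xi_{i,j}\xi_{i,k} B_{jk}$ (with $B := \tfrac{1}{m}\A\M_{-i}^{-1}\A^\top$), applying Rosenthal separately to each under the rank-$d$ constraint on $B$, and allocating the sparsity cost across the three H\"older factors so that the $\nu_{2p}$ overhead is absorbed only once---into the final $\sqrt{1 + d/s}$---rather than compounded across the triple product.
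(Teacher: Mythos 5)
Your overall architecture is the same as the paper's: reduce to $\|\A\,\E_\es[\tilde\x-\x^*]\|^2$ using $\A^\top\r=0$, apply a leave-one-out decomposition via Sherman--Morrison, kill the first (unconditionally mean-zero) term by charging it to $\Pr(\neg\es)$, and control the correction term by H\"older plus a restricted Bai--Silverstein inequality. One genuinely nice variation: you center $z_i$ at the random trace $\frac1m\tr(\A\M_{-i}^{-1}\A^\top)$ rather than at the deterministic $\gamma=\frac{m}{m-d}$ as the paper does; since your $\gamma_i$ is measurable with respect to $\S_{-i}$, the leading term still has zero unconditional mean, and you would not need the paper's separate martingale/Azuma concentration of $\tr(\Q_{-i})$ around its expectation (Lemma \ref{azuma_conc_1}). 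Your remarks on proving the restricted Bai--Silverstein bound (splitting the diagonal Bernoulli part from the off-diagonal Rademacher part, Rosenthal plus a Matrix Chernoff control of $\|\U_{\xib}^\top\U_{\xib}\|$) also match the paper's Appendix \ref{p_BS}.

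However, there is a genuine gap in your H\"older step. You propose to bound all three factors $\gamma_i-z_i$, $q_i$, and $\x_i^\top\r$ in $L^p$ with $p=\Theta(\log(n/\delta))$ and claim $(\E|\x_i^\top\r|^p)^{1/p}\lesssim\polylog(n/\delta)\cdot\|\r\|$ and $(\E|q_i|^p)^{1/p}\lesssim\polylog(n/\delta)$. The first claim is false in general: Rosenthal's $p$-th-moment term for $\x_i^\top\r=\sum_j\xi_j y_j r_j$ is $\big(\sum_j p_j^{1-p/2}|r_j|^p\big)^{1/p}$, and since $\r$ is orthogonal to the column space of $\A$ it carries no leverage-score structure --- if $\r$ concentrates on a coordinate with $p_j=n^{-\Theta(1)}$, this term is $\approx p_j^{-1/2}\|\r\|=\poly(n)\|\r\|$ for $p=\Theta(\log n)$. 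The second claim, while salvageable because the coordinates of $\A\M_{-i}^{-1}\A^\top\A\w$ are bounded by leverage scores, still costs an extra factor $\sqrt{1+d/s}$ at high moments; multiplied into the triple product and squared, this would give $(1+d/s)^2$ rather than the claimed $(1+d/s)$ --- this is exactly the ``repeated Bai--Silverstein'' loss that the theorem is designed to avoid. The correct allocation (and the paper's key maneuver) is asymmetric: take the full $p=O(\log n)$ moment only of the concentrated scalar $\gamma_i-z_i$, and pair it with $q=1+\frac{1}{O(\log n)}$ on the remaining product, so that after Cauchy--Schwarz the factors $q_i$ and $\x_i^\top\r$ are raised only to power $2q=2+\frac{1}{O(\log n)}$; their near-second moments are then $O(1)$ and $O(\|\r\|)$ respectively, using only isotropy $\E[\x_i\x_i^\top]=\I$ together with the almost-sure bound $\|\x_i\|^{1/O(\log n)}=O(1)$, with no further appeal to Bai--Silverstein and no extra sparsity factor.
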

\begin{proof}
For a detailed proof refer to Appendix \ref{p_LS}.
\end{proof}
\begin{remark}
Thus, the bias $L(\E_\es[\tilde\x]) -L(\x^*)$ of the LESS estimator using $O(\beta_1\beta_2 s)=\tilde O(s)$ non-zeros per row of $\S$ is of the order $\tilde O(\frac{d^2}{sm^2} + \frac{d}{m^2})\cdot L(\x^*)$. By comparison, the standard expected loss bound which holds for sketched least squares (including this estimator) is $\E[L(\tilde\x)]-L(\x^*)\leq \tilde O(\frac dm) L(\x^*)$, and the best known bound on the bias of most standard sketched estimators (e.g., leverage score sampling) is $\tilde O(\frac {d^2}{m^2})L(\x^*)$, given by \cite{wang2018sketched}. So, our result recovers the standard bias bound for $s=1$ and improves on it for $s\gg 1$ by a factor of $\min\{s, d\}$. At the end of the section, we discuss how to deal with the lower order term $\tilde O(\frac{d}{m^2})$ to reduce the bias further.
\end{remark}
Next, we provide a brief sketch of our proof.

 Using a standard argument, without loss of generality we can replace the matrix $\A$ with the matrix $\U \in \R^{n \times d}$ consisting of orthonormal columns spanning the column space of $\A$, and assume that $n=\poly(d)$. Let $\S$ be an $(s,\beta_1,\beta_2)$-LESS embedding for $\U$. Also, let $\b \in \R^n$ be a vector of responses/labels corresponding to $n$ rows in $\U$. Let $\tilde\x = \argmin_{\x}\|\S\U\x-\S\b\|^2$. Furthermore for any $\x \in \R^d$ we can find the loss at $\x$ as $L(\x) = \|\U\x-\b\|^2$. Additionally, we use $\r$ to denote the residual $\b-\U\x^*$. We also define $\Q= (\gamma\U^\top\S^\top\S\U)^{-1}$ as the sketched inverse covariance matrix  with scaling $\gamma = \frac{m}{m-d}$ representing the standard correction accounting for inversion bias. We aim to quantify the bias of a least squares estimator as measured via the loss function, i.e. $L(\E(\tilde\x))-L(\x^*)$. We condition on the high probability event $\es$ guaranteed in Lemma \ref{subspace_embedding} and consider $L(\E_\es(\tilde\x))-L(\x^*)$. By Pythagorean theorem, we have $ L(\E_\es[\tilde\x])-L(\x^*) = \nsq{\U(\E_\es[\tilde\x])-\U\x^*}$. Note that by the normal equations we have $\tilde\x = (\U^\top\S^\top\S\U)^{-1}\U^\top\S^\top\S\b = \gamma\Q\U^\top\S^\top\S\b$, and also $\S^\top\S = \frac{1}{m}\sum_{i=1}^{m}{\x_i\x_i^\top}$. These two facts lead to writing the bias as follows:
\begin{align*}
   L(\E_\es[\tilde\x])-L(\x^*) = \norm{\gamma\cdot \E_\es[\Q\U^\top\x_i\x_i^\top\r]}^2.
\end{align*}
Using a leave-one-out technique similar to that presented by \cite{derezinski2021sparse}, we replace $\Q$ with $\Qm=(\gamma\U^\top\S_{-i}^\top\S_{-i}\U)^{-1}$, where $\S_{-i}$ denotes matrix $\S$ without the $i$th row, by noting that $\Q =(\gamma\U^\top\S_{-i}^\top\S_{-i}\U + \frac{\gamma}{m}\U^\top\x_i\x_i^\top\U)^{-1}$ and applying the  Sherman-Morrison formula. This leads to the following relation:
\begin{align*}
     L(\E_\es[\tilde\x])-L(\x^*) \leq 2\underbrace{\nsq{\E_\es\left[\Qm\U^\top\x_i\x_i^\top\r\right]}}_{\nsq{\Z_0\r}} + 2 \underbrace{\nsq{\E_\es\left[\left(\frac{\gamma}{\gamma_i}-1\right)\Qm\U^\top\x_i\x_i^\top\r\right]}}_{\nsq{\Z_2\r}}
\end{align*}
where $\gamma_i = 1+\frac{\gamma}{m}\x_i^\top\U\Qm\U^\top\x_i$. Due to the subspace embedding assumption and assuming $m$ large enough, we have $\norm{\Q} \preceq O(1)\cdot\I$ and also $\norm{\Qm} \preceq O(1)\cdot\I$.
We independently upper bound the terms $\|\Z_0\r\|^2$ and $\|\Z_2\r\|^2$. The first term $\|\Z_0\r\|^2$ is quite straightforward to bound since, if not for the conditioning on the high probability event $\es$, we would have $\E[\Q_{-i}\U^\top\x_i\x_i^\top\r] = \E[\Q_{-i}\U^\top\r]=\zero$, which follows from $\U^\top(\b-\U\x^*)=\zero$. Thus, we only have to account for the contribution of the event $\neg\es$. We get an upper bound on $\|\Z_0\r\|^2$ as $O\left(\frac{d^2\log(d/\delta)}{sm^2}+\frac{d}{m^2}\right)\cdot\|\r\|^2$, which is sufficient for us, although we specify that this upper bound could be improved even further since it is proportional to $\Pr(\neg\es)$ (by noting that $\norm{\Qm} \approx \mathcal{O}(1)$ and $\U^\top\r=0)$.

The central novelty of our analysis lies in bounding $\nsq{\Z_2\r}$ for $(s,\beta_1,\beta_2)$-LESS embeddings, which is the dominant term. A similar term arose in the inversion bias analysis of \cite{derezinski2021sparse}, which resulted in a sub-optimal dependence of their result on the sparsity of the sketching matrix $\S$. Our key observation is that, when examining a random variable of the form $\x_i^\top\v$ for some vector $\v$, the dependence on the sparsity of row $\x_i$ only arises when considering moments higher than $2 + \frac1{O(\log(n))}$, because otherwise we can simply rely on the fact that $\E[\x_i\x_i^\top]=\I$. Thus, when decomposing $\nsq{\Z_2\r}$, we must carefully separate the contribution of near-second moments vs the contribution of higher moments to the overall bound.

To obtain this separation, we start by applying H\"older's inequality on $\|\Z_2\r\|$ with $p=O(\log(n))$ and $q = 1+ \frac{1}{O(\log(n))}$ to get
\begin{align*}
    \|\Z_2\r\| &\leq \left(\E_\es\big|\frac{\gamma}{\gamma_i}-1\big|^p\right)^{1/p}\cdot\left(\sup_{\|\v\|=1}\E_\es\left[\v^\top\Qm\U^\top\x_i\x_i^\top\r\right]^q\right)^{1/q}.
\end{align*}
Furthermore applying Cauchy-Schwarz inequality on the second term leads to
\begin{align*}
    \|\Z_2\r\| &\leq  \left(\E_\es\big|\frac{\gamma}{\gamma_i}-1\big|^p\right)^{1/p}\cdot\left(\E_\es\norm{\Qm\U^\top\x_i}^{2q}\right)^{1/2q}\cdot\left(\E_\es\norm{\x_i^\top\r}^{2q}\right)^{1/2q}.
\end{align*}
The intuition behind choosing these values for $p$ and $q$ is due to observing that $\norm{\x_i}$ is at most $\poly(n)$ almost surely and therefore  $\norm{\x_i}^{1/O(\log(n))} = \mathcal{O}(1)$.

\noindent This leads us to show that $\left(\E_\es\norm{\x_i^\top\r}^{2q}\right)^{1/2q} = \mathcal{O}(1)\norm{\r}$ and $\left(\E_\es\norm{\Qm\U^\top\x_i}^{2q}\right)^{1/2q} = \mathcal{O}(1)$. In the inversion bias analysis of \cite{derezinski2021sparse}, the authors use $p=q=2$ resulting in a higher moment on the term $\norm{\Qm\U^\top\x_i}^{2q}$ and rely on usage of Bai-Silverstein to control $\left(\E_\es\norm{\Qm\U^\top\x_i}^{2q}\right)$. Note that we already have to use Bai-Silverstein to upper bound $\E_\es\big|\frac{\gamma}{\gamma_i}-1\big|^p$. This repeated usage of Bai-Silverstein leads to an extra multiplicative factor of $\frac {d}{s}$ and therefore a sub-optimal bound on the bias, which prompted the prior works to consider $s=\Omega(d)$ non-zeros per row in LESS embeddings. On the other hand, in our work, we exploit the fact that $\norm{\x_i}^{1/O(\log(n))}=O(1)$ and get a constant upper bound on $\left(\E_\es\norm{\Qm\U^\top\x_i}^{2q}\right)$. However, this results in a much more careful argument, requiring now an upper bound on $\left(\E_\es\big|\frac{\gamma}{\gamma_i}-1\big|^p\right)^{1/p}$ for $p =O(\log(n))$. First, we observe that
\begin{align}
     \left(\E_\es\big|\frac{\gamma}{\gamma_i}-1\big|^p\right)^{1/p}\leq |\gamma-\bar{\gamma}| + \left(\E_\es\left[(\gamma_i - \bar{\gamma})^p\right]\right)^{1/p} \label{eq_1}
\end{align}
where $\bar{\gamma} = 1 + \frac{\gamma}{m} \E_\es\left(\x_i^\top\U\Qm\U^\top\x_i\right)$. In particular, for the second term, we have
\begin{align}
    \left(\E_\es\left[(\gamma_i - \bar{\gamma})^p\right]\right)^{1/p} \leq \left(\frac{\gamma}{m}\right)\cdot\left[\left(\E_\es \left[\left(\tr(\Qm) - \x_i^\top\U\Qm\U^\top\x_i\right)^p\right]\right)^{1/p} + \left(\E_\es\left[\tr(\Qm) - \E_\es\tr(\Qm)\right]^p\right)^{1/p}\right]. \label{eq_2}
\end{align}
To bound the first of these two terms, we prove a new version of the restricted Bai-Silverstein inequality (Lemma \ref{Bai_Silverstein}) for $(s,\beta_1,\beta_2)$-LESS embeddings. Unlike \cite{derezinski2021sparse}, we provide a proof with any $p$ and any $(\beta_1,\beta_2)$ values. Furthermore, utilizing the subspace embedding guarantee from Lemma \ref{subspace_embedding}, we prove a much more general result where the number of non-zeros in the approximate leverage score sparsifier $\xib$ can be much smaller than $d$.
\begin{lemma}[Restricted Bai-Silverstein for $(s,\beta_1,\beta_2)$-LESS embeddings]\label{Bai_Silverstein}
Let $p\in \mathbb{N}$ be fixed and $\U \in \R^{n \times d}$ be such that $\U^\top\U=\I$. Let $\x_i = \diag(\xib)\y_i$ where $\y_i \in \R^n$ has independent $\pm 1$ entries and $\xib$ is an $(s,\beta_1,\beta_2)$-approximate leverage score sparsifier for $\U$. Then for any matrix with $0\preceq \C \preceq \mathcal{O}(1)\cdot\I$ and any $\delta>0$ we have
   \begin{align*}
     \left(\E\left[\tr(\C) - \x_i^\top\U\C\U^\top\x_i\right]^p\right)^{1/p} < c\cdot \sqrt{d}p^3\cdot\left(1+\sqrt{\frac{dp\log(d/\delta)}{s}}\right)
\end{align*}
 for an absolute constant $c>0$. 
\end{lemma}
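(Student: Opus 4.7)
My plan is to split the centered quadratic form into a Bernoulli-diagonal part and an off-diagonal Rademacher chaos, and bound each in $L^p$ separately using Rosenthal's inequality (Lemma~\ref{Rosenthal_1}) and a Hanson--Wright bound for Rademachers. Writing $\M := \U\C\U^\top$ and using $y_{i,j}^2 = 1$, I decompose
\[
\x_i^\top\M\x_i - \tr(\C) \;=\; \underbrace{\sum_j\Big(\tfrac{b_j}{p_j}-1\Big)\M_{jj}}_{=:A} \;+\; \underbrace{\sum_{j\ne k}\tfrac{b_jb_k}{\sqrt{p_jp_k}}\,y_{i,j}y_{i,k}\M_{jk}}_{=:B},
\]
using $\sum_j\M_{jj}=\tr(\M)=\tr(\C)$. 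The two bookkeeping facts that drive the entire argument are the leverage-score bounds $\M_{jj}=\u_j^\top\C\u_j\le O(1)\,l_j$ (from $\|\C\|=O(1)$) and $1/p_j\le\max\{1,d/(sl_j)\}$ (from $\tilde l_j\ge l_j/\beta_1$).

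The diagonal term $A$ is a sum of $n$ independent mean-zero summands, so Rosenthal's inequality applies directly. Using the two bookkeeping facts, the variance computes to $\sum_j(1/p_j)\M_{jj}^2\lesssim d+d^2/s = d(1+d/s)$, and the $p$-th absolute-moment sum yields a comparable estimate; the resulting contribution $\|A\|_{L^p}\lesssim p\sqrt{d}(1+\sqrt{d/s})$ is absorbed into the target bound.

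For $B$, I condition on the Bernoulli mask $\b=(b_1,\ldots,b_n)$: given $\b$, $B=\y_i^\top\widetilde\M(\b)\y_i$ is a Rademacher chaos of order two with zero-diagonal kernel $\widetilde\M(\b)_{jk}=(b_jb_k/\sqrt{p_jp_k})\M_{jk}$, and Hanson--Wright yields $(\E_{\y_i}|B|^p\mid\b)^{1/p}\lesssim \sqrt p\,\|\widetilde\M(\b)\|_F + p\,\|\widetilde\M(\b)\|$. The Frobenius term has second moment $\E\|\widetilde\M(\b)\|_F^2\le\|\M\|_F^2=\tr(\C^2)=O(d)$, and its higher moments follow from a second Rosenthal application to the independent summands $(b_jb_k/p_jp_k)\M_{jk}^2$, which picks up the $\sqrt{d/s}$ sparsity correction via the leverage-weighted variances.

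The main obstacle is the operator-norm moment $\E\|\widetilde\M(\b)\|^p$, and this is where the $\log(d/\delta)$ factor enters. Write $\widetilde\M(\b)=D_b\M D_b - D_b\diag(\M)D_b$ with $D_b=\diag(b_j/\sqrt{p_j})$; then $\|D_b\M D_b\|\le O(1)\,\|\U^\top D_b^2\U\|$, and $\U^\top D_b^2\U-\I=\sum_j(b_j/p_j-1)\,\u_j\u_j^\top$ is a sum of independent mean-zero rank-one matrices. The leverage-score bounds give a uniform per-term bound $\|(1/p_j)\u_j\u_j^\top\|=l_j/p_j\le 1+d/s$ and (using $\U^\top\U=\I$) a matrix-Bernstein variance proxy of the same order, yielding a tail bound $\|\U^\top D_b^2\U-\I\|\lesssim \sqrt{(1+d/s)\log(d/\delta)}+(1+d/s)\log(d/\delta)$ with probability $\ge 1-\delta$; combining this with a deterministic worst-case bound $\|\x_i\|^2\le\poly(n)$ on the complementary event and optimizing $\delta$ controls $\E\|\widetilde\M(\b)\|^p$ at the stated rate. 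The hard part will be precisely this last step: carefully managing the interaction between the desired $L^p$ moment and the matrix-Bernstein parameter $\log(d/\delta)$ in the sparse regime $s<\log^4d$, where the subspace-embedding guarantee of Lemma~\ref{subspace_embedding} fails so that $\|\U^\top D_b^2\U\|$ is not $O(1)$; this truncation-and-combine step is what pushes the final prefactor up to $p^3$. Collecting the bounds on $\|A\|_{L^p}$ and $\|B\|_{L^p}$ and absorbing constants then yields the claimed estimate.
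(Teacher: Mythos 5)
Your plan follows essentially the same route as the paper's proof in Appendix~\ref{p_BS}: your split into the Bernoulli-diagonal term $A$ and the conditional Rademacher chaos $B$ is exactly the paper's $T_1+T_2$ decomposition of $\tr(\C)-\x_i^\top\U\C\U^\top\x_i$, with Rosenthal handling the diagonal part, a conditional quadratic-form moment inequality (the paper uses Bai--Silverstein where you use Hanson--Wright) handling the chaos, a matrix Chernoff/Bernstein bound on $\norm{\U_{\xib}^\top\U_{\xib}}$ supplying the $\log(d/\delta)$ factor, and a crude moment bound on the complementary low-probability event. One small repair is needed: the summands $(b_jb_k/p_jp_k)\M_{jk}^2$ over pairs $j\ne k$ are \emph{not} independent (pairs sharing an index share a Bernoulli variable), so Rosenthal does not apply to them directly; the paper instead bounds $\|\widetilde{\M}(\b)\|_F^2=\tr(\U_{\xib}\C\U_{\xib}^\top\U_{\xib}\C\U_{\xib}^\top)\le\norm{\U_{\xib}^\top\U_{\xib}}\cdot\tr(\U_{\xib}\C^2\U_{\xib}^\top)$ and applies Rosenthal to the genuinely independent sum $\tr(\U_{\xib}\C^2\U_{\xib}^\top)=\sum_j(b_j/p_j)\,\u_j^\top\C^2\u_j$.
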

\begin{proof}
For detailed proof refer to Appendix \ref{p_BS}.
\end{proof}
Our proof of Lemma \ref{Bai_Silverstein} uses a classical inequality due to Rosenthal (Lemma \ref{Rosenthal_1}) along with an intermediate step relying on the following result proven using the Matrix Chernoff bound.

\begin{lemma}[Spectral norm bound with leverage score sparsifier]\label{chernoff_bound} Let $\U \in \R^{n \times d}$ be such that $\U^\top\U=\I$. Let $\xib$ be an $(s,\beta_1,\beta_2)$-approximate leverage score sparsifier for $\U$, and denote $\U_{\xib} = \diag(\xib)\U$. Then for any $\delta>0$ we have,
 \begin{align*}
        &\Pr\left(\norm{\U_{\xib}^\top\U_{\xib}} \geq \left(1+\frac{3 d\log(d/\delta)}{s}\right)\right) \leq \delta \ \ \ \ \text{if} \ s < d,\\
        &\Pr\left(\norm{\U_{\xib}^\top\U_{\xib}} \geq \left(1+3\log(d/\delta)\right)\right) \leq \delta  \ \ \ \ \ \ \ \ \ \ \text{if} \ s \geq d.
    \end{align*}
\end{lemma}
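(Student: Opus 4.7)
The plan is to rewrite the random matrix $\U_\xib^\top\U_\xib$ as a sum of independent, rank-one PSD matrices, then apply a standard matrix Chernoff bound (e.g., Tropp's bound). Concretely, since $\xib_i = b_i/\sqrt{p_i}$ with $b_i\sim\Bernoulli(p_i)$ and $b_i^2=b_i$, we have
\begin{align*}
    \U_\xib^\top\U_\xib \;=\; \sum_{i=1}^n \Y_i, \qquad \Y_i \;\defeq\; \frac{b_i}{p_i}\,\u_i\u_i^\top,
\end{align*}
where $\u_i^\top$ is the $i$-th row of $\U$. The $\Y_i$ are independent and PSD, with $\E[\Y_i] = \u_i\u_i^\top$, so $\sum_i \E[\Y_i] = \U^\top\U = \I$, giving $\mu_{\max} \defeq \lambda_{\max}(\sum_i\E[\Y_i]) = 1$.

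Next I would bound $R \defeq \max_i\|\Y_i\|$ using the definition of the approximate leverage scores. When $p_i=1$, $\|\Y_i\| \leq \|\u_i\|^2 = l_i \leq 1$. When $p_i = s\beta_1\tilde l_i/d < 1$, using $l_i \leq \beta_1\tilde l_i$ from Definition~\ref{def:lev_sc} we get
\begin{align*}
    \|\Y_i\| \;\leq\; \frac{\|\u_i\|^2}{p_i} \;\leq\; \frac{\beta_1 \tilde l_i \cdot d}{s\beta_1 \tilde l_i} \;=\; \frac{d}{s}.
\end{align*}
So $R = 1$ when $s\geq d$, and $R = d/s$ when $s < d$. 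This is the only place the leverage-score structure is used; the rest is a black-box application of matrix Chernoff.

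The final step is to plug into Tropp's matrix Chernoff inequality: for independent PSD $\Y_i$ with $\|\Y_i\|\leq R$ and $\lambda_{\max}(\sum\E[\Y_i])\leq \mu_{\max}$,
\begin{align*}
    \Pr\!\Bigl(\lambda_{\max}\bigl({\textstyle\sum_i\Y_i}\bigr) \geq (1+t)\mu_{\max}\Bigr) \;\leq\; d\cdot\exp\!\Bigl(-\tfrac{g(t)\mu_{\max}}{R}\Bigr), \qquad g(t) = (1+t)\log(1+t) - t.
\end{align*}
For $s\geq d$, take $t = 3\log(d/\delta)$ and check $g(t)\geq \log(d/\delta)$ using the elementary bound $g(t)\geq t/3$ once $t$ is at least a small constant (small-$t$ regimes are handled by monotonicity / direct numerical check). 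For $s < d$, take $t = 3d\log(d/\delta)/s$ so that $t/R = 3\log(d/\delta)$, and obtain $d\exp(-g(t)/R) \leq d\exp(-(t/R)/3 \cdot \log(d/\delta)) \leq \delta$ by the same inequality. Both bounds yield failure probability at most $\delta$, matching the claimed inequalities.

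The proof has essentially no obstacle: the calculation splits cleanly into the two sparsity regimes $s<d$ and $s\geq d$, and the only nontrivial computation is verifying the constants in the Chernoff tail. I do not anticipate needing any tools beyond Definition~\ref{def:lev_sc} and Tropp's matrix Chernoff inequality; in particular, the specific Bernoulli construction of $\xib$ and the choice $p_i\propto \tilde l_i$ are exactly what make the spectral norm $R$ scale as $\max\{1,d/s\}$, which is what drives the two cases in the conclusion.
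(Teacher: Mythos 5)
Your proposal is correct and matches the paper's proof essentially step for step: the same decomposition $\U_{\xib}^\top\U_{\xib}=\sum_i \frac{b_i}{p_i}\u_i\u_i^\top$ into independent rank-one PSD terms with mean $\I$, the same bound $R=\max\{1,d/s\}$ obtained from $l_i\le\beta_1\tilde l_i$, and a matrix Chernoff bound with the same choices of deviation parameter. The only (immaterial) difference is that you invoke the $g(t)=(1+t)\log(1+t)-t$ form of the tail while the paper uses the weaker $\epsilon^2/((2+\epsilon)R)$ exponent, which follows from yours by a standard inequality.
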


\begin{proof}
For proof refer to Appendix \ref{p_BS}.
\end{proof}

Using Lemma \ref{Bai_Silverstein}, we upper bound the first term squared in (\ref{eq_2}) as $\tilde O\left(\frac{d}{m^2}\left(1+\frac{d}{s}\right)\right)$. Moreover, also using Lemma \ref{Bai_Silverstein}, we get a matching upper bound on $|\gamma-\bar{\gamma}|$. The only term left now to upper bound is $\left(\E_\es\left[\tr(\Qm) - \E_\ep\tr(\Qm)\right]^p\right)^{2/p}$. We identify this remaining term as the sum of a random process forming a martingale difference sequence. We design a martingale concentration argument to prove an upper bound on $|\tr(\Qm) - \E_\es\tr(\Qm)|$ with high probability, which also implies the desired moment bound.
\begin{lemma}{\label{azuma_conc_1}}
For given $\delta>0$ and matrix $\Qm$ we have with probability $1-\delta$:
\begin{align*}
    |\tr(\Qm) - \E\tr(\Qm)| \leq c'\gamma \cdot\frac{d}{\sqrt{m}}  \log^{4.5}(m/\delta)
\end{align*}
for some absolute constant $c'>0$.
\end{lemma}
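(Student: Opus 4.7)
The plan is a Doob-martingale concentration argument in which the randomness of $\S_{-i}$ is exposed one row at a time. Order the indices $\{1,\dots,m\}\setminus\{i\}$ arbitrarily as $k_1,\dots,k_{m-1}$, and set
\begin{align*}
M_0 = \E[\tr(\Qm)], \qquad M_t = \E\bigl[\tr(\Qm)\mid \x_{k_1},\dots,\x_{k_t}\bigr],\qquad D_t = M_t - M_{t-1},
\end{align*}
so that $\tr(\Qm)-\E\tr(\Qm)=\sum_{t=1}^{m-1} D_t$ is a sum of bounded martingale differences with respect to the filtration $\mathcal{F}_t=\sigma(\x_{k_1},\dots,\x_{k_t})$. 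First I would rewrite $D_t$ via an independent coupling: let $\x_{k_t}'$ be an independent copy of $\x_{k_t}$ and let $\Qm^{(t)}$ denote the version of $\Qm$ in which $\x_{k_t}$ is replaced by $\x_{k_t}'$. By the tower property $D_t = \E[\tr(\Qm)-\tr(\Qm^{(t)})\mid \mathcal{F}_t]$, so it suffices to bound $|\tr(\Qm)-\tr(\Qm^{(t)})|$ pointwise on a good event.

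The second step is a rank-one update identity. Let $\M_{-i,-k}=\gamma\U^\top\S_{-i,-k}^\top\S_{-i,-k}\U$ denote the sketched covariance with both rows $i$ and $k=k_t$ removed, so that $\Qm^{-1}=\M_{-i,-k}+\tfrac{\gamma}{m}\U^\top\x_k\x_k^\top\U$. By Sherman--Morrison,
\begin{align*}
\tr(\Qm)-\tr(\M_{-i,-k}^{-1}) \;=\; -\,\frac{\tfrac{\gamma}{m}\,\x_k^\top\U\M_{-i,-k}^{-2}\U^\top\x_k}{1+\tfrac{\gamma}{m}\,\x_k^\top\U\M_{-i,-k}^{-1}\U^\top\x_k},
\end{align*}
and the analogous identity holds for $\tr(\Qm^{(t)})$ with $\x_k$ replaced by $\x_k'$. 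Taking differences and using $\|\M_{-i,-k}^{-1}\|=O(1)$ under the subspace embedding event from Lemma~\ref{subspace_embedding} yields
\begin{align*}
\bigl|\tr(\Qm)-\tr(\Qm^{(t)})\bigr| \;\le\; \tfrac{C\gamma}{m}\bigl(\|\U^\top\x_k\|^2+\|\U^\top\x_k'\|^2\bigr)
\end{align*}
on that event. Since removing one row is a small relative perturbation, the subspace embedding for $\S_{-i}$ transfers to $\S_{-i,-k}$ up to constants, so the required spectral bound on $\M_{-i,-k}^{-1}$ holds simultaneously for all $k$ after a union bound.

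Next I control the quadratic form $\|\U^\top\x_k\|^2$. Its mean is $\tr(\U\U^\top)=d$, and the restricted Bai--Silverstein inequality (Lemma~\ref{Bai_Silverstein}) applied with $\C=\I$ and $p=\Theta(\log(m/\delta))$ gives
\begin{align*}
\bigl(\E\bigl|\|\U^\top\x_k\|^2-d\bigr|^p\bigr)^{1/p} \;\le\; O\bigl(\sqrt{d}\,\log^{3}(m/\delta)\cdot(1+\sqrt{dp\log(d/\delta)/s}\,)\bigr),
\end{align*}
from which Markov's inequality followed by a union bound over the $m-1$ indices yields $\|\U^\top\x_k\|^2\le O(d\,\log^{8}(m/\delta))$ simultaneously for all $k$ with probability at least $1-\delta/2$. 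Combined with the previous step this gives, on a good event $\mathcal{G}$ of probability at least $1-\delta/2$, a uniform bound $|D_t|\le B$ with $B = C'\gamma\,(d/m)\,\log^{8}(m/\delta)$.

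Finally, apply Azuma--Hoeffding to the truncated martingale (replacing $D_t$ by its projection onto $[-B,B]$, which agrees with $D_t$ on $\mathcal{G}$). The sum of squared differences is at most $(m-1)B^2 \le C''\gamma^2 d^2\,\log^{16}(m/\delta)/m$, so
\begin{align*}
\Pr\Bigl(\bigl|\tr(\Qm)-\E\tr(\Qm)\bigr|>t\Bigr) \;\le\; 2\exp\!\Bigl(-\tfrac{t^2}{2(m-1)B^2}\Bigr) + \Pr(\mathcal{G}^c),
\end{align*}
and choosing $t = c'\gamma\,(d/\sqrt{m})\,\log^{9/2}(m/\delta)$ makes both terms at most $\delta/2$, completing the proof. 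The main obstacle is step three: the martingale differences are not almost surely bounded, so the argument must couple the subspace embedding event, the uniform control of $\|\U^\top\x_k\|^2$, and the truncation carefully; a secondary nuisance is transferring the subspace embedding for $\S_{-i}$ to each $\S_{-i,-k}$, which I expect to handle by a $m^{-O(1)}$ union bound absorbed into the final polylogarithmic factors.
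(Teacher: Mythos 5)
Your plan follows essentially the same route as the paper's proof: a Doob martingale over the rows of $\S_{-i}$, a Sherman--Morrison rank-one trace identity to bound each increment, the restricted Bai--Silverstein inequality (Lemma~\ref{Bai_Silverstein}) plus a high-moment Markov bound to control the resulting quadratic form, and Azuma on the truncated martingale; the only cosmetic differences are that you use a replace-one-row coupling rather than comparing to $\Qmj$ directly, and you apply Bai--Silverstein with $\C=\I$ where the paper uses $\C=\Qmj^2$. One bookkeeping slip: your own moment bound gives $\|\U^\top\x_k\|^2 = O(d\log^4(m/\delta))$ after Markov (not $d\log^8$), and it is this correct value of $B$ that makes your final choice $t=c'\gamma(d/\sqrt{m})\log^{9/2}(m/\delta)$ work in Azuma; as written, $B\propto\log^8$ is inconsistent with that $t$.
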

\begin{proof}
For proof refer to Appendix \ref{p_inversion_bias}.  
\end{proof}

\noindent Combining the bounds for terms in (\ref{eq_1}) and (\ref{eq_2}) we conclude the proof of Theorem \ref{least_squares_result}.

\paragraph{Completing the proof of Theorem \ref{t:main}.}
First, suppose that $\epsilon\geq O(\polylog(d)/d)$ so that the bias bound can be achieved from Theorem \ref{least_squares_result}. Our implementation is mainly based on the online construction of approximate leverage scores, given the preconditioner $\P$, using Lemma \ref{l:lev}. Briefly, this construction proceeds by first sketching $\P$ using a $d\times O(1/\gamma)$ Gaussian matrix $\G$ to produce the matrix $\tilde\P=\P\G$, and then, for each observed row $\a_i$ of $\A$, we compute $\tilde l_i=\|\a_i^\top\tilde\P\|^2$. Assuming without loss of generality that $d=\poly(n)$ and adjusting $\gamma$, the estimates satisfy $\beta_1\beta_2=O(\alpha d^\gamma)$. 

Next, we sample the non-zero entries of $\S$ corresponding to the observed row $\a_i$, i.e., the $i$-th column of $\S$. Note that for this we only need to know the single leverage score estimate $\tilde l_i$. Crucially for our analysis, the entries of this column need to be sampled i.i.d., which can be done in time proportional to the number of non-zeros in that column by first sampling a corresponding Binomial distribution to determine how many non-zeros we need, then picking a random subset of that size, and then sampling the random $\pm1$ values. Altogether, the cost of constructing the sketch is $ O(\gamma^{-1}\nnz(\A)+\beta_1\beta_2 sd^2)=O(\gamma^{-1}\nnz(\A)+\alpha \epsilon^{-1}d^{2+\gamma}\polylog(d))$ by setting $s=O(\polylog(d)/\epsilon)$. Finally, once we construct the sketch, at the end of the pass we can run conjugate gradient preconditioned with $\P$ on the sketched problem, which takes $\tilde O(\alpha d^2)$.

We note that in the (somewhat artificial) regime where we require extremely small bias, i.e., $\epsilon = o(\polylog(d)/d)$, the bound claimed in Theorem \ref{t:main} can still be obtained, since in this case for small enough $\gamma$ we have $d^{2+\gamma}/\epsilon = O(d^\omega/\sqrt\epsilon)$ with $\omega < 2.5$, so we can rely on direct leverage score sampling (which corresponds to $s=1$), and instead of maintaining the sketch, we compute the estimator $\tilde\x = (\tilde\A^\top\tilde\A)^{-1}\tilde\A^\top\b$ directly along the way. This involves performing a separate $d\times d$ matrix multiplication after collecting each $d$ leverage score samples, to gradually compute $\tilde\A^\top\tilde\A$, and then inverting the matrix at the end. From Theorem \ref{least_squares_result}, we see that it suffices to set sketch size $m=\tilde O(d/\sqrt\epsilon)$, which leads to the desired runtime.

\paragraph{Completing the proof of Theorem \ref{t:two-passes}.}
For this, we use a slightly modified variant of Lemma~\ref{subspace_embedding}, given as Theorem 1.4 in \cite{chenakkod2023optimal}, which shows that using a single pass we can compute a sketch $\tilde\A$ in time $O(\nnz(\A)+d^\omega)$, which satisfies the subspace embedding property \eqref{eq:subspace_embedding} with $\eta=\frac12$. Then, we can perform the QR decomposition $\tilde\A=\Q\Rb$ and set $\P=\Rb^{-1}$ in additional time $O(d^\omega)$ to obtain the desired preconditioner. Next, we use Theorem \ref{t:main} to construct $q=O(1/\epsilon)$ i.i.d. estimators $\tilde\x_i$ in a second parallel pass, and finally, the estimators are aggregated to compute $\hat\x=\frac1q\sum_{i=1}^q\tilde\x_i$ which satisfies $\E\|\A\hat\x-\b\|^2\leq(1+\epsilon) \|\A\x^*-\b\|^2$. Applying Markov's inequality concludes the proof.

\section{Conclusions and further applications}
\label{s:conclusions}

We gave a new sparse sketching method that, using two passes over the data, produces a nearly-unbiased least squares estimator, which can be used to improve upon the space-time trade-offs of solving least squares in parallel or distributed environments via simple averaging. For a $d$-dimensional least squares problem, our algorithm is the first to require only $O(d^2\log(nd))$ bits of space and current matrix multiplication time $O(d^\omega)$ while obtaining an $\epsilon=o(1)$ least squares approximation in few passes. We obtain this result by developing a new bias analysis for sketched least squares, giving a sharp characterization of its dependence on the sketch sparsity. Our techniques are of independent interest to a broad class of random matrix theory (RMT) style analysis of sketching-based random estimators in low-rank approximation, stochastic optimization and more, promising to extend the reach of these techniques to sparser and more efficient sketching methods.

We conclude by showing how our analysis can be extended beyond least squares to directly improve results from prior work, and also illustrating empirically how our results point to a practical free lunch phenomenon in distributed averaging of sketching-based estimators.

\paragraph{Theoretical applications: Bias-variance analysis for other estimators.}
Here, we highlight how our least squares bias analysis can be extended to other settings where prior works have analyzed sketching-based estimators via techniques from asymptotic random matrix theory. The primary and most direct application involves correcting \emph{inversion bias} in the so-called sketched inverse covariance estimate $(\tilde\A^\top\tilde\A)^{-1}$, which was the motivating task of \cite{derezinski2021sparse}, with applications including distributed second-order optimization and statistical uncertainty quantification, where quantities such as $(\tilde\A^\top\tilde\A)^{-1}\x$ need to be approximated.

\begin{theorem}[informal Theorem \ref{inversion_bias_result}]\label{inversion_bias_informal}
    Given $\A\in\R^{n\times d}$ and a corresponding LESS embedding $\S$ with sketch size $m\geq Cd$ and $s$ non-zeros per row, the inverse covariance sketch $(\frac m{m-d}\A^\top\S^\top\S\A)^{-1}$ is an $(\epsilon,\delta)$-unbiased estimator of $\A$ for $\epsilon = \tilde O\big((1+\sqrt{d/s})\frac {\sqrt d}{m}\big)$ and $\delta = 1/\poly(d)$.
\end{theorem}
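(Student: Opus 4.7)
The plan is to adapt the least squares bias analysis to the matrix-valued inversion bias setting. Via the standard reduction, replacing $\A$ with an orthonormal basis $\U \in \R^{n\times d}$ of its column space, it suffices to show $\|\E_\es[\Q] - \I\| = \tilde O\big((1+\sqrt{d/s})\sqrt d/m\big)$ for $\Q = (\gamma \U^\top\S^\top\S\U)^{-1}$ with $\gamma = m/(m-d)$, conditioned on the subspace embedding event $\es$ from Lemma \ref{subspace_embedding} with $\eta = \tfrac12$; this also delivers the $\Q \preceq O(1)\cdot\I$ side condition of the $(\epsilon,\delta)$-unbiased definition. The starting point is the Sherman-Morrison identity $\Q\U^\top\x_i = \tfrac{1}{\gamma_i}\Qm\U^\top\x_i$ with $\gamma_i := 1 + \tfrac{\gamma}{m}\x_i^\top\U\Qm\U^\top\x_i$, which, combined with $\tfrac{\gamma}{m}\sum_i \Q\U^\top\x_i\x_i^\top\U = \I$, yields after taking $\E_\es$ the key identity $\I = \tfrac{\gamma}{m}\sum_i\E_\es\big[\tfrac{1}{\gamma_i}\Qm\U^\top\x_i\x_i^\top\U\big]$.

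\textbf{Extracting the deterministic equivalent.} Letting $\bar\gamma := 1 + \tfrac{\gamma}{m}\E_\es[\x_i^\top\U\Qm\U^\top\x_i]$, I would rewrite the above as $\I = \tfrac{\gamma}{\bar\gamma}\E_\es[\Qm] + \mathbf{E}_1 + \mathbf{E}_2$, where
$$\mathbf{E}_1 := \tfrac{\gamma}{m}\sum_i\E_\es\big[\big(\tfrac{1}{\gamma_i}-\tfrac{1}{\bar\gamma}\big)\Qm\U^\top\x_i\x_i^\top\U\big], \quad \mathbf{E}_2 := \tfrac{\gamma}{\bar\gamma m}\sum_i\E_\es\big[\Qm\U^\top(\x_i\x_i^\top-\I)\U\big].$$
Unconditionally $\E[\x_i\x_i^\top]=\I$ and $\Qm$ is independent of $\x_i$, so $\|\mathbf{E}_2\|$ is controlled by $\Pr(\neg\es) = 1/\poly(d)$ and is negligible. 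For the dominant term $\mathbf{E}_1$, I would mirror the $\Z_2\r$ analysis of Theorem \ref{least_squares_result}: applying H\"older's inequality with $p = O(\log n)$ and conjugate $q = 1 + 1/O(\log n)$ to $|\v^\top\mathbf{E}_1\w|$ for unit $\v,\w$ yields
$$|\v^\top\mathbf{E}_1\w| \le \gamma \cdot \big(\E_\es|\tfrac{1}{\gamma_i}-\tfrac{1}{\bar\gamma}|^p\big)^{1/p} \cdot \big(\E_\es\|\Qm\U^\top\x_i\|^{2q}\big)^{1/2q} \cdot \big(\E_\es|\x_i^\top\U\v|^{2q}\big)^{1/2q},$$
where the last two factors are $O(1)$ because $\|\x_i\|^{1/O(\log n)} = O(1)$ almost surely and $\|\Qm\| = O(1)$ on $\es$. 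For the first factor, I expand $\gamma_i - \bar\gamma = \tfrac{\gamma}{m}\big[(\tr(\Qm)-\x_i^\top\U\Qm\U^\top\x_i) + (\tr(\Qm)-\E_\es\tr(\Qm))\big]$ and control the two pieces by the new restricted Bai-Silverstein inequality (Lemma \ref{Bai_Silverstein}) and the Azuma-type martingale bound (Lemma \ref{azuma_conc_1}) respectively, giving $(\E_\es|\gamma_i-\bar\gamma|^p)^{1/p} =: \epsilon = \tilde O\big((1+\sqrt{d/s})\sqrt d/m\big)$ and hence $\|\mathbf{E}_1\| \le \epsilon$.

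\textbf{Closing the fixed point.} A further Sherman-Morrison step gives $\|\E_\es[\Q] - \E_\es[\Qm]\| = O(1/m)$, so $\|\E_\es[\Q] - \tfrac{\bar\gamma}{\gamma}\I\| \le \epsilon + O(1/m)$. Taking the trace of this identity and using $\bar\gamma - 1 = \tfrac{\gamma}{m}\tr(\E_\es[\Qm])$ produces a self-consistent scalar equation $\bar\gamma(1-d/m) = 1 + O(\epsilon)$, which solves to $\bar\gamma = \gamma(1+O(\epsilon))$ and hence $\|\E_\es[\Q]-\I\| = O(\epsilon)$, as required. The main obstacle is the sharp high-moment bound on $\gamma_i-\bar\gamma$ in the low-sparsity regime $s \ll d$ --- exactly where the original inversion bias argument of \cite{derezinski2021sparse} loses a factor of $\sqrt{d/s}$ by applying Bai-Silverstein at $p=2$. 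The gain here hinges on simultaneously pushing the Bai-Silverstein order up to $p = O(\log n)$, using the martingale concentration for $\tr(\Qm)$, and keeping the H\"older companion moments $\E_\es\|\Qm\U^\top\x_i\|^{2q}$ bounded by a constant at $q$ barely above $1$. Once this core moment bound is in place, the fixed-point closure and the handling of $\mathbf{E}_2$ are routine, essentially following the original inversion bias template.
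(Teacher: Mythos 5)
Your proposal is correct and follows essentially the same route as the paper's proof of Theorem~\ref{inversion_bias_result}: the same Sherman--Morrison leave-one-out decomposition into a $\Pr(\neg\es)$-controlled term, an $O(1/m)$ term from $\E_\es[\Q-\Qm]$, and a dominant fluctuation term handled by H\"older at $p=O(\log n)$ together with Lemma~\ref{Bai_Silverstein} and Lemma~\ref{azuma_conc_1}. The only (cosmetic) difference is that you center $1/\gamma_i$ at $1/\bar\gamma$ and recover $\bar\gamma\approx\gamma$ by explicitly closing the scalar fixed-point equation, whereas the paper centers $\gamma/\gamma_i$ at $1$ and imports the matching bound on $|\gamma-\bar\gamma|$ from \cite{derezinski2021sparse}, which is derived by the same fixed-point argument.
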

This result should be compared with $\epsilon = \tilde O\big((1+d/s)\frac {\sqrt d}{m}\big)$ by \cite{derezinski2021sparse}, making it a direct improvement for any $s=o(d)$. We note that our theory can be applied not just to bias analysis, but also to obtaining sharper RMT-style error estimates for a range of sparse sketching-based algorithms in low-rank approximation, regression and optimization, as referenced in Section~\ref{s:related}. 

\begin{figure}
\centering\includegraphics[width=0.7\textwidth]{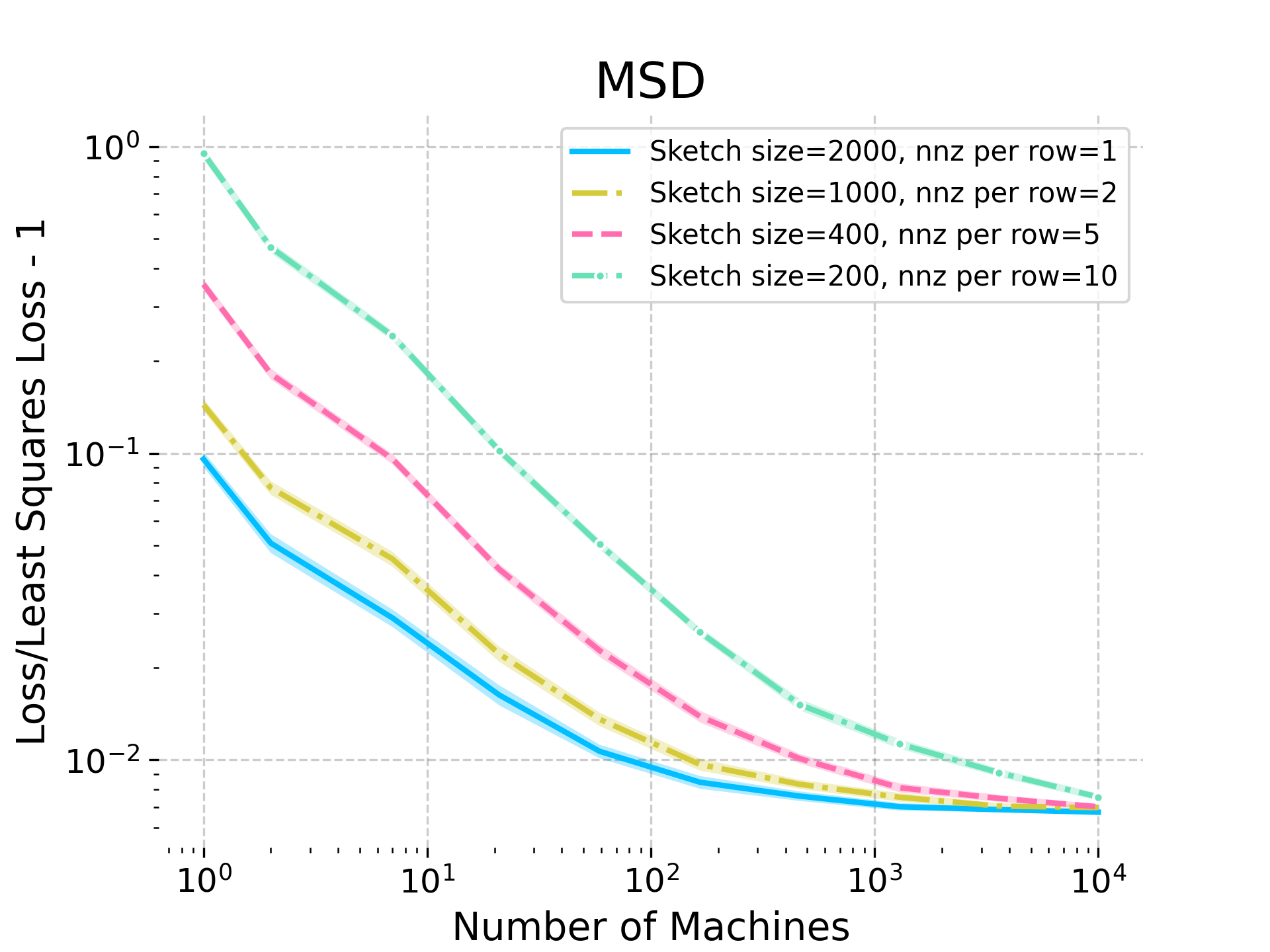}
\caption{Distributed averaging experiment on the \texttt{YearPredictionMSD} dataset \cite{libsvm}, which shows that sparse sketching can be used to preserve near-unbiasedness without increasing the estimation cost (see Appendix \ref{s:experiments} for similar results on two other datasets).}\label{fig:msd}
\end{figure}

\paragraph{Practical application: Sketching preserves near-unbiasedness.} 
As mentioned in Section~\ref{s:intro}, our construction from Theorem \ref{t:main} essentially works by taking a subsample of the data and then mixing groups of those rows together to produce an even smaller sketch (see Figure \ref{fig:vis}). According to our theory, while the small sketch does not recover the same $\epsilon$-small error as the larger subsample, it does recover an $\epsilon$-small bias. Moreover, this happens without incurring any additional computational cost, as the cost of the sketching is proportional to the cost of simply reading the subsampled rows. Thus, it is natural to ask whether this free lunch phenomenon occurs in practice. 

To verify this, in Appendix \ref{s:experiments} we evaluated the effectiveness of distributed averaging of sketched least squares estimators on several benchmark datasets. Our experiment (Figure \ref{fig:msd} above and also Figure~\ref{fig:lessuniform} in the appendix) is designed so that the total sketching cost stays the same for all test cases, by simultaneously changing sketch size and sparsity. On the X-axis, we plot the number $q$ of estimators being averaged, so that the bias of a single estimator appears on the right-hand side of the plot (large $q$), whereas the variance (error) appears on the left-hand side ($q=1$). The plot shows that decreasing the sketch size does increase the error of a single estimator (as expected), however it also shows that the bias of these estimators remains essentially unchanged regardless of the sketch size, confirming that sparse sketching preserves near-unbiasedness without increasing the cost.

\ifarxiv
\else
\acks{We thank a bunch of people and funding agency.}
\fi

\bibliography{references,pap}

\newpage

\appendix

\section{Detailed preliminaries}

We start by providing several classical results, used in our analysis. 
The following formula provides a way to compute the inverse of matrix $\A$ after a rank-$1$ update, given the inverse before the update.
\begin{lemma}[Sherman-Morrison formula]{\label{Sherman_Morrison}}
For an invertible matrix $\A \in \R^{d \times d}$  and vector $\u,\v \in \R^{d}$, $\A+\u\v^\top$ is invertible if and only if $1+\v^\top\A^{-1}\u \ne 0$. If this holds then,
\begin{align*}
    (\A+\u\v^\top)^{-1} = \A^{-1} - \frac{\A^{-1}\u\v^\top\A^{-1}}{1+\v^\top\A^{-1}\u}.
\end{align*}
In particular,
\begin{align*}
    (\A+\u\v^\top)^{-1}\u = \frac{\A^{-1}\u}{1+\v^\top\A^{-1}\u}.
\end{align*}
\end{lemma}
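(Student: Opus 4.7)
The plan is to prove the Sherman-Morrison formula by direct algebraic verification, after first handling the invertibility criterion. I expect essentially no serious obstacles: this is a classical identity whose proof is a short computation, and the main care required is to keep the two directions of the ``if and only if'' clean.

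First I would establish the invertibility characterization. Writing $\A + \u\v^\top = \A(\I + \A^{-1}\u\v^\top)$, and using that $\A$ is invertible, $\A + \u\v^\top$ is invertible if and only if $\I + \A^{-1}\u\v^\top$ is invertible. Since $\A^{-1}\u\v^\top$ is a rank-one matrix, its only nonzero eigenvalue is $\v^\top\A^{-1}\u$ (with eigenvector $\A^{-1}\u$), so the eigenvalues of $\I + \A^{-1}\u\v^\top$ are $1$'s together with $1 + \v^\top\A^{-1}\u$. Thus invertibility is equivalent to $1 + \v^\top\A^{-1}\u \neq 0$. Alternatively, the matrix determinant lemma gives $\det(\A + \u\v^\top) = \det(\A)(1 + \v^\top\A^{-1}\u)$ which yields the same conclusion in one line.

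Next, assuming $1 + \v^\top\A^{-1}\u \neq 0$, I would verify the proposed inverse by direct multiplication. Let $\M = \A^{-1} - \frac{\A^{-1}\u\v^\top\A^{-1}}{1+\v^\top\A^{-1}\u}$ and compute
\begin{align*}
(\A + \u\v^\top)\M &= \I + \u\v^\top\A^{-1} - \frac{\u\v^\top\A^{-1} + \u(\v^\top\A^{-1}\u)\v^\top\A^{-1}}{1 + \v^\top\A^{-1}\u} \\
&= \I + \u\v^\top\A^{-1} - \frac{(1 + \v^\top\A^{-1}\u)\,\u\v^\top\A^{-1}}{1 + \v^\top\A^{-1}\u} = \I,
\end{align*}
where in the second step I pull the scalar $\v^\top\A^{-1}\u$ out and factor. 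A symmetric computation for $\M(\A + \u\v^\top)$ confirms it is a two-sided inverse, which proves the main identity.

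Finally, the ``in particular'' statement follows by applying the identity to the vector $\u$:
$(\A + \u\v^\top)^{-1}\u = \A^{-1}\u - \frac{\A^{-1}\u(\v^\top\A^{-1}\u)}{1 + \v^\top\A^{-1}\u} = \frac{(1+\v^\top\A^{-1}\u)\A^{-1}\u - (\v^\top\A^{-1}\u)\A^{-1}\u}{1+\v^\top\A^{-1}\u} = \frac{\A^{-1}\u}{1+\v^\top\A^{-1}\u}.$
The hardest part is really just bookkeeping to make sure the rank-one update is correctly handled; there is no deep obstacle.
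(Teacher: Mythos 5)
Your proof is correct. The paper does not actually prove this lemma --- it is stated in the appendix as a classical preliminary (the Sherman--Morrison formula) and used as a black box --- so there is no proof in the paper to compare against. Your verification is the standard one: the factorization $\A+\u\v^\top=\A(\I+\A^{-1}\u\v^\top)$ together with the matrix determinant lemma gives the invertibility criterion, the direct multiplication correctly collapses to $\I$ after factoring out the scalar $1+\v^\top\A^{-1}\u$, and the ``in particular'' clause follows by applying the inverse to $\u$ exactly as you compute. No gaps.
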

The following inequality provides a crucial tool for writing expectation of the product of two random variables as the product of higher individual moments.
\begin{lemma}[H\"older's inequality]
For real-valued random variables $X$ and $Y$,
\begin{align*}
    \E[|XY|] \leq \left(\E[|X|^p]\right)^{1/p}\cdot\left(\E[|Y|^q]\right)^{q}
\end{align*}
where $p,q >0$ are H\"older's conjugates, i.e. $\frac{1}{p} + \frac{1}{q}=1$.
\end{lemma}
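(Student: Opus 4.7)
The plan is to prove H\"older's inequality by first establishing Young's inequality as a pointwise scalar inequality, then lifting it to an expectation bound via a normalization trick. Young's inequality states that for nonnegative reals $a,b$ and conjugate exponents $p,q>1$ with $1/p+1/q=1$, we have $ab \le a^p/p + b^q/q$. This is the natural scalar ingredient, since H\"older's inequality has exactly the structure of a weighted AM--GM bound once the two random variables are rescaled to have unit $p$-th and $q$-th moments respectively.

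First I would prove Young's inequality using the concavity of $\log$. For $a,b>0$, concavity and Jensen's inequality give $\log(a^p/p + b^q/q) \ge (1/p)\log(a^p) + (1/q)\log(b^q) = \log a + \log b = \log(ab)$, and monotonicity of $\log$ yields $ab \le a^p/p + b^q/q$. The boundary case $a=0$ or $b=0$ is immediate. An alternative derivation uses convexity of the exponential and writing $ab = \exp((1/p)\log a^p + (1/q)\log b^q)$; either route takes one line.

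Next, to derive H\"older's inequality, I would dispose of the degenerate cases: if $\E[|X|^p]=0$ then $X=0$ almost surely and both sides vanish, and if either $p$-th moment is infinite the bound is trivial. In the remaining case, normalize by setting $U = |X|/(\E[|X|^p])^{1/p}$ and $V = |Y|/(\E[|Y|^q])^{1/q}$, so that $\E[U^p] = \E[V^q] = 1$. Applying Young's inequality pointwise yields $UV \le U^p/p + V^q/q$; taking expectations and using $1/p + 1/q = 1$ gives $\E[UV] \le 1$. Multiplying both sides by $(\E[|X|^p])^{1/p}(\E[|Y|^q])^{1/q}$ recovers the claim $\E[|XY|] \le (\E[|X|^p])^{1/p}(\E[|Y|^q])^{1/q}$.

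There is no genuine obstacle here; this is textbook material and the entire argument fits in a few lines. The only minor care points are verifying the degenerate cases and noting that the stated hypothesis $p,q>0$ with $1/p+1/q=1$ forces $p,q>1$ (since $1/p<1$), so Young's inequality applies in its usual form and no separate treatment of $p=1,q=\infty$ is needed within the stated regime.
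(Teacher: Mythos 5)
Your proof is correct: the standard Young's-inequality-plus-normalization argument, with the degenerate cases handled, establishes H\"older's inequality in full. The paper offers no proof of this lemma at all --- it is stated in the preliminaries as a classical fact --- so there is nothing to compare against; your write-up would serve as a complete justification. One small point worth flagging: the statement as printed in the paper has a typo, with the second factor written as $\left(\E[|Y|^q]\right)^{q}$ rather than $\left(\E[|Y|^q]\right)^{1/q}$; what you prove (and what the paper actually uses in its bias analysis) is the correct form with exponent $1/q$.
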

The following technical lemmas provide concentration results for the sum of random quantities. We collect these results here and then refer to them while using in our analysis.

\begin{lemma}[Matrix Chernoff Inequality]{\label{matrix_chernoff}}
For $i=1,2,\cdots,n$ consider a sequence $\Z_i$ of $d \times d$ positive semi-definite random matrices such that $\E[\frac{1}{n}\sum_i\Z_i] =\I_d$ and $\|\Z_i\|\leq R$. Then for any $\epsilon >0$, we have
 \begin{align*}
         \Pr\left(\lambda_{\max}\left(\frac{1}{n}\sum_{i=1}^{n}{\Z_i} \right)\geq (1+\epsilon)\right) \leq d\cdot\exp\left(-\frac{n\epsilon^2}{(2+\epsilon)R}\right).
     \end{align*}
    
\end{lemma}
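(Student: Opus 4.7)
The plan is to follow the now-standard matrix Laplace transform framework of Ahlswede--Winter and Tropp. First, I would reduce the tail bound to a moment-generating function (MGF) estimate via the matrix Markov inequality: for any $\theta>0$,
\begin{align*}
\Pr\!\left(\lambda_{\max}\!\Big(\sum_{i=1}^n \Z_i\Big) \geq (1+\epsilon)n\right) &\leq e^{-\theta(1+\epsilon)n}\,\E\,\tr\exp\!\Big(\theta\sum_{i=1}^n \Z_i\Big).
\end{align*}
The task then reduces to controlling the matrix MGF on the right-hand side.

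Next, I would invoke Lieb's concavity theorem (together with Jensen's inequality applied iteratively over the independent $\Z_i$'s) to peel off the expectation, obtaining
\begin{align*}
\E\,\tr\exp\!\Big(\theta\sum_i \Z_i\Big) &\leq \tr\exp\!\Big(\sum_i \log \E\, e^{\theta \Z_i}\Big).
\end{align*}
This is the standard subadditivity-of-cumulants lemma in the noncommutative setting; its proof is a one-line consequence of Lieb's theorem and I would cite it directly.

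The main work is the per-summand MGF bound. Here I would use the fact that $\Z_i$ is psd with $\|\Z_i\|\leq R$, so its eigenvalues lie in $[0,R]$, and on this interval the scalar inequality $e^{\theta x}\leq 1 + \frac{e^{\theta R}-1}{R}x$ holds by convexity. Applying this in the functional-calculus/operator sense gives $e^{\theta \Z_i}\preceq \I + \frac{e^{\theta R}-1}{R}\Z_i$, hence $\E\, e^{\theta \Z_i}\preceq \I + \frac{e^{\theta R}-1}{R}\E\Z_i$, and by operator monotonicity of $\log$ combined with $\log(\I+\A)\preceq \A$ we get $\log\E\, e^{\theta \Z_i}\preceq \frac{e^{\theta R}-1}{R}\E\Z_i$. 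Summing over $i$ and using $\sum_i\E\Z_i = n\I_d$, the operator in the exponent is bounded by $\frac{n(e^{\theta R}-1)}{R}\I_d$, so $\tr\exp(\cdot)\leq d\exp\!\big(\tfrac{n(e^{\theta R}-1)}{R}\big)$.

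Finally, I would optimize: choosing $\theta = \frac{1}{R}\log(1+\epsilon)$ yields the bound
\begin{align*}
d\cdot \exp\!\left(\frac{n}{R}\big(\epsilon - (1+\epsilon)\log(1+\epsilon)\big)\right),
\end{align*}
and the elementary inequality $(1+\epsilon)\log(1+\epsilon)-\epsilon \geq \frac{\epsilon^2}{2+\epsilon}$ for $\epsilon>0$ delivers the advertised $d\exp(-n\epsilon^2/((2+\epsilon)R))$. The main obstacle is purely expository rather than technical: the operator-level MGF inequality requires care to justify via functional calculus (since scalar inequalities do not transport to matrices without commutativity), and one must appeal explicitly to operator monotonicity of $\log$ at the step $\log(\I+\A)\preceq \A$. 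Once these ingredients are in place the calculation is mechanical.
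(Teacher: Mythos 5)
Your proof is correct, and it is the standard Ahlswede--Winter/Tropp matrix Laplace transform derivation of this classical bound; the paper itself states the lemma without proof as a known result, so there is nothing to compare against beyond noting that your argument is the canonical one. All the steps check out, including the final elementary inequality $(1+\epsilon)\log(1+\epsilon)-\epsilon\geq \tfrac{\epsilon^2}{2+\epsilon}$, which follows from the sharper Bennett-type bound $(1+\epsilon)\log(1+\epsilon)-\epsilon\geq \tfrac{3\epsilon^2}{6+2\epsilon}$.
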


\begin{lemma}[Azuma's inequality]{\label{azuma}}
If $\{Y_0,Y_1,Y_2,\cdots\}$ is a martingale with $|Y_j-Y_{j-1}| \leq c_j$ then for any $m>0$ we have
\begin{align*}
    \Pr\left(|Y_m-Y_0| \geq \lambda\right) \leq 2\cdot\exp\left(-\frac{\lambda^2}{2\sum_{j=1}^{m}{c_j^2}}\right).
\end{align*}
\end{lemma}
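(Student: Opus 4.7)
The plan is to prove Azuma's inequality by the standard Chernoff-type exponential moment argument applied along the martingale filtration. Let $\mathcal{F}_j$ denote the sigma-algebra generated by $Y_0,\dots,Y_j$, and set $X_j = Y_j - Y_{j-1}$, so that $Y_m - Y_0 = \sum_{j=1}^m X_j$, the martingale property gives $\E[X_j \mid \mathcal{F}_{j-1}] = 0$, and the hypothesis yields $|X_j| \leq c_j$ almost surely.

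The first step is to apply Markov's inequality to the exponential of the deviation: for any $t > 0$,
\[
\Pr(Y_m - Y_0 \geq \lambda) \;\leq\; e^{-t\lambda}\,\E\!\left[e^{t(Y_m-Y_0)}\right] \;=\; e^{-t\lambda}\,\E\!\left[\prod_{j=1}^m e^{tX_j}\right].
\]
Next I would peel off one factor at a time by conditioning on $\mathcal{F}_{j-1}$ via the tower property. The key subroutine is Hoeffding's lemma applied conditionally: since $X_j$ is conditionally mean-zero and bounded in $[-c_j, c_j]$, one has $\E[e^{tX_j} \mid \mathcal{F}_{j-1}] \leq e^{t^2 c_j^2 / 2}$ almost surely. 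This bound is deterministic given $\mathcal{F}_{j-1}$, so iterating the telescoping for $j = m, m-1, \dots, 1$ yields $\E\bigl[e^{t(Y_m - Y_0)}\bigr] \leq \exp\!\bigl(\tfrac{t^2}{2}\sum_{j=1}^m c_j^2\bigr)$.

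Optimizing the resulting exponent in $t$ by choosing $t = \lambda / \sum_{j=1}^m c_j^2$ produces the one-sided tail bound $\Pr(Y_m - Y_0 \geq \lambda) \leq \exp\!\bigl(-\lambda^2/(2\sum_j c_j^2)\bigr)$. Applying exactly the same argument to the martingale $-Y_j$, whose increments are still bounded by $c_j$, gives the matching lower-tail estimate, and a union bound over the two events supplies the factor of $2$ in the statement.

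The only nontrivial ingredient is Hoeffding's conditional inequality for a mean-zero random variable $X$ bounded in $[-c,c]$. I would establish it by convexity: write $e^{tx} \leq \tfrac{c-x}{2c}e^{-tc} + \tfrac{c+x}{2c}e^{tc}$ for $x \in [-c,c]$, take conditional expectation (which kills the linear term in $x$ because $\E[X \mid \mathcal{F}_{j-1}] = 0$), and bound the resulting $\cosh(tc)$ by $e^{t^2 c^2/2}$ using a term-by-term comparison of Taylor series. Apart from this auxiliary lemma the proof is mechanical, so I do not anticipate any real obstacle; the only subtle point to keep straight is that Hoeffding's bound must be applied to the \emph{conditional} distribution of $X_j$ given $\mathcal{F}_{j-1}$, so one needs measurability to justify pulling the deterministic factor $e^{t^2 c_j^2/2}$ outside the outer expectation.
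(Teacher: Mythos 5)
Your proof is correct. The paper states Azuma's inequality as a classical result (Lemma \ref{azuma}) without providing a proof, so there is nothing to compare against; your argument is the standard Chernoff--Hoeffding derivation (conditional Hoeffding lemma via convexity, telescoping the moment generating function along the filtration, optimizing $t=\lambda/\sum_j c_j^2$, and a union bound over the two tails), and every step, including the $\cosh(tc)\le e^{t^2c^2/2}$ Taylor comparison and the measurability point about pulling the deterministic factor out of the conditional expectation, is sound.
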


\begin{lemma}[Rosenthal's inequality (\cite{johnson1985best}, Theorem 2.5 and Corollary 2.6)]{\label{Rosenthal}}
Let $1\leq p < \infty$ and $X_1,X_2,\cdots, X_n$ are nonnegative, independent random variables with finite $p^{th}$ moments then,
\begin{align*}
    \left(\E\left[\sum_{i}{X_i}\right]^{p}\right)^{1/p} \leq \frac{2p}{\log(p)} \cdot \max\left\{\sum_{i}{\E[X_i]}, \left(\sum_{i}{\E[X_i^p]}\right)^{1/p}\right\}.
\end{align*}
Furthermore, for mean-zero independent and symmetric random variables we have
\begin{align*}
    \left(\E\left[\sum_{i}{X_i}\right]^{p}\right)^{1/p} \leq \frac{2p}{\sqrt{\log(p)}} \cdot \max\left\{\left(\sum_{i}{\E[X_i^2]}\right)^{1/2}, \left(\sum_{i}{\E[X_i^p]}\right)^{1/p}\right\}.
\end{align*}
    
\end{lemma}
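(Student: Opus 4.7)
The plan is to prove the first (nonnegative) inequality by a direct combinatorial moment expansion, and then deduce the second (mean-zero symmetric) inequality via symmetrization plus Khintchine.

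For the nonnegative case, I would expand the $p$-th moment multinomially,
\[
\E\Bigl[\Bigl(\sum_i X_i\Bigr)^{\!p}\Bigr] \;=\; \sum_{k_1+\cdots+k_n=p}\binom{p}{k_1,\ldots,k_n}\prod_i \E[X_i^{k_i}],
\]
and group the compositions $(k_1,\ldots,k_n)$ by the set $S=\{i:k_i\ge 2\}$ of ``heavy'' coordinates. The contributions with $S=\emptyset$ collapse to $(\sum_i \E X_i)^p$ after accounting for multinomial multiplicities. For nonempty $S$, I would use log-convexity of the moment map, $\E[X_i^{k_i}]\le (\E X_i)^{\alpha_i}(\E X_i^p)^{1-\alpha_i}$ with $\alpha_i = (p-k_i)/(p-1)$, and factor the resulting product into a power of $\sum_i \E X_i$ and a power of $\sum_i \E X_i^p$ depending only on $|S|$ and $\sum_{i\in S}k_i$. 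Summing over all signatures and estimating the number of compositions of a given shape by Stirling's formula produces the key factor $p/\log p$ that appears in the prefactor of the bound.

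For the symmetric case, since each $X_i$ is symmetric, the joint law $(X_i)$ coincides with $(\epsilon_i X_i)$ for independent Rademacher signs $\epsilon_i$ independent of the $X_i$. Conditioning on $|X_1|,\ldots,|X_n|$ and applying Khintchine's inequality with Haagerup's sharp constant yields
\[
\Bigl\|\sum_i X_i\Bigr\|_p \;=\; \Bigl\|\sum_i \epsilon_i X_i\Bigr\|_p \;\le\; C\sqrt{p}\,\Bigl\|\Bigl(\sum_i X_i^2\Bigr)^{1/2}\Bigr\|_p \;=\; C\sqrt{p}\,\Bigl\|\sum_i X_i^2\Bigr\|_{p/2}^{1/2}.
\]
Since the $X_i^2$ are independent and nonnegative, the first inequality applied at exponent $p/2$ controls the last factor by $\sqrt{(p/\log p)\max\{\sum_i \E X_i^2,(\sum_i \E X_i^p)^{2/p}\}}$. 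Multiplying by the $\sqrt p$ from Khintchine and taking the square root contributes a combined factor of $p/\sqrt{\log p}$, which is exactly the improvement over the nonnegative constant.

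The main obstacle will be bookkeeping the sharp numerical constant $2$: tracking it through symmetrization (typically costing a factor of $2$), Haagerup's optimal Khintchine constant, and the Stirling estimate in the combinatorial expansion requires careful optimization. A loose version with prefactors of order $Cp/\log p$ and $Cp/\sqrt{\log p}$ follows fairly directly from the outline above; recovering the stated form with constant $2$ in front requires the delicate truncation argument of Johnson--Schechtman--Zinn, in which the cutoff level in the moment expansion is chosen to balance the contributions of $\sum_i \E X_i$ (resp.\ $\sum_i \E X_i^2$) and $\sum_i \E X_i^p$ at the two endpoints of the max.
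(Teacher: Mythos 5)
The paper does not prove this lemma at all: it is imported verbatim as a black-box citation of Johnson--Schechtman--Zinn (Theorem 2.5 and Corollary 2.6), so there is no internal proof to compare against. Judged on its own, your outline follows a legitimate and standard route. The symmetric case is handled correctly and essentially completely: since the $X_i$ are symmetric, $(X_i)_i$ has the same law as $(\epsilon_i X_i)_i$, so the identity $\|\sum_i X_i\|_p=\|\sum_i\epsilon_i X_i\|_p$ holds with no symmetrization loss, conditional Khintchine with $B_p=O(\sqrt p)$ gives $\|\sum_i X_i\|_p\le B_p\|\sum_i X_i^2\|_{p/2}^{1/2}$, and feeding the nonnegative inequality at exponent $p/2$ into the last factor yields the $p/\sqrt{\log p}$ prefactor exactly as you say (this requires $p\ge 2$, which is the only regime the paper uses).

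The weak point is the nonnegative case, which is where all the content of the theorem lives. Your multinomial expansion, grouping by the heavy set $S=\{i:k_i\ge2\}$, and the log-convexity interpolation $\E[X_i^{k}]\le(\E X_i)^{(p-k)/(p-1)}(\E X_i^p)^{(k-1)/(p-1)}$ are all sound ingredients, but the step you describe as ``summing over all signatures and estimating the number of compositions by Stirling's formula produces the factor $p/\log p$'' is precisely the theorem, not a routine calculation: the $p/\log p$ growth comes from Bell-number-type asymptotics for the count of set partitions of $[p]$ (the extremal case being Poisson summands), and factoring $\prod_{i}a_i^{\alpha_i}b_i^{1-\alpha_i}$ into powers of $\sum_i a_i$ and $\sum_i b_i$ after summing over index assignments requires a genuine argument, not just term-by-term estimates. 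As an outline it is acceptable, but as written it asserts rather than proves the one quantitatively sharp claim. Since the paper only ever uses this lemma with absolute constants absorbed into $O(\cdot)$ and $c^p\,p^{O(p)}$ factors, your acknowledged looseness in the constant $2$ is immaterial for the downstream results (Lemmas \ref{Bai_Silverstein} and \ref{small_event_lemma}); a clean writeup could equally well keep the lemma as a citation, which is what the authors do.
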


\begin{lemma}[Bai-Silverstein's Inequality Lemma B.26 from \cite{bai2010spectral}]\label{original_bai_silver} Let $\B$ be a $d \times d$ be a fixed matrix and $\x$ be a random vector of independent entries. Let $\E[x_i]=0$ and $\E x_i^2=1$,and $\E|x_j|^{l} \leq \nu_l$. Then for any $p \geq 1$,
\begin{align*}
    \E|\x^\top\B\x-\tr(\B)|^{p} \leq (2p)^{p}\cdot \left((\nu_4\tr(\B\B^\top))^{p/2} + \nu_{2p}\tr(\B\B^\top)^{p/2}\right).
\end{align*}
\end{lemma}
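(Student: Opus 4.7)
The plan is to decompose $Q := \x^\top \B \x - \tr(\B)$ into a diagonal and an off-diagonal contribution and to bound each in $L^p$ separately. Write $Q = D + O$ where $D = \sum_i B_{ii}(x_i^2 - 1)$ and $O = \sum_{i\neq j} B_{ij} x_i x_j$; both have mean zero, and Minkowski's inequality gives $(\E|Q|^p)^{1/p} \leq (\E|D|^p)^{1/p} + (\E|O|^p)^{1/p}$. For the diagonal piece, the summands $B_{ii}(x_i^2-1)$ are independent and mean zero, with second moment at most $\nu_4 B_{ii}^2$ and $p$-th absolute moment at most $2^p \nu_{2p}|B_{ii}|^p$. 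Applying Rosenthal's inequality (Lemma \ref{Rosenthal}) yields
\begin{align*}
(\E|D|^p)^{1/p} \leq C p \cdot \max\Big\{\big(\nu_4 \textstyle\sum_i B_{ii}^2\big)^{1/2},\ \big(\nu_{2p} \textstyle\sum_i |B_{ii}|^p\big)^{1/p}\Big\},
\end{align*}
and both maxima are controlled by $\tr(\B\B^\top)$: for the first, $\sum_i B_{ii}^2 \leq \tr(\B\B^\top)$; for the second, $\sum_i |B_{ii}|^p \leq (\sum_i B_{ii}^2)^{p/2} \leq \tr(\B\B^\top)^{p/2}$ by monotonicity of $\ell_p$ norms when $p\geq 2$ (the case $p < 2$ is trivial).

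For the off-diagonal piece I would use a martingale decomposition. Set $\mathcal{F}_k = \sigma(x_1,\ldots,x_k)$ and observe that $O = \sum_{k=1}^n d_k$ where
\begin{align*}
d_k \ =\ x_k \cdot Z_k, \qquad Z_k \ =\ \sum_{j<k}(B_{kj}+B_{jk})\,x_j,
\end{align*}
is an $\mathcal{F}_k$-martingale difference since $Z_k$ is $\mathcal{F}_{k-1}$-measurable and $\E x_k = 0$. Burkholder's inequality together with Rosenthal's (Lemma \ref{Rosenthal}) then gives
\begin{align*}
(\E|O|^p)^{1/p} \ \leq\ Cp\cdot\Big(\E\big(\textstyle\sum_k d_k^2\big)^{p/2}\Big)^{1/p} \ +\ Cp\cdot\big(\E\max_k|d_k|^p\big)^{1/p}.
\end{align*}
The conditional variance satisfies $\E[d_k^2\mid \mathcal{F}_{k-1}] = Z_k^2$, and $\sum_k Z_k^2$ is itself a quadratic form in the $x_j$'s whose expectation equals $\sum_{k,j<k}(B_{kj}+B_{jk})^2 \leq 4\tr(\B\B^\top)$; the higher moment of this quadratic form can be bounded by a one-step iteration of the same scheme (now on a symmetric matrix whose Frobenius norm is controlled by $\tr(\B\B^\top)$), each Burkholder step costing an extra factor of $p$. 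The maximum-increment term is bounded using the crude estimate $|d_k|^p \leq |x_k|^p (\sum_{j<k} |x_j|^2)^{p/2} (\sum_{j<k}(B_{kj}+B_{jk})^2)^{p/2}$ combined with independence, which contributes the $\nu_{2p}\tr(\B\B^\top)^{p/2}$ term.

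The main obstacle will be the off-diagonal analysis: carefully tracking constants so that the two Burkholder--Rosenthal applications (first on $O$, then on $\sum_k Z_k^2$) combine to give the sharp $(2p)^p$ prefactor in the statement, rather than a loose $(Cp)^{Cp}$. A clean way to handle this is to replace Burkholder with its sharp form due to Hitczenko (with constant $Cp$ rather than $Cp^{3/2}$) and to carry out the iteration using homogeneous $L^p$-norms so that the contributions $\nu_4 \tr(\B\B^\top)^{p/2}$ and $\nu_{2p}\tr(\B\B^\top)^{p/2}$ emerge as the ``bracket'' and ``max-increment'' terms respectively; combining with the diagonal bound via Minkowski then produces the claimed estimate.
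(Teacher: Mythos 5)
This lemma is quoted verbatim from Bai and Silverstein (Lemma B.26 of \cite{bai2010spectral}); the paper gives no proof of it, so the only meaningful comparison is with the standard textbook argument. Your strategy --- splitting $\x^\top\B\x-\tr(\B)$ into the diagonal sum $\sum_i B_{ii}(x_i^2-1)$ and the off-diagonal martingale $\sum_k x_k Z_k$ with $Z_k=\sum_{j<k}(B_{kj}+B_{jk})x_j$, then applying Rosenthal to the former and Burkholder--Rosenthal plus an iteration in the exponent to the latter --- is exactly that classical proof, and your diagonal bound is complete and correct.

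There is, however, a genuine gap in your max-increment term. The Cauchy--Schwarz estimate $|d_k|^p\le |x_k|^p\big(\sum_{j<k}x_j^2\big)^{p/2}\big(\sum_{j<k}(B_{kj}+B_{jk})^2\big)^{p/2}$ is too lossy: $\E\big(\sum_{j<k}x_j^2\big)^{p/2}$ grows like $n^{p/2}$ already for i.i.d.\ unit-variance entries, and this dimension factor cannot be absorbed into $\nu_{2p}\tr(\B\B^\top)^{p/2}$. The repair is standard but necessary: bound $\E|Z_k|^p$ directly by Rosenthal (or Marcinkiewicz--Zygmund) applied to the linear form, which for $p\ge 2$ gives $\E|Z_k|^p\le (Cp)^{p}\,\nu_p\big(\sum_{j<k}(B_{kj}+B_{jk})^2\big)^{p/2}$, and then use independence of $x_k$ and $Z_k$ together with $\sum_k t_k^{p/2}\le\big(\sum_k t_k\big)^{p/2}$ and $\nu_p^2\le\nu_{2p}$ to get $\sum_k\E|x_k|^p\,\E|Z_k|^p\le (Cp)^p\,\nu_{2p}\big(4\tr(\B\B^\top)\big)^{p/2}$. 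Two smaller remarks: your ``one-step iteration'' on $\E\big(\sum_k Z_k^2\big)^{p/2}$ is really an induction of depth about $\log_2 p$, since the predictable quadratic variation is again a quadratic form, now at exponent $p/2$; and since the paper only invokes the stated inequality as a black box with $p=O(\log n)$, a constant of the form $(Cp)^{p}$ would serve every downstream use, so the sharp-constant bookkeeping you flag as the main obstacle is not actually required.
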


\section{Inversion bias analysis}{\label{p_inversion_bias}}
In this section, we give a formal statement and proof for Theorem \ref{inversion_bias_informal}, which is then used in the proof of Theorem \ref{least_squares_result}. We replace $\A$ with $\U$ such that $\U$ consists of $d$ orthonormal columns spanning the column space of $\A$. Here $\S_{m} \in \R^{m \times n}$ denotes a LESS sketching matrix with independent rows $\frac{1}{\sqrt{m}}\x^\top$, $\x^\top = \y^\top\cdot\diag(\xib)$ where $\y$ consists of $\pm 1$ Rademacher entries and $\xib$ is an $(s,\beta_1,\beta_2)$-approximate leverage score sparsifier. Note that  $\E[\x\x^\top]=\I_n$. We assume that the sketching matrix $\S_m$ consists of $m \geq 10d$ i.i.d rows and $3$ divides $m$. Also, we assume that the $\S_m$ satisfies the subspace embedding condition for $\U$ (Theorem \ref{subspace_embedding}) with $\eta = \frac{1}{2}$. Let $\Q = (\gamma\U^\top\S_m^\top\S_m\U)^{-1}$ where $\gamma=\frac{m}{m-d}$.
\begin{theorem}[Small inversion bias for $(s,\beta_1,\beta_2)$-LESS embeddings]{\label{inversion_bias_result}}
    Let $\delta>0$ satisfy $\delta<\frac{1}{m^4}$ and $m\geq O(d)$. Let $\S_m \in \R^{m \times n}$ be an $(s,\beta_1,\beta_2)$-LESS embedding for data matrix $\U \in \R^{n\times d}$ such that $\U^\top\U=\I$. Then there exists an event $\es$ with $\Pr(\mathcal{\es}) \geq 1-\delta$ such that,
    \begin{align*}
        \frac{1}{1+\epsilon}\cdot\I \preceq \E_\es[\Q] \preceq (1+\epsilon)\cdot \I \ \ \ \text{and} \ \ \ \frac{1}{2}\I \preceq \Q \preceq 2\I \ \ \ \text{when conditioned on $\es$}
    \end{align*}
    where $\epsilon = O\left(\frac{\sqrt d \log^{4.5}(n/\delta)}{m}\left(1+\sqrt{\frac{d}{s}}\right)\right)$.
\end{theorem}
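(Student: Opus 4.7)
The plan is to reduce the inversion bias to a fixed-point identity for a deterministic equivalent of $\E_\es[\Q]$, proven via a leave-one-out argument on the rows $\x_i$ of $\S_m$. For each $i$, define $\Q_{-i} := (\tfrac{\gamma}{m}\sum_{j\ne i}\U^\top\x_j\x_j^\top\U)^{-1}$. Since $\Q^{-1} = \Q_{-i}^{-1} + \tfrac{\gamma}{m}\U^\top\x_i\x_i^\top\U$, the Sherman--Morrison formula (Lemma~\ref{Sherman_Morrison}) yields
\begin{align*}
\Q\,\U^\top\x_i \;=\; \gamma_i^{-1}\,\Q_{-i}\U^\top\x_i,\qquad \gamma_i := 1 + \tfrac{\gamma}{m}\,\x_i^\top\U\Q_{-i}\U^\top\x_i.
\end{align*}
Plugging this into $\I = \gamma\Q\U^\top\S_m^\top\S_m\U = \tfrac{\gamma}{m}\sum_i \Q\U^\top\x_i\x_i^\top\U$ produces the identity
\begin{align*}
\I \;=\; \frac{\gamma}{m}\sum_{i=1}^m \gamma_i^{-1}\,\Q_{-i}\U^\top\x_i\x_i^\top\U.
\end{align*}
Taking $\E_\es[\cdot]$ and exploiting $\x_i\perp \Q_{-i}$ together with $\E[\x_i\x_i^\top]=\I$ one obtains heuristically $\I \approx \tfrac{\gamma}{\bar\gamma}\,\E_\es[\Q_{-i}]$, where $\bar\gamma := 1+\tfrac{\gamma}{m}\E_\es[\tr(\Q_{-i})]$. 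Combined with the self-consistent equation $\bar\gamma \approx 1 + (d/m)(\bar\gamma/\gamma)$ this forces $\bar\gamma\approx \gamma$, and hence $\E_\es[\Q]\approx\I$.

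To turn this into a quantitative statement I would proceed in three steps. First, the subspace-embedding hypothesis with $\eta=\tfrac12$ directly yields $\tfrac12\I\preceq\Q\preceq 2\I$ on $\es$, and an analogous application of Lemma~\ref{subspace_embedding} with $m-1$ rows gives $\Q_{-i}\preceq O(1)\cdot\I$. Since $\delta\le m^{-4}$ and $\|\x_i\|\le\poly(n)$ almost surely, conditioning on $\es$ alters any polynomially bounded expectation by a negligible amount, so I can freely interchange $\E$ and $\E_\es$ throughout. Second, I would bound the high moments $(\E_\es|\gamma_i-\bar\gamma|^p)^{1/p}$ at $p=O(\log(n/\delta))$ by splitting
\begin{align*}
\gamma_i - \bar\gamma \;=\; \tfrac{\gamma}{m}\big(\x_i^\top\U\Q_{-i}\U^\top\x_i - \tr(\Q_{-i})\big) \;+\; \tfrac{\gamma}{m}\big(\tr(\Q_{-i}) - \E_\es\tr(\Q_{-i})\big).
\end{align*}
The first summand is exactly the object controlled by the restricted Bai--Silverstein inequality (Lemma~\ref{Bai_Silverstein}) applied with $\C = \Q_{-i}$, yielding a contribution of order $\tilde O(\tfrac{\sqrt d}{m}(1+\sqrt{d/s}))$. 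The second summand is a sum whose partial sums form a martingale under the Doob filtration on the rows $\x_j$, and is handled by Lemma~\ref{azuma_conc_1}, contributing a $\log^{4.5}(m/\delta)$ factor.

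Third, to convert scalar control of $\gamma_i$ into operator-norm control of $\E_\es[\Q]-\I$, I would test against an arbitrary unit vector $\v$ and bound
\begin{align*}
\Big|\,\v^\top\E_\es\big[\gamma_i^{-1}\Q_{-i}\U^\top\x_i\x_i^\top\U - \bar\gamma^{-1}\I\big]\v\,\Big|
\end{align*}
using the same H\"older splitting as in the least-squares analysis: write $\gamma_i^{-1}-\bar\gamma^{-1}$ and separate expectations with exponents $p=O(\log n)$ and $q=1+1/O(\log n)$, so that the sparsity-sensitive moment is incurred only in the factor $(\E|\gamma_i^{-1}-\bar\gamma^{-1}|^p)^{1/p}$, whereas $(\E|\v^\top\Q_{-i}\U^\top\x_i|^{2q})^{1/2q}$ and $(\E|\x_i^\top\U\v|^{2q})^{1/2q}$ stay barely above second moments and reduce to constants using $\E[\x_i\x_i^\top]=\I$ together with $\|\x_i\|\le\poly(n)$. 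Solving the resulting fixed-point relation $\E_\es[\Q]\approx \tfrac{\bar\gamma}{\gamma}\I$ then delivers the claimed $\epsilon = \tilde O((\sqrt d/m)(1+\sqrt{d/s}))$.

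The main obstacle, and the reason the result improves on the $\tilde O(d/s)$ sparsity dependence of \cite{derezinski2021sparse}, is obtaining the sharp $\sqrt{d/s}$ rate at step two. Two ingredients are needed: (a) the restricted Bai--Silverstein inequality of Lemma~\ref{Bai_Silverstein}, which uses Rosenthal's inequality together with the Matrix Chernoff bound (Lemma~\ref{chernoff_bound}) on $\|\U_{\xib}^\top\U_{\xib}\|$ to exploit the structure of leverage-score-sparsified rows; and (b) the H\"older splitting in step three, which ensures the sparsity penalty is paid on exactly one factor rather than compounded across two. A secondary technicality is that $\Q_{-i}$ is not literally independent of $\es$; this is absorbed either by redefining $\es$ as the intersection of the $m$ per-row subspace-embedding events (still of probability $\ge 1-\delta$ by a union bound) or by truncation using the $\|\x_i\|\le\poly(n)$ bound already noted.
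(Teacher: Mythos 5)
Your proposal follows essentially the same route as the paper: leave-one-out via Sherman--Morrison, the H\"older splitting at $p=O(\log n)$, $q=1+1/O(\log n)$ to isolate the sparsity-sensitive factor, the restricted Bai--Silverstein inequality (Lemma~\ref{Bai_Silverstein}) for $\x_i^\top\U\Q_{-i}\U^\top\x_i - \tr(\Q_{-i})$, and the martingale bound (Lemma~\ref{azuma_conc_1}) for $\tr(\Q_{-i})-\E\tr(\Q_{-i})$. The only packaging difference is that you cast the result as solving a fixed-point $\E_\es[\Q]\approx (\bar\gamma/\gamma)\I$, whereas the paper writes the explicit three-term decomposition $\I-\E_\es[\Q]=\Z_0+\Z_1+\Z_2$ and bounds each term (your step 1 corresponds to $\Z_0$, the $\Q_{-i}\to\Q$ correction to $\Z_1$, and your H\"older step to $\Z_2$); the paper also handles the $\es$-vs-$\x_i$ dependence by partitioning the rows into three blocks rather than your $m$-way union, which is slicker but not materially different.
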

\begin{proof}
Let $\S_{-i}$ denote $\S_m$ without the $i^{th}$ row, and $\S_{-ij}$ denote $\S_m$ with the $i^{th}$ and $j^{th}$ rows removed. Let $\Qm = (\gamma\U^\top\S_{-i}^\top\S_{-i}\U)^{-1}$ and $\Qmj =(\gamma\U^\top\S_{-ij}^\top\S_{-ij}\U)^{-1} $. We proceed with the same proof strategy as adopted in \cite{derezinski2021sparse}. We define the events $\mathcal{E}_j$ as follows:
$$\mathcal{E}_j = \frac{3}{m}\U^\top\left(\sum_{i=t(j-1)+1}^{tj}\x_i\x_i^\top\right)\U \succeq \frac{1}{2}\I, \ j=1,2,3 \  \ \ \ , 
 \mathcal{E} = \land_{j=1}^{3}\mathcal{E}_j.$$
Note that event $\mathcal{E}_j$ means that the sketching matrix with just $(1/3)^{rd}$ rows (scaled to maintain unbiasedness of the sketch) from $\S_m$ satisfies a lower spectral approximation of $\U^\top\U=\I$. Also we notice that events $\mathcal{E}_1, \mathcal{E}_2,\mathcal{E}_3$ are independent, and for any pair $(i,j)$ there exists at least one event $\mathcal{E}_k, \ k \in \{1,2,3\}$ such that $\mathcal{E}_k$ is independent of both $\x_i$ and $\x_j$. Furthermore conditioned on $\mathcal{E}_k$ we have
\begin{align*}
    \Qm \preceq 6\cdot\I_d  \ \ \text{and} \ \ \Qmj \preceq 6\cdot\I_d.
\end{align*}
Note that as guaranteed in Theorem \ref{subspace_embedding}, we have $\Pr(\mathcal{E}_k) \geq 1-\delta'$ for all $k$ and therefore $\Pr(\mathcal{E}) \geq 1-\delta$ with $\delta'= \delta/3$. Let $\E_{\mathcal{E}}$ denote the expectation conditioned on the event $\mathcal{E}$.
\begin{align*}
    \I- \E_\es[\Q] &= -\E_\es[\Q] + \gamma\E_\es[\Q\U^\top\S_m^\top\S_m\U]\\
    &= -\E_\es[\Q] + \gamma\E_\es[\Q\U^\top\x_i\x_i^\top\U]  \\
    & = -\E_\es[\Q] + \gamma\E_\es\big[\frac{\Qm\U^\top\x_i\x_i^\top\U}{1+\frac{\gamma}{m}\x_i^\top\U\Qm\U^\top\x_i}\big]\\
    & = \underbrace{\E_\es[\Qm\U^\top(\x_i\x_i^\top-\I)\U]}_{\Z_0} + \underbrace{\E_\es[\Qm-\Q]}_{\Z_1} + \underbrace{\E_\es\big[\big(\frac{\gamma}{\gamma_i}-1\big)\Qm\U^\top\x_i\x_i^\top\U\big]}_{\Z_2}
\end{align*}
where $\gamma_i = 1+\frac{\gamma}{m}\x_i^\top\U\Qm\U^\top\x_i$. The second equality follows by noting that $\S_m^\top\S_m = \frac{1}{m}\sum_{i=1}^{m}{\x_i\x_i^\top}$ and using linearity of expectation. The third equality holds due to the application of Sherman Morrison's (Lemma \ref{Sherman_Morrison}) formula on $\Q = (\gamma\U^\top\S_{-i}^\top\S_{-i}\U + \frac{\gamma}{m}\U^\top\x_i\x_i^\top\U)^{-1}$. We start by upper bounding $\Z_0$ and use the following from \cite{derezinski2021sparse}.
\begin{lemma}[Upper bound on $\|\Z_0\|$]
For any $k>0$ we have,
    \begin{align*}
        \|\E_\es[\Qm\U^\top(\x_i\x_i^\top-\I)\U]\| \leq 12 \left(k \delta' + \int_{k}^{\infty}{\Pr(\x_i^\top\U\U^\top\x_i \geq x)dx}\right).
    \end{align*}
\end{lemma}
Using Chebyshev's inequality we have, $\Pr(\x_i^\top\U\U^\top\x_i \geq x) \leq \frac{\text{Var}(\x_i^\top\U\U^\top\x_i)}{x^2}$. Using Restricted Bai-Silverstein inequality for $(s,\beta_1,\beta_2)$-approximate LESS embeddings i.e., Lemma \ref{Bai_Silverstein} (proved in Lemma \ref{Bai_Silverstein_LESS}) with $p=2$, we have $\text{Var}(\x_i^\top\U\U^\top\x_i) \leq cd\cdot \left(1+\frac{d\log(d/\delta)}{s}\right)$ for some absolute constant $c$. Let $k = m^2$ and $\delta' < \frac{1}{m^4}$ we get,
\begin{align}
    \|\Z_0\| = \mathcal{O}\left(\frac{1}{m^2} + \frac{d}{m^2} +\frac{d^2\log(d/\delta)}{sm^2} \right) = \mathcal{O}\left(\frac{d}{m^2} + \frac{d^2\log(d/\delta)}{sm^2}\right). \label{Z0_bound}
\end{align}
We use the bound on the term $\|\Z_1\|$ directly from \cite{derezinski2021sparse}, provided below as a Lemma.
\begin{lemma}[Upper bound on $\|\Z_1\|$, \cite{derezinski2021sparse}]
    \begin{align}
        \|\E_\ep[\Qm-\Q]\| = \mathcal{O}(1/m). \label{Z1_bound}
    \end{align}
\end{lemma}
It remains to upper bound $\|\Z_2\|$.
\begin{align*}
    \|\Z_2\| = \|\E_\es\big[\big(\frac{\gamma}{\gamma_i}-1\big)\Qm\U^\top\x_i\x_i^\top\U\big]\| \leq \sup_{\|\v\|=1, \ \|\z\|=1}\E_{\es}[|\frac{\gamma}{\gamma_i}-1|\cdot |\v^\top\Qm\U^\top\x_i\x_i^\top\U\z|].
\end{align*}
Applying H\"older's inequality with $p = \mathcal{O}(\log(n))$ and $q = \frac{p}{p-1} =1+\Delta$ for $\Delta = \frac{1}{\mathcal{O}(\log(n))}$, we get,
\begin{align}
    \|\Z_2\| &\leq \sup_{\|\v\|=1, \ \|\z\|=1} \left(\E_{\es} \big[|\frac{\gamma}{\gamma_i}-1|^p\big]\right)^{1/p}\cdot\left(\E_\es\big[|\v^\top\Qm\U^\top\x_i\x_i^\top\U\z|^q\big]\right)^{1/q} \nonumber\\
&\leq  \left(\E_{\es} \big[|\frac{\gamma}{\gamma_i}-1|^p\big]\right)^{1/p}\cdot\underbrace{\sup_{\|\v\|=1}\left(\E_\es\big[|\v^\top\Qm\U^\top\x_i|^{2q}\big]\right)^{1/2q}}_{\mathcal{O}(1)}\cdot \underbrace{\sup_{\|\z\|=1}\left(\E_\es\big[|\x_i^\top\U\z|^{2q}\big]\right)^{1/2q}}_{\mathcal{O}(1)} \label{Z2_1}
\end{align}
where we used Cauchy-Schwarz inequality on $\E_\es\big[|\v^\top\Qm\U^\top\x_i\x_i^\top\U\z|^q\big]$. Now note that $\x_i = \diag(\xib)\cdot\y_i$, we have $\|\x_i\| =\poly(n)$ and therefore $\|\x_i\|^{\Delta} = \mathcal{O}(1)$. We now show the terms involving exponents depending on $q$ are $\mathcal{O}(1)$ as highlighted in the inequality (\ref{Z2_1}).
\begin{align*}
    \left(\sup_{\|\v\|=1}\E_\es\big[|\v^\top\Qm\U^\top\x_i|^{2q}\big]\right)^{1/2q} &\leq \left(\sup_{\|\v\|=1}\E_\es\big[|\v^\top\Qm\U^\top\x_i|^{2}\cdot\big|\v^\top\Qm\U^\top\x_i|^{2\Delta}]\right)^{1/2q}\\
    &\leq \left(\sup_{\|\v\|=1}\E_\es\big[|\v^\top\Qm\U^\top\x_i|^{2}\cdot\big\|\v^\top\Qm\U^\top\|^{2\Delta}\cdot\|\x_i\|^{2\Delta}]\right)^{1/2q}\\
    &\leq \mathcal{O}(1)\cdot \left(\E_{\es}\big[\|\Qm\U^\top\x_i\|^2\cdot\|\Qm\U^\top\|^{2\Delta}\big]\right)^{1/2q}\\
    &\leq  \mathcal{O}(1)\cdot \left(2\cdot\E_{\ep}\big[\|\Qm\U^\top\x_i\|^2\cdot\|\Qm\U^\top\|^{2\Delta}\big]\right)^{1/2q}.
\end{align*}
Here $\ep$ is an event independent of $\x_i$. Without loss of generality, we can assume that $\ep = \eone \wedge \mathcal{E}_2$. We first condition on $\Qm$ and take expectation over $\x_i$. Also note that $\Qm$ and $\x_i$ are independent and event $\ep$ is independent of $\x_i$, and furthermore $\E_\ep[\x_i^\top\x_i]= \E[\x_i^\top\x_i]=1$. We get,
\begin{align*}
\left(\sup_{\|\v\|=1}\E_\es\big[|\v^\top\Qm\U^\top\x_i|^{2q}\big]\right)^{1/2q} \leq \mathcal{O}(1)\cdot \left(\E_\ep[\|\Qm\U^\top\|^{2q}]\right)^{1/2q}.
\end{align*}
Now we use that conditioned on $\ep$, $\|\Qm\| \leq 6$ and $\|\U^\top\| = 1$, we get,
\begin{align*}
\left(\sup_{\|\v\|=1}\E_\es\big[|\v^\top\Qm\U^\top\x_i|^{2q}\big]\right)^{1/2q} = \mathcal{O}(1).
\end{align*}
Similarly, using $\|\x_i\|^{\Delta} = \mathcal{O}(1)$ and $\E_\ep[\x_i^\top\x_i]= \E[\x_i^\top\x_i]=1$,
\begin{align*}
\sup_{\|\z\|=1}\left(\E_\es\big[|\x_i^\top\U\z|^{2q}\big]\right)^{1/2q} = \mathcal{O}(1).
\end{align*}
Now we prove an upper bound on $ \left(\E_{\es} \left[\left(\frac{\gamma}{\gamma_i}-1\right)^p\right]\right)^{1/p}$. Without loss of generality, we assume that $p$ is even. We have,
\begin{align*}
    \E_\es\left[\left(\frac{\gamma}{\gamma_i}-1\right)^p\right] \leq 2\cdot\E_\ep\left[\left(\frac{\gamma}{\gamma_i}-1\right)^p\right]
\end{align*}
where $\ep$ is event independent of $\x_i$. The above can be upper bounded as,
\begin{align}
    \left(\E_\ep\left[\left(\frac{\gamma}{\gamma_i}-1\right)^p\right]\right)^{1/p} &\leq \left(\E_\ep\left[\left(\gamma-\gamma_i\right)^p\right]\right)^{1/p} = \left(\E_\es\left[\left(\gamma-\bar{\gamma} + \bar{\gamma} -\gamma_i\right)^p\right]\right)^{1/p}\nonumber \\
    & \leq
|\gamma-\bar{\gamma}| + \left(\E_\ep\left[(\gamma_i - \bar{\gamma})^p\right]\right)^{1/p} \label{Z2_2}
\end{align}
where $\bar{\gamma} = 1 + \frac{\gamma}{m} \E_\ep\left(\x_i^\top\U\Qm\U^\top\x_i\right)$. As $\ep$ is independent of $\x_i$ and $\Qm$ is independent of $\x_i$, we get $\E_\ep\left(\x_i^\top\U\Qm\U^\top\x_i\right) = \E_\ep\tr(\Qm)$. Therefore, $\bar{\gamma} = 1 + \frac{\gamma}{m}\E_\ep\tr(\Qm)$. We now aim to upper bound $ \left(\E_\ep\left[(\gamma_i - \bar{\gamma})^p\right]\right)^{1/p}$ as,
\begin{align*}
    \left(\E_\ep\left[(\gamma_i - \bar{\gamma})^p\right]\right)^{1/p} \leq \left(\frac{\gamma}{m}\right)\cdot\left[\left(\E_\ep \left[\left(\tr(\Qm) - \x_i^\top\U\Qm\U^\top\x_i\right)^p\right]\right)^{1/p} + \left(\E_\ep\left[\tr(\Qm) - \E_\ep\tr(\Qm)\right]^p\right)^{1/p}\right].
\end{align*}
Using our new Restricted Bai-Silverstein inequality from Lemma \ref{Bai_Silverstein} (restated as Lemma \ref{Bai_Silverstein_LESS} and proven in Appendix \ref{p_BS}), we have 
$$\left(\E_\ep \left[\left(\tr(\Qm) - \x_i^\top\U\Qm\U^\top\x_i\right)^p\right]\right)^{1/p} < c\cdot p^3\sqrt{d}\cdot\left(1+\sqrt{\frac{dp\log(d/\delta)}{s}}\right).$$
We now consider $\left(\E_\ep\left[\tr(\Qm) - \E_\ep\tr(\Qm)\right]^p\right)^{1/p}$. In Lemma \ref{azuma_conc_1} (restated below as Lemma \ref{azuma_conc}), we show that $|\tr(\Qm) - \E_\ep\tr(\Qm)| \leq \frac{c'\gamma}{\sqrt{m}}\cdot d\log^{4.5}(m/\delta)$ with probability at least $1-\delta$. Conditioned on this high-probability event we have,
\begin{align*}
    \left(\E_\ep\left[\tr(\Qm) - \E_\ep\tr(\Qm)\right]^p\right)^{1/p} \leq \frac{c'\gamma}{\sqrt{m}}\cdot d\log^{4.5}(m/\delta)
\end{align*}
for an absolute constant $c' >0$. Therefore we get with probability at least $1-\delta$,
\begin{align*}
\left(\E_\ep\left[(\gamma_i - \bar{\gamma})^p\right]\right)^{1/p} \leq \frac{\gamma}{m} \left[ c\cdot p^3\sqrt{d}\cdot\left(1+\sqrt{\frac{dp\log(d/\delta)}{s}}\right) + \frac{c'\gamma}{\sqrt{m}}\cdot d\log^{4.5}(m/\delta) \right].
\end{align*}
As $m >d$, we get $\left(\E_\ep\left[(\gamma_i - \bar{\gamma})^p\right]\right)^{1/p} = \mathcal{O}\left(\frac{ \sqrt{d}\log^{4.5}(n/\delta)}{m}\cdot\left(1+\sqrt{\frac{d}{s}}\right)\right)$. 
Also using the analysis in \cite{derezinski2021sparse} for upper bounding $|\gamma - \bar{\gamma}|$, we get a matching upper bound on $|\gamma-\bar{\gamma}|$ as follows:
\begin{align*}
    |\gamma - \bar{\gamma}| =\mathcal{O}\left(\frac{ \sqrt{d}\log^{4.5}(n/\delta)}{m}\cdot\left(1+\sqrt{\frac{d}{s}}\right)\right).
\end{align*}

Substituting these bounds in (\ref{Z2_2}) and then in (\ref{Z2_1}) we get,
\begin{align}
    \|\Z_2\| = \mathcal{O}\left(\frac{ \sqrt{d}\log^{4.5}(n/\delta)}{m}\cdot\left(1+\sqrt{\frac{d}{s}}\right)\right).\label{Z2_bound}
\end{align}
Combining the upper bounds for $\Z_0,\Z_1$ and $\Z_2$ using relations (\ref{Z0_bound},\ref{Z1_bound},\ref{Z2_bound}) we conclude our proof.
\end{proof}
We now provide the proof of Lemma \ref{azuma_conc_1}, which we restate in the following Lemma.
\begin{lemma}{\label{azuma_conc}}
For given $\delta>0$ and matrix $\Qm$ we have with probability $1-\delta$:
\begin{align*}
    |\tr(\Qm) - \E_\ep\tr(\Qm)| \leq \frac{c'\gamma}{\sqrt{m}}\cdot d \log^{4.5}(m/\delta)
\end{align*}
for an absolute constant $c'>0$.
\end{lemma}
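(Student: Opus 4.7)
}
The plan is to express $\tr(\Qm)$ as the final value of a Doob martingale obtained by revealing the rows of the sketch one at a time, and then apply Azuma's inequality (Lemma \ref{azuma}) after bounding the martingale differences using Sherman--Morrison together with the restricted Bai--Silverstein inequality (Lemma \ref{Bai_Silverstein}). Concretely, I would index the $m-1$ rows of $\S_{-i}$ by $k=1,\dots,m-1$, let $\mathcal{F}_k=\sigma(\x_1,\dots,\x_k)$, and define $Y_k=\E[\tr(\Qm)\mid\mathcal{F}_k]$, so that $Y_0=\E\tr(\Qm)$ and $Y_{m-1}=\tr(\Qm)$. The event $\ep$ conditioned on will be chosen to contain both the subspace embedding event guaranteeing $\Qm,\Q_{-ik}\preceq O(1)\cdot\I$ and the ``good norm'' event described below; this is consistent with the conditioning used in the previous subsections.

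The next step is to control $|Y_k-Y_{k-1}|$. Standard coupling gives $|Y_k-Y_{k-1}|\le \E[|\tr(\Qm)-\tr(\Qm')|\mid \mathcal{F}_k]$, where $\Qm'$ is obtained by replacing $\x_k$ with an i.i.d.\ copy $\x_k'$. By triangle inequality, it suffices to bound $|\tr(\Qm)-\tr(\Q_{-ik})|$ and its primed analog, where $\Q_{-ik}$ is the leave-two-out matrix. Applying Sherman--Morrison (Lemma \ref{Sherman_Morrison}) to the rank-one update $\Qm^{-1}=\Q_{-ik}^{-1}+\tfrac{\gamma}{m}\U^\top\x_k\x_k^\top\U$ yields
\begin{align*}
\bigl|\tr(\Qm)-\tr(\Q_{-ik})\bigr|\;=\;\frac{(\gamma/m)\,\x_k^\top\U\Q_{-ik}^2\U^\top\x_k}{1+(\gamma/m)\,\x_k^\top\U\Q_{-ik}\U^\top\x_k}\;\le\;\frac{\gamma}{m}\,\|\Q_{-ik}\|^2\,\|\U^\top\x_k\|^2.
\end{align*}
Conditioning on the subspace embedding event reduces this to $\tfrac{O(\gamma)}{m}\|\U^\top\x_k\|^2$, so I only need a uniform (over $k$) high-probability upper bound on $\|\U^\top\x_k\|^2$.

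I would obtain this by invoking Lemma \ref{Bai_Silverstein} with $\C=\I$ and $p=\Theta(\log(m/\delta))$, which yields $(\E|\x_k^\top\U\U^\top\x_k-d|^p)^{1/p}=O(\sqrt d\,\log^3(m/\delta)(1+\sqrt{d\log^2(m/\delta)/s}))$. Markov's inequality and a union bound over the $m$ rows then imply that with probability $\ge 1-\delta/2$, $\|\U^\top\x_k\|^2\le O(d\log^4(m/\delta))$ simultaneously for all $k$. Absorbing this into the good event, the martingale differences satisfy $c_k=O(\gamma d\log^4(m/\delta)/m)$, so $\sum_k c_k^2=O(\gamma^2 d^2\log^8(m/\delta)/m)$. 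Azuma's inequality then gives
\begin{align*}
\Pr\bigl(|Y_{m-1}-Y_0|\ge \lambda\bigr)\;\le\;2\exp\!\Bigl(-\tfrac{\lambda^2 m}{C\gamma^2 d^2\log^8(m/\delta)}\Bigr),
\end{align*}
and setting the right-hand side to $\delta/2$ yields $\lambda=O(\gamma d\log^{4.5}(m/\delta)/\sqrt m)$, as required.

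The main obstacle will be justifying the use of Azuma when the per-step bound on $|Y_k-Y_{k-1}|$ only holds on a high-probability event rather than almost surely. I would handle this by a standard truncation: replace $\x_k$ with $\x_k\cdot\mathbf{1}[\|\U^\top\x_k\|^2\le M]$ for $M=O(d\log^4(m/\delta))$, run the martingale argument on the truncated process (whose differences are almost surely bounded by $c_k$ above), and then union-bound over the rare event that some $\|\U^\top\x_k\|^2$ exceeds $M$. A minor subtlety is that the event $\ep$ (via the subspace embedding) involves all rows jointly, so care is needed to ensure the filtration-based conditioning remains valid; this is addressed by enlarging $\ep$ to include the leave-two-out subspace embedding events for every $k$, which still has probability $\ge 1-\delta$ by Lemma \ref{subspace_embedding} and a union bound.
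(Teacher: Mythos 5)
Your proposal matches the paper's proof essentially step for step: the same Doob martingale obtained by revealing the rows of $\S_{-i}$ one at a time, the same Sherman--Morrison bound $0\le\tr(\Q_{-ik})-\tr(\Qm)\le\frac{\gamma}{m}\x_k^\top\U\Q_{-ik}^2\U^\top\x_k$ for the increments, the same high-moment Markov argument with $p=\Theta(\log(m/\delta))$ via the restricted Bai--Silverstein inequality (Lemma \ref{Bai_Silverstein}) plus a union bound over rows, and Azuma at the end, with only the cosmetic difference that you bound the quadratic form by $\|\Q_{-ik}\|^2\|\U^\top\x_k\|^2$ and apply Lemma \ref{Bai_Silverstein} with $\C=\I$ rather than with $\C=\Q_{-ik}^2$ directly. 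Your explicit truncation step to justify Azuma when the increment bound holds only on a high-probability event is, if anything, handled more carefully than in the paper, which applies Azuma to the high-probability increment bound without further comment.
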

\begin{proof}
Writing $\tr(\Qm) - \E_\ep\tr(\Qm)$ as a finite sum, we have,
\begin{align*}
    \tr(\Qm) - \E_\ep\tr(\Qm) = \sum_{j=1}^{m}{\E_{\ep,j}\tr(\Qm) - \E_{\ep,j-1}\tr(\Qm)}.
\end{align*}
Denoting $X_j = \E_{\ep,j}\tr(\Qm)$ with $X_0 = \E_{\ep}\tr(\Qm)$, we have the following formulation
\begin{align*}
     \tr(\Qm) - \E_\ep\tr(\Qm) = \sum_{j=1}^{m}{X_j-X_{j-1}}
\end{align*}
with $\E_{\ep,j-1}[X_j] = X_{j-1}$. The random sequence $X_j$ forms a martingale and $\tr(\Qm) - \E_\ep\tr(\Qm) = X_m -X_0$. We find an upper bound on $|X_j-X_{j-1}|$. To achieve that we note,
\begin{align*}
    X_j-X_{j-1}=\E_{\ep,j}\tr(\Qm) - \E_{\ep,j-1}\tr(\Qm) = -(\E_{\ep,j}-\E_{\ep,j-1})(\tr(\Qmj-\Qm) -\tr(\Qmj)).
\end{align*}
Therefore with $\psi_j =(\E_{\ep,j}-\E_{\ep,j-1})\tr(\Qmj-\Qm) $ and $\chi_j=-(\E_{\ep,j}-\E_{\ep,j-1})\tr(\Qmj)$, we have,
\begin{align*}
    |X_j -X_{j-1}| \leq |\psi_j +\chi_j| \leq |\psi_j| + |\chi_j|.
\end{align*}
From \cite{derezinski2021sparse}, we have $|\chi_j| \leq \frac{1}{m}$.
We now prove an upper bound on $\psi_j$. 
 \begin{align*}
    0 \leq \tr(\Qmj) - \tr(\Qm) &= \tr\left(\frac{\frac{\gamma}{m}\Qmj\U^\top\x_j\x_j^\top\U\Qmj \U^\top\U}{1+\frac{\gamma}{m}\x_j^\top\U\Qmj\U^\top\x_j}\right)\\
    & = \frac{\frac{\gamma}{m}\x_j^\top(\U\Qmj\U^\top)^2\x_j}{1+\frac{\gamma}{m}\x_j^\top\U\Qmj\U^\top\x_j} \leq \frac{\gamma}{m}\x_j^\top\U\Qmj^2\U^\top\x_j.
\end{align*}
Now look at the term $\x_j^\top\U\Qmj^2\U^\top\x_j$. For any $a>0$ and any $k>0$, by Markov's inequality we have,
\begin{align*}
    \Pr\left(\x_j^\top\U\Qmj^2\U^\top\x_j \geq a\right) &\leq \frac{\E[|\x_j^\top\U\Qmj^2\U^\top\x_j|^k]}{a^k} \\
    &\leq \frac{2^{k-1}\cdot\E[|\x_j^\top\U\Qmj^2\U^\top\x_j - \tr(\Qmj^2)|^k}{a^k} + \frac{2^{k-1}\cdot\E[(\tr(\Qmj^2))^k]}{a^k}.
\end{align*}
Let $\eone$ be an event independent of both $\x_i$ and $\x_j$ and have probability at least $1-\delta'$. Therefore we have $\E[\cdot] \leq 2\cdot\E_{\eone}[\cdot]$. We get,
\begin{align}
    \Pr\left(\x_j^\top\U\Qmj^2\U^\top\x_j \geq a\right) \leq \frac{2^{k}\cdot\E_\eone[|\x_j^\top\U\Qmj^2\U^\top\x_j - \tr(\Qmj^2)|^k}{a^k} + \frac{2^{k}\cdot\E_\eone[(\tr(\Qmj^2))^k]}{a^k} .\label{burk_2}
\end{align}
We now upper bound both terms on the right-hand side separately. Considering the term $ \E_\eone[|\x_j^\top\U\Qmj^2\U^\top\x_j - \tr(\Qmj^2)|^k$ and using Lemma \ref{Bai_Silverstein} we get,
\begin{align}
\E_\eone[|\x_j^\top\U\Qmj^2\U^\top\x_j - \tr(\Qmj^2)|^k] \leq c^{k}\cdot k^{3k}\cdot\left(\frac{d^2k\log(d/\delta)}{s} + d\right)^{k/2}. \label{burk_3}
\end{align}
Now considering the second term in (\ref{burk_2}), i.e., $\E_\eone[(\tr(\Qmj^2))^k]$. We use that conditioned on $\eone$, we have $\Qmj \preceq 6\I_d$. Therefore,
\begin{align}
    \E_\eone[(\tr(\Qmj^2))^k] \leq 6^{k}\cdot d^{k}. \label{burk_4}
\end{align}
Substituting (\ref{burk_3}) and (\ref{burk_4}) in (\ref{burk_2}),
\begin{align*}
    \Pr\left(\x_j^\top\U\Qmj^2\U^\top\x_j \geq a\right) \leq \frac{2^k\cdot c^{k}\cdot k^{3k}\cdot\left(\frac{d^2k\log(d/\delta))}{s}+d\right)^{k/2}}{a^k} + \frac{2^k\cdot6^k\cdot d^k}{a^k}
\end{align*}
\begin{align*}
    \Pr\left(\x_j^\top\U\Qmj^2\U^\top\x_j \geq a\right) \leq \frac{2^{k}\cdot c^k\cdot k^{3k} \cdot d^k\cdot (k\log(d/\delta))^{k/2}}{a^k}
\end{align*}
for some potentially different constant $c$. Consider $k = \left\lceil\frac{\log(m/\delta)}{\log(2)}\right\rceil$ and $a = 4\cdot c\cdot k^3 \cdot d \cdot \sqrt{k\log(d/\delta)}$ and we have with probability at least $1-\delta/m$,
\begin{align*}
    \x_j^\top\U\Qmj^2\U^\top\x_j \leq 4\cdot c\cdot k^3\cdot d \cdot \sqrt{\log(kd/\delta)}.
\end{align*}
This implies that for an absolute constant $c'$ we have,
\begin{align*}
     |\tr(\Qmj) - \tr(\Qm)| \leq c'\cdot \frac{\gamma}{m}\cdot d \cdot\log^{3.5}(m/\delta).
\end{align*}
Therefore we now have an upper bound for $|\psi_j|$
\begin{align*}
    |\psi_{j}| = |\E_{\ep,j}-\E_{\ep,j-1}(\tr(\Qmj - \Qm))| \leq 2c'\cdot \frac{\gamma}{m}\cdot d\cdot\log^{3.5}(m/\delta).
\end{align*}
This means for all $j$, we have,
\begin{align*}
    |X_j - X_{j-1}| \leq 4c'\cdot \frac{\gamma}{m}\cdot d\cdot\log^{3.5}(m/\delta)
\end{align*}
with probability at least $1-\delta$. Consider $c_j = 4c'\cdot \frac{\gamma}{m}\cdot d\cdot\log^{3.5}(m/\delta) $. Then $\sum_{j=1}^{m}{c_j^2} =\frac{\gamma^2}{m}\cdot 16c'^2\cdot d^2 \log^7(m/\delta)$. Applying Azuma's inequality (Lemma \ref{azuma}) with $\lambda =\frac{\gamma}{\sqrt{m}}\cdot 4c'\cdot d\cdot \log^{3.5}(m/\delta)$. We get with probability at least $1-\delta$ and for potentially different absolute constant $c' >0$:
\begin{align*}
   |X_m-X_0| \leq \frac{c'\gamma}{\sqrt{m}}\cdot d \log^{4.5}(m/\delta).
\end{align*}
This concludes our proof.
\end{proof}
\section{Least squares bias analysis: Proof of Theorem \ref{least_squares_result}}{\label{p_LS}}
In this section, we aim to prove Theorem \ref{least_squares_result}. Let $\x^* = \argmin_{\x}\|\U\x-\b\|^2$ where $\U \in \R^{n \times d}$ is the data matrix containing $n$ data points and $\b \in \R^n$ is a vector containing labels corresponding to $n$ data points. We adopt the same notations as used in the proof of Theorem \ref{inversion_bias_result}. Let $\tilde\x = \argmin_{\x}\|\S_m\U\x-\S_m\b\|^2$. Furthermore for any $\x \in \R^d$ we can find the loss at $\x$ as $L(\x) = \|\U\x-\b\|^2$. Additionally, we use $\r$ to denote the residual $\b-\U\x^*$. We aim to provide an upper bound on the bias introduced due to this sketch and solve paradigm, i.e. $L(\E(\tilde\x))-L(\x^*)$. Similar to Theorem \ref{least_squares_result} we condition on the high probability event $\es$ and consider $L(\E_\es(\tilde\x))-L(\x^*)$. By Pythagorean theorem, we have $ L(\E_\es[\tilde\x])-L(\x^*) = \nsq{\U(\E_\es[\tilde\x])-\U\x^*}$. Also,
\begin{align*}
    L(\E_\es[\tilde\x])-L(\x^*) &= \nsq{\U(\E_\es[\tilde\x])-\U\x^*} = \nsq{\E_\es[\tilde\x]-\x^*}\\
    &=\nsq{\E_\es[(\U^\top\S_m^\top\S_m\U)^{-1}\U^\top\S_m^\top\S_m\b]-\U^\top\b}\\
    & = \nsq{\E_\es[(\U^\top\S_m^\top\S_m\U)^{-1}\U^\top\S_m^\top\S_m](\b-\U\U^\top\b)}.
\end{align*}
Note that $\b-\U\U^\top\b = \b-\U\x^* =\r$. We get,
\begin{align*}
     L(\E_\es[\tilde\x])-L(\x^*) &= \nsq{\E_\es[(\U^\top\S_m^\top\S_m\U)^{-1}\U^\top\S_m^\top\S_m]\r}.
\end{align*}
Consider $\Q= (\gamma\U^\top\S_m^\top\S_m\U)^{-1}$, $\Qm = (\gamma\U^\top\S_{-i}^\top\S_{-i}\U)^{-1}$and $\gamma = \frac{m}{m-d}$,
\begin{align*}
    \E_\es[(\U^\top\S_m^\top\S_m\U)^{-1}\U^\top\S_m^\top\S_m\r] &= \E_\es[\gamma(\gamma\U^\top\S_m^\top\S_m\U)^{-1}\U^\top\S_m^\top\S_m\r]\\
    &= \gamma \E_\es[\Q\U^\top\S_m^\top\S_m\r] \\
    &= \gamma \E_\es[\Q\U^\top\x_i\x_i^\top\r]
\end{align*}
where we used linearity of expectation in the last line combined with $\S_m^\top\S_m = \frac{1}{m}\sum_{i=1}^{m}{\x_i\x_i^\top}$. Using Sherman-Morrison formula (Lemma \ref{Sherman_Morrison}) we have $\Q\U^\top\x_i = \frac{\Qm\U^\top\x_i}{1+\frac{\gamma}{m}\x_i^\top\U\Qm\U^\top\x_i}$. Denote $\gamma_i = 1+\frac{\gamma}{m}\x_i^\top\U\Qm\U^\top\x_i$ and substitute we get,
\begin{align*}
\E_\es[(\U^\top\S_m^\top\S_m\U)^{-1}\U^\top\S_m^\top\S_m\r] =  \E_\es\left[\left(\frac{\gamma}{\gamma_i}\right)\Qm\U^\top\x_i\x_i^\top\r\right]
\end{align*}
\begin{align*}
\E_\es[(\U^\top\S_m^\top\S_m\U)^{-1}\U^\top\S_m^\top\S_m\r] =  \E_\es\left[\Qm\U^\top\x_i\x_i^\top\r\right] + \E_\es\left[\left(\frac{\gamma}{\gamma_i}-1\right)\Qm\U^\top\x_i\x_i^\top\r\right].
\end{align*}
So we get the following decomposition:
\begin{align}
    L(\E_\es[\tilde\x])-L(\x^*) &=\nsq{\E_\es[(\U^\top\S_m^\top\S_m\U)^{-1}\U^\top\S_m^\top\S_m]\r} \nonumber\\
    & = \nsq{\E_\es\left[\Qm\U^\top\x_i\x_i^\top\r\right] + \E_\es\left[\left(\frac{\gamma}{\gamma_i}-1\right)\Qm\U^\top\x_i\x_i^\top\r\right] }\nonumber\\
    &\leq 2\underbrace{\nsq{\E_\es\left[\Qm\U^\top\x_i\x_i^\top\r\right]}}_{\nsq{\Z_0\r}} + 2 \underbrace{\nsq{\E_\es\left[\left(\frac{\gamma}{\gamma_i}-1\right)\Qm\U^\top\x_i\x_i^\top\r\right]}}_{\nsq{\Z_2\r}}.
\end{align}
Note that a similar decomposition was considered in the proof of Theorem \ref{inversion_bias_result} (see Appendix \ref{p_inversion_bias}) with slightly different $\Z_0$ and $\Z_2$. We first bound $\norm{\Z_0\r}^2$ in the following argument. Without loss of generality, we assume that events $\eone$ and $\mathcal{E}_2$ are independent of $\x_i$ and $\ep = \eone \wedge\mathcal{E}_2$.
\begin{align*}
    \Z_0\r &= \E_\es[\Qm \U^\top\x_i\x_i^\top\r]= \frac{\E[\Qm \U^\top\x_i\x_i^\top\r\cdot\ind_\es]}{\Pr(\es)}\\
    & =  \frac{\E[\Qm \U^\top\x_i\x_i^\top\r\cdot\ind_\eone\cdot\ind_\etwo\cdot(1-\ind_{\neg\ethr})]}{\Pr(\es)}\\
    & = \frac{\E[\Qm \U^\top\x_i\x_i^\top\r\cdot\ind_\eone\cdot\ind_\etwo]}{\Pr(\es)} - \frac{\E[\Qm \U^\top\x_i\x_i^\top\r\cdot\ind_\eone\cdot\ind_\etwo\cdot\ind_{\neg\ethr})]}{\Pr(\es)}\\
    & = \frac{1}{1-\delta'} \left(\E_\ep[\Qm \U^\top\x_i\x_i^\top\r] -  \E_\ep[\Qm \U^\top\x_i\x_i^\top\r\cdot\ind_{\neg\ethr})]\right).
\end{align*}
Now note that in the first term $\Qm$ is independent of $\x_i$ and $\x_i$ is also independent of the event $\ep$. Using this with the fact that $\E[\x_i\x_i^\top]=\I$ we get $\E_\ep[\Qm \U^\top\x_i\x_i^\top\r] = \E_\ep[\Qm \U^\top\r]=0$, since $\U^\top\r=0$. Therefore,
\begin{align*}
    \norm{\Z_0\r}^{2} & \leq 2\norm{\E_\ep[\Qm \U^\top\x_i\x_i^\top\r\cdot\ind_{\neg\ethr}]}^2\\
    & \leq 72\cdot\|\E_{\ep}[\U^\top\x_i\x_i^\top\r\cdot\ind_{\neg\ethr}]\|^2.
\end{align*}
The last inequality holds because conditioned on $\ep$, we know that $\norm{\Qm} \leq 6$. Using Cauchy-Schwarz inequality we have,
\begin{align*}
    \norm{\Z_0\r}^{2}
    & \leq 72 \left(\sqrt{\E_\ep[\x_i^\top\U\U^\top\x_i\cdot\ind_{\neg\ethr}]}\cdot\sqrt{\E_\ep[\r^\top\x_i\x_i^\top\r]}\right)^2.
\end{align*}
Note that $\E_\ep[\r^\top\x_i\x_i^\top\r] = \norm{\r}^2$. Also, we can bound $\E_\ep[\x_i^\top\U\U^\top\x_i\cdot\ind_{\neg\ethr}]$ as following:
\begin{align*}
\E_\ep[\x_i^\top\U\U^\top\x_i\cdot\ind_{\neg\ethr}] &= \int_{0}^{\infty}{\Pr(\x_i^\top\U\U^\top\x_i\cdot\ind_{\neg\ethr}>x)}dx\\
& = \int_{0}^{y}{\Pr(\x_i^\top\U\U^\top\x_i\cdot\ind_{\neg\ethr}>x)}dx + \int_{y}^{\infty}{\Pr(\x_i^\top\U\U^\top\x_i\cdot\ind_{\neg\ethr}>x)}dx\\
&\leq y\delta' + \int_{y}^{\infty}{\Pr(\x_i^\top\U\U^\top\x_i>x)}dx.
\end{align*}
Using Chebyshev's inequality, $\Pr(\x_i^\top\U\U^\top\x_i \geq x) \leq \frac{\text{Var}(\x_i^\top\U\U^\top\x_i)}{x^2}$. By Restricted Bai-Silverstein, Lemma \ref{Bai_Silverstein} with $p=2$, we have $\text{Var}(\x_i^\top\U\U^\top\x_i) \leq c\cdot \left(d+\frac{d^2\log(d/\delta)}{s}\right)$ for some absolute constant $c$. Let $y = m^2$ and $\delta' < \frac{1}{m^4}$ we get,
\begin{align*}
\E[\x_i^\top\U\U^\top\x_i\cdot\ind_{\neg\ethr}] = \mathcal{O}\left(\frac{1}{m^2} + \frac{d}{m^2} + \frac{d^2\log(d/\delta)}{sm^2}\right) = \mathcal{O}\left(\frac{d}{m^2} + \frac{d^2\log(d/\delta)}{sm^2}\right).
\end{align*}
This finishes upper bounding $\|\Z_0\r\|^2$ as:
\begin{align}
    \|\Z_0\r\|^2 = \mathcal{O}\left(\frac{d}{m^2}+\frac{d^2\log(d/\delta)}{sm^2}\right)\cdot\|\r\|^2. \label{Z0r_bound}
\end{align}
Now we proceed with $\|\Z_2\r\|^2$,
\begin{align*}
    \|\Z_2\r\| = \norm{\E_\es\left[\left(\frac{\gamma}{\gamma_i}-1\right)\Qm\U^\top\x_i\x_i^\top\r\right]}.
\end{align*}
Applying H\"older's inequality with $p = \mathcal{O}(\log(n))$ and $q = 1+ \Delta$ where $\Delta=\frac{1}{\mathcal{O}(\log(n))}$, we get,
\begin{align*}
      \|\Z_2\r\| &\leq \left(\E_\es\big|\frac{\gamma}{\gamma_i}-1\big|^p\right)^{1/p}\cdot\left(\sup_{\|\v\|=1}\E_\es\left[\v^\top\Qm\U^\top\x_i\x_i^\top\r\right]^q\right)^{1/q}\\
      &\leq  \left(\E_\es\big|\frac{\gamma}{\gamma_i}-1\big|^p\right)^{1/p}\cdot\left(\E_\es\norm{\Qm\U^\top\x_i}^{2q}\right)^{1/2q}\cdot\left(\E_\es\norm{\x_i^\top\r}^{2q}\right)^{1/2q}\\
      & \leq  \left(\E_\es\big|\frac{\gamma}{\gamma_i}-1\big|^p\right)^{1/p}\cdot\left(2\cdot\E_\ep[\norm{\Qm\U^\top\x_i}^2\cdot\norm{\Qm\U^\top\x_i}^{2\Delta}]\right)^{1/2q}\cdot\left(2\cdot\E_\ep[\norm{\x_i^\top\r}^2\cdot\norm{\x_i^\top\r}^{2\Delta}]\right)^{1/2q}.
\end{align*}
Since $\norm{\x_i}= \poly(n)$, we have $\norm{\x_i}^{2\Delta} =\mathcal{O}(1)$ and therefore we have $\norm{\Qm\U^\top\x_i}^{2\Delta} = \mathcal{O}(1)\norm{\Qm\U^\top}^{2\Delta}$. Also $\E_{\ep}\norm{\Qm\U^\top\x_i}^2 = \E_\ep\norm{\Qm}^2$. Using that conditioned on $\ep$ we have $\norm{\Qm} \leq 6$, we get $\left(\E_\es\norm{\Qm\U^\top\x_i}^{2q}\right)^{1/2q} = \mathcal{O}(1)$. Similarly using $\norm{\x_i}^{2\Delta} =\mathcal{O}(1)$ we get $\left(\E_\es\norm{\x_i^\top\r}^{2q}\right)^{1/2q} = \mathcal{O}(1)\norm{\r}$. This gives us:
\begin{align*}
    \|\Z_2\r\| &\leq \mathcal{O}(1)\cdot\left(\E_\es\big|\frac{\gamma}{\gamma_i}-1\big|^p\right)^{1/p}\cdot\norm{\r}.
\end{align*}
Now using (\ref{Z2_bound}) from the proof of Theorem \ref{inversion_bias_result}, we get,
\begin{align*}
    \left(\E_\es\big|\frac{\gamma}{\gamma_i}-1\big|^p\right)^{1/p}=\mathcal{O}\left(\frac{ \sqrt{d}\log^{4.5}(n/\delta)}{m}\cdot\left(1+\sqrt{\frac{d}{s}}\right)\right).
\end{align*}
 Finally the bound for $\|\Z_2\r\|^2$ follows as:
\begin{align}
    \|\Z_2\r\|^2 = \mathcal{O} \left(\frac{d\log^9(n/\delta)}{m^2}\cdot\left(1+\frac{d}{s}\right)\right)\cdot\|\r\|^2 .\label{Z2r_bound}
\end{align}
Combining (\ref{Z0r_bound}) and (\ref{Z2r_bound}) we conclude our proof.

\section{Higher-Moment Restricted Bai-Silvestein (Lemma \ref{Bai_Silverstein})}{\label{p_BS}}
In this section, we prove Lemma \ref{Bai_Silverstein}. We need the following two auxiliary lemmas to derive the main theorem of this section. The first lemma uses Matrix-Chernoff concentration inequality to upper bound the spectral norm of $\U_{\xib}^{\top}\U_{\xib}$ where $\xib$ is a $(s,\beta_1,\beta_2)$-approximate leverage score sparsifier. The following result is a restated version of Lemma \ref{chernoff_bound}.

\begin{lemma}[Spectral norm bound with leverage score sparsifier]{\label{matrix_chernoff_2}} Let $\U \in \R^{n \times d}$ has orthonormal columns. Let $\xib$ be a $(s,\beta_1,\beta_2)$-approximate leverage score sparsifier for $\U$, and denote $\U_{\xib} = \diag(\xib)\U$. Then for any $\delta>0$ we have,
 \begin{align*}
        &\Pr\left(\norm{\U_{\xib}^\top\U_{\xib}} \geq \left(1+\frac{3 d\log(d/\delta)}{s}\right)\right) \leq \delta \ \ \ \ \text{if} \ s < d,\\
        &\Pr\left(\norm{\U_{\xib}^\top\U_{\xib}} \geq \left(1+3\log(d/\delta)\right)\right) \leq \delta  \ \ \ \ \text{if} \ s \geq d.
    \end{align*}
\end{lemma}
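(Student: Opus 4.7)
The plan is to express $\U_{\xib}^\top\U_{\xib}$ as a sum of independent random rank-one PSD matrices whose expectation is the identity, and then invoke the Matrix Chernoff inequality (Lemma~\ref{matrix_chernoff}). Since $b_i \in \{0,1\}$ implies $\xib_i^2 = b_i/p_i$, one can write
\begin{equation*}
\U_{\xib}^\top\U_{\xib} \;=\; \sum_{i=1}^n \frac{b_i}{p_i}\,\u_i\u_i^\top \;=:\; \sum_{i=1}^n \Z_i,
\end{equation*}
where $\E[b_i/p_i] = 1$ and $\sum_i \u_i\u_i^\top = \U^\top\U = \I$, so $\E[\sum_i \Z_i] = \I$. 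This is exactly the setting of concentration for sums of independent bounded PSD matrices.

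Next I would establish a uniform spectral norm bound $\|\Z_i\| \leq R$ using the $(\beta_1,\beta_2)$-approximate leverage score property (Definition~\ref{def:lev_sc}). Since $\|\Z_i\| = (b_i/p_i)\|\u_i\|^2 \leq l_i/p_i$, the argument splits on whether $p_i$ saturates at $1$ or equals $s\beta_1 \tilde l_i/d$. If $p_i = 1$, then $l_i/p_i = l_i \leq 1$ because leverage scores never exceed $1$; otherwise $p_i = s\beta_1 \tilde l_i/d < 1$, and using $l_i \leq \beta_1 \tilde l_i$ we get
\begin{equation*}
\frac{l_i}{p_i} \;\leq\; \frac{\beta_1 \tilde l_i \cdot d}{s\beta_1 \tilde l_i} \;=\; \frac{d}{s}.
\end{equation*}
Hence $R = \max\{1, d/s\}$, which equals $d/s$ in the regime $s < d$ and equals $1$ in the regime $s \geq d$. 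A pleasant feature is that $\beta_1$ cancels cleanly, so the bound does not degrade with the quality of the leverage score estimates; $\beta_2$ plays no role here because we are not controlling the expected number of nonzeros.

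Finally, I would apply Matrix Chernoff. After the routine rescaling $\W_i = n\Z_i$ to match the normalization of Lemma~\ref{matrix_chernoff}, the mean factor and the norm factor both pick up $n$, and one obtains
\begin{equation*}
\Pr\!\bigl(\lambda_{\max}(\U_{\xib}^\top\U_{\xib}) \geq 1+\epsilon\bigr) \;\leq\; d\exp\!\left(-\frac{\epsilon^2}{(2+\epsilon)R}\right).
\end{equation*}
Setting $\epsilon = 3d\log(d/\delta)/s$ when $s < d$ (so $R = d/s$), and $\epsilon = 3\log(d/\delta)$ when $s \geq d$ (so $R = 1$), one verifies $\epsilon \geq 3$ in either case, hence $\epsilon/(2+\epsilon) \geq 1/3$ and $\epsilon^2/((2+\epsilon)R) \geq \epsilon/(3R) \geq \log(d/\delta)$, which bounds the failure probability by $\delta$. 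The proof contains no real obstacle; the only items to verify carefully are the case split on $p_i$ (where the $\beta_1$ cancellation happens) and the rescaling bookkeeping in the Chernoff application.
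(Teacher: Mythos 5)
Your proposal is correct and follows essentially the same route as the paper: the identical decomposition $\U_{\xib}^\top\U_{\xib}=\sum_i (b_i/p_i)\u_i\u_i^\top$, the same case split on $p_i$ yielding $R=\max\{1,d/s\}$ with the $\beta_1$ cancellation, and the same application of Matrix Chernoff with the same choices of $\epsilon$. Your explicit rescaling to match the normalization of Lemma~\ref{matrix_chernoff} and the verification that $\epsilon^2/((2+\epsilon)R)\geq\log(d/\delta)$ are slightly more careful than the paper's presentation, but the argument is the same.
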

\begin{proof}
Writing $\U_{\xib}^\top\U_{\xib}$ as a sum of matrices we have,
    \begin{align*}
        \U_{\xib}^\top\U_{\xib} = \sum_{i=1}^{n}{\xi_i^2\u_i\u_i^\top} = \sum_{i=1}^{n}{\frac{b_i}{p_i}\u_i\u_i^\top} =\sum_{i=1}^{n}{\Z_i}
    \end{align*}
     where $p_i = \min\{1,\frac{s\beta_1\tilde l_i}{d}\}$ and $\Z_i = \frac{b_i}{p_i}\u_i\u_i^\top$. Note that $\Z_i's$ are independent random variables and $\E[\Z_i]=\u_i\u_i^\top$. Also $\sum_{i=1}^{n}\E{\Z_i} = \U^\top\U=\I_d$. If $p_i=1$ then $\Z_i= \u_i\u_i^\top$ and therefore $\norm{\Z_i} = \norm{\u_i}^2 \leq 1$. If $p_i <1$, we have $\norm{\Z_i}\leq \frac{1}{p_i}\norm{\u_i}^2 = \frac{d}{s\beta_1\tilde l_i}\cdot l_i$. As  $l_i \leq \beta_1\tilde l_i$, we get $\norm{\Z_i} \leq \frac{d}{s}$. Therefore $\norm{\Z_i} \leq \max\{1,\frac{d}{s}\}$ for all $i$. Denote $R=\max\{1,\frac{d}{s}\}$. We use Matrix Chernoff (Lemma \ref{matrix_chernoff}) to upper bound the largest eigenvalue of $\U_{\xib}^\top\U_{\xib}$. For any $\epsilon >0$, we have,
     \begin{align*}
\Pr\left(\lambda_{\max}\left(\sum_{i=1}^{n}{\Z_i} \right)\geq (1+\epsilon)\right) \leq d\cdot\exp\left(-\frac{\epsilon^2}{(2+\epsilon)R}\right).
     \end{align*}
    With $R=\max\{1,\frac{d}{s}\}$ and depending on the case whether $s \leq d$ or $s>d$ we get
    \begin{align*}
        &\Pr\left(\lambda_{\max}\left(\sum_{i=1}^{n}{\Z_i}\right) \geq \left(1+\frac{3d\log(d/\delta)}{s}\right)\right) \leq \delta \ \ \ \ \text{if} \ s \leq d,\\
        &\Pr\left(\lambda_{\max}\left(\sum_{i=1}^{n}{\Z_i}\right) \geq \left(1+3\log(d/\delta)\right)\right) \leq \delta \ \text{if} \ s > d.
    \end{align*}
    
\end{proof}
Let $\mathcal{A}_{\xib}$ denote the event $\norm{\U_{\xib}^\top\U_{\xib}} \leq 1+\frac{3 d\log(d/\delta)}{s}$, holding with probability at least $1-\delta$, for small $\delta>0$. In the next result we upper bound the higher moments of the trace of $\U_{\xib}\C\U_{\xib}^\top$ for any matrix $\C \preceq \mathcal{O}(1)\cdot\I$. We first prove the upper bound in the case when the high probability event $\mathcal{A}_{\xib}$ does not occur.
\begin{lemma}[Trace moment bound over small probability event]{\label{small_event_lemma}}
Let $k \in \mathbb{N}$ be fixed. Let $\U \in \R^{n \times d}$ have orthonormal columns. Let $\xib$ be a $(s,\beta_1,\beta_2)$-approximate leverage score sparsifier for $\U$. Let $\U_{\xib} = \diag(\xib)\U$. Also let $\Pr(\mathcal{A}_{\xib}) \geq 1 - \frac{1}{(12 d)^{4k}}$ and event $\ep$ be independent of the sparsifier $\xib$. Then we have,
\begin{align*}
\E_{\ep}\left[(\tr(\U_{\xib}\C\U_{\xib}^\top\U_{\xib}\C\U_{\xib}^\top))^k | \neg\mathcal{A}_{\xib} \right] \leq (4k)^{4k}
\end{align*}
for any fixed matrix $\C$ such that $0 \preceq\C \preceq 6\I$.
    \end{lemma}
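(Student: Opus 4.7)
The plan is to deterministically upper bound the quantity of interest in terms of a polynomial function of $\tr(\M)$ where $\M := \U_{\xib}^\top\U_{\xib}$, establish high-moment bounds on $\tr(\M)$ via Rosenthal's inequality, and finally leverage the extreme rareness of $\neg\mathcal{A}_{\xib}$ through H\"older's inequality to absorb all polynomial-in-$d$ factors against the probability bound.

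First, using $0 \preceq \C \preceq 6\I$, a standard PSD manipulation (writing the trace as a squared Frobenius norm) gives $\tr((\C\M)^2) \leq \|\C\|^2 \tr(\M^2) \leq 36\,\tr(\M^2) \leq 36\,\tr(\M)^2$, where the final inequality uses $\|\M\|_F \leq \|\M\|_*$ for PSD $\M$. Consequently, it suffices to bound $\E[\tr(\M)^{2k} \mid \neg\mathcal{A}_{\xib}]$, and the factor $36^k$ is harmlessly absorbed into the constant $(4k)^{4k}$.

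Second, I would write $\tr(\M) = \sum_i b_i l_i/p_i$ as a sum of independent nonnegative random variables, each almost surely bounded by $R := \max(1, d/s)$ (using $l_i \leq \beta_1 \tilde l_i$ and the definition of $p_i$), with mean $\sum_i l_i = d$. Rosenthal's inequality (Lemma \ref{Rosenthal}) then yields, for integer $q$,
\begin{align*}
    \E[\tr(\M)^q]^{1/q} \leq \frac{2q}{\log q}\,\max\!\left(d,\ \Big(\sum_i \E[X_i^q]\Big)^{1/q}\right) \leq \frac{2q}{\log q}\,\max\!\left(d,\ d^{1/q}R\right),
\end{align*}
since $\E[X_i^q] = l_i^q/p_i^{q-1} \leq l_i R^{q-1}$ and $\sum_i l_i = d$.

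Third, invoke H\"older's inequality with exponent $\alpha > 1$:
\begin{align*}
    \E\!\left[\tr(\M)^{2k}\ind_{\neg\mathcal{A}_{\xib}}\right] \leq \E[\tr(\M)^{2k\alpha}]^{1/\alpha} \cdot \Pr(\neg\mathcal{A}_{\xib})^{1 - 1/\alpha}.
\end{align*}
Dividing by $\Pr(\neg\mathcal{A}_{\xib})$ and plugging in $\Pr(\neg\mathcal{A}_{\xib}) \leq (12d)^{-4k}$, one picks $\alpha$ large enough (roughly $\alpha = \Theta(\log d)$ or a constant times $k$) so that $\E[\tr(\M)^{2k\alpha}]^{1/\alpha}$ is controlled by $(O(k\alpha))^{2k}$ times a power of $d$ that is absorbed by the probability factor $\Pr(\neg\mathcal{A}_{\xib})^{-1/\alpha}$, yielding the clean bound $(4k)^{4k}$.

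The main obstacle is the balancing of $\alpha$: the target $(4k)^{4k}$ has no $d$-dependence, so the polynomial-in-$d$ growth of the Rosenthal moment must exactly cancel against the $(12d)^{-4k(1-1/\alpha)}$ probability decay. This forces us to work in the regime where $\alpha$ is large enough that the $d^{1/(2k\alpha)}R$ branch of the Rosenthal bound dominates (rather than the $d$ branch), since otherwise the unconditional moment grows like $d^{2k}$ and the cancellation is lost. Verifying that this choice of $\alpha$ still produces a final exponent of at most $4k\log(4k)$ — absorbing all the factors of $R$, $\log q$, and $\beta_1 \beta_2$ — is the delicate step.
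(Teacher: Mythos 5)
Your first two steps (reducing $\tr(\U_{\xib}\C\U_{\xib}^\top\U_{\xib}\C\U_{\xib}^\top)$ to $36\,\tr(\M)^2$ with $\M=\U_{\xib}^\top\U_{\xib}$, and bounding the unconditional moments of $\tr(\M)=\sum_i b_i l_i/p_i$ via Rosenthal) match the paper's argument, which likewise bounds the trace product by $(\tr(\U_{\xib}\C\U_{\xib}^\top))^2$ and applies Rosenthal to $\sum_i (b_i/p_i)\u_i^\top\C\u_i$. The gap is in your third step. After dividing by $\Pr(\neg\mathcal{A}_{\xib})$ you are left with $\E[\tr(\M)^{2k\alpha}]^{1/\alpha}\cdot\Pr(\neg\mathcal{A}_{\xib})^{-1/\alpha}$, and you claim the powers of $d$ are ``absorbed by the probability factor $\Pr(\neg\mathcal{A}_{\xib})^{-1/\alpha}$.'' This cannot work: $\Pr(\neg\mathcal{A}_{\xib})^{-1/\alpha}\ge 1$ always, and the hypothesis $\Pr(\neg\mathcal{A}_{\xib})\le(12d)^{-4k}$ is an upper bound on the probability, hence a \emph{lower} bound on $\Pr(\neg\mathcal{A}_{\xib})^{-1/\alpha}$ --- the rarer the event, the larger this factor, and nothing in the hypotheses controls it from above. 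Moreover, the moment factor cannot be made $d$-free by tuning $\alpha$: by Jensen, $\E[\tr(\M)^{2k\alpha}]^{1/\alpha}\ge(\E[\tr(\M)])^{2k}=d^{2k}$, so the regime where the $d^{1/q}R$ branch ``dominates'' over the $d$ branch does not exist (the maximum in Rosenthal's bound is always at least $d$). Indeed, a genuinely normalized conditional expectation $\E[\,\cdot\mid\neg\mathcal{A}_{\xib}]$ need not admit any $d$-free bound under the stated hypotheses.

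The fix is to not divide. Despite the conditional-expectation notation, the lemma is invoked in the proof of Lemma \ref{Bai_Silverstein_LESS} to bound the \emph{restricted} expectation $\E\big[(\tr(\U_{\xib}\C\U_{\xib}^\top\U_{\xib}\C\U_{\xib}^\top))^{p/2}\,\ind_{\neg\mathcal{A}_{\xib}}\big]$, and that is what the paper's proof actually bounds: it writes this quantity as a tail integral, cuts at $y=(12d)^{4k}$, pays $y\cdot\Pr(\neg\mathcal{A}_{\xib})\le 1$ below the cut, and applies Chebyshev with the $4k$-th Rosenthal moment above it. Your H\"older step applied directly to the restricted expectation, say with $\alpha=2$, gives $\E[\tr(\M)^{2k}\ind_{\neg\mathcal{A}_{\xib}}]\le\E[\tr(\M)^{4k}]^{1/2}\,\Pr(\neg\mathcal{A}_{\xib})^{1/2}\le (Ck)^{2k}d^{2k}\cdot(12d)^{-2k}$, which is $d$-free and of the right order; this is the same higher-moment-versus-rare-event trade as the paper's Chebyshev step, just packaged differently. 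With that correction (and tracking the constant $36^k$ from your first step), your argument goes through.
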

\begin{proof}
    \begin{align*}
\E_{\ep}\left[(\tr(\U_{\xib}\C\U_{\xib}^\top\U_{\xib}\C\U_{\xib}^\top))^k | \neg \mathcal{A}_{\xib} \right] &= \int_{0}^{\infty}{\Pr((\tr(\U_{\xib}\C\U_{\xib}^\top\U_{\xib}\C\U_{\xib}^\top))^{k}\cdot\ind_{\neg\mathcal{A}_{\xib}} \geq x | \ep )dx} \\
& = \int_{0}^{\infty}{\Pr((\tr(\U_{\xib}\C\U_{\xib}^\top\U_{\xib}\C\U_{\xib}^\top))^k\cdot\ind_{\neg\mathcal{A}_{\xib}} \geq x)dx}.
\end{align*}
The last equality holds because $\ep$ is independent of $\xib$. Consider some fixed $y >0$.
\begin{align}
    \E_{\ep}\left[(\tr(\U_{\xib}\C\U_{\xib}^\top\U_{\xib}\C\U_{\xib}^\top))^k | \neg \mathcal{A}_{\xib} \right] \leq &\int_{0}^{y}{\Pr((\tr(\U_{\xib}\C\U_{\xib}^\top))^{2k}\cdot\ind_{\neg\mathcal{A}_{\xib}} \geq x)dx} \nonumber \\
    + &\int_{y}^{\infty}{\Pr((\tr(\U_{\xib}\C\U_{\xib}^\top))^{2k}\cdot\ind_{\neg\mathcal{A}_{\xib}} \geq x)dx}  \nonumber\\
    \leq &y\cdot\delta + \E[(\tr(\U_{\xib}\C\U_{\xib}^\top))^{4k}]\cdot\int_{y}^{\infty}{\frac{1}{x^2}dx}\label{small_prob}.
    \end{align} 
The last inequality holds because by Chebyshev's inequality $\Pr((\tr(\U_{\xib}\C\U_{\xib}^\top))^{2k} \geq x) \leq \frac{ \E[(\tr(\U_{\xib}\C\U_{\xib}^\top))^{4k}]}{x^2}$. Also note that,
\begin{align*}
(\tr(\U_{\xib}\C\U_{\xib}^\top))^{4k} =\left(\sum_{i=1}^{n}{\xi_i^2\u_i^\top\C\u_i} \right)^{4k}=\left(\sum_{i=1}^{n}{\frac{b_i}{p_i}\u_i^\top\C\u_i}\right)^{4k}.
\end{align*}
For $1\leq i \leq n$, let $R_i$ be random variables denoting $\frac{b_i}{p_i}\u_i^\top\C\u_i$. Then note that $R_i$ are independent random variables with $\E[R_i]=\u_i^\top\C\u_i$. Also $R_i$ are non-negative random variables with finite $(4k)^{th}$ moment. Using Rosenthal's inequality (Lemma \ref{Rosenthal}) we get,
\begin{align*}
    \E\left[\left(\sum_{i=1}^{n}{\frac{b_i}{p_i}\u_{i}^\top\C\u_{i}}\right)^{4k}\right] &\leq 2^{4k}\cdot(4k)^{4k}\cdot\left[ \sum_{i=1}^{n}{\E[R_i^{4k}]} + \left(\sum_{i=1}^{n}{\E[R_i]}\right)^{4k}\right].
\end{align*}
Now $\sum_{i=1}^{n}{\E[R_i]} = \tr(\U\C\U^\top)$ and $\E[R_i^{4k}]$ can be found as follows: if $p_i=1$ then $R_i = \u_i^\top\C\u_i$ and therefore $R_i^{4k} = (\u_i^\top\C\u_i)^{4k} \leq \norm{\u_i}^{2(4k-1)}\u_i^\top\C^{4k}\u_i \leq \u_i^\top\C^{4k}\u_i $, if $p_i<1$, we have,
\begin{align*}
    \E[R_i^{4k}] &= p_i\cdot\frac{1}{p_i^{4k}}(\u_i^\top\C\u_i)^{4k} = \frac{d^{4k-1}}{s^{4k-1}}\cdot\frac{1}{(\beta_1\tilde l_i)^{4k-1}}(\u_i^\top\C\u_i)^{4k}\\
    & \leq \frac{d^{4k-1}}{s^{4k-1}}\cdot\frac{1}{(\beta_1\tilde l_i)^{4k-1}}\cdot \norm{\u_i}^{2(4k-1)}\u_i^\top\C^{4k}\u_i\\
    &=\frac{d^{4k-1}}{s^{4k-1}}\cdot\frac{1}{(\beta_1\tilde l_i)^{4k-1}}\cdot l_i^{4k-1}\u_i^\top\C^{4k}\u_i.
\end{align*}
Now using $l_i \leq \beta_1 \tilde l_i$ we get,
\begin{align*}
     \E[R_i^{4k}] \leq \frac{d^{4k-1}}{s^{4k-1}}\cdot\u_i^\top\C^{4k}\u_i.
\end{align*}
Therefore,
\begin{align*}
    \E\left[\left(\sum_{i=1}^{n}{\frac{b_i}{p_i}\u_{i}^\top\C\u_{i}}\right)^{4k}\right] \leq2^{4k}\cdot(4k)^{4k}\cdot\left[\max\left(1,\frac{d^{4k-1}}{s^{4k-1}}\right)\cdot\tr(\C^{4k}) + (\tr(\C))^{4k}\right].
\end{align*}
Using the above inequality along with using $\C \preceq 6\I$ we get,
\begin{align*}
    \E\left[\left(\sum_{t=1}^{s}{\frac{1}{sp_{i_t}}\u_{i_t}^\top\C\u_{i_t}}\right)^{4k}\right] \leq2^{4k}\cdot(4k)^{4k}\cdot 6^{4k}\cdot d^{4k} = (12)^{4k}\cdot(4k)^{4k}\cdot d^{4k}.
\end{align*}
Substituting the above bound in (\ref{small_prob}), it follows that:
\begin{align*}
\E_{\ep}\left[(\tr(\U_{\xib}\C\U_{\xib}^\top\U_{\xib}\C\U_{\xib}^\top))^k | \neg \mathcal{A}_{\xib} \right] \leq y\cdot \delta +  (12)^{4k}\cdot(4k)^{4k}\cdot d^{4k}\cdot\frac{1}{y}.
\end{align*}
For $y > (12d)^{4k}$ and $\delta < \frac{1}{y}$, we get the desired result.
\end{proof}
We are now ready to prove the main result of this section. The following result, which is central to our analysis, upper bounds the high moments of a deviation of a quadratic form from its mean.
\begin{lemma}[Restricted Bai-Silverstein for $(s,\beta_1,\beta_2)$-LESS embedding]\label{Bai_Silverstein_LESS}
Let $p\in \mathbb{N}$ be fixed and $\U \in \R^{n \times d}$ have orthonormal columns. Let $\x_i = \diag(\xib)\y_i$ where $\y_i \in \R^n$ has independent $\pm 1$ entries and $\xib$ is a $(s,\beta_1,\beta_2)$-approximate leverage score sparsifier for $\U$. Let $\U_{\xib} = \diag(\xib)\U$. Then for any matrix for any matrix $0\preceq \C \preceq 6\I$ and any $\delta>0$ we have,
   \begin{align*}
     \left(\E\left[\tr(\C) - \x_i^\top\U\C\U^\top\x_i\right]^p\right)^{1/p} <  c\cdot p^3\sqrt{d}\cdot\left(1+\sqrt{\frac{dp\log(d/\delta)}{s}}\right)
\end{align*}
 for an absolute constant $c>0$. 
\end{lemma}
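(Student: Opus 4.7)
The plan is to decompose the deviation into a piece depending only on the sparsifier $\xib$ and a piece depending on the Rademacher vector $\y_i$ conditional on $\xib$, then bound each by the appropriate moment inequality. Writing $\M = \U_{\xib}^\top \U_{\xib}$ and $\B = \U_{\xib} \C \U_{\xib}^\top$, and using $\sum_j \u_j\u_j^\top = \I$ (so that $\tr(\C\M) = \sum_j \xi_j^2 \u_j^\top\C\u_j$), I will split
\begin{align*}
\tr(\C) - \x_i^\top \U\C\U^\top\x_i \;=\; \underbrace{\tr(\C) - \tr(\C\M)}_{=:\,Z_2} \;+\; \underbrace{\tr(\B) - \y_i^\top\B\y_i}_{=:\,Z_1},
\end{align*}
where $Z_2 = \sum_j (1-\xi_j^2)\,\u_j^\top \C \u_j$ is a sum of $n$ independent mean-zero random variables depending only on $\xib$, while conditional on $\xib$, $Z_1 = -\sum_{j\neq k} y_jy_k B_{jk}$ is an off-diagonal Rademacher quadratic form. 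By the triangle inequality in $L^p$ it suffices to bound each piece separately.

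For $Z_2$, I would apply Rosenthal's inequality (Lemma \ref{Rosenthal_1}). Using $\Var(\xi_j^2) = 1/p_j - 1$, $|\u_j^\top\C\u_j|\leq \|\C\|\cdot l_j$, and the leverage-score lower bound $p_j\gtrsim sl_j/d$ baked into the definition of the sparsifier, the variance sum telescopes to $\sum_j (1/p_j-1)(\u_j^\top\C\u_j)^2 = O(d^2/s)$. The higher-moment piece is handled analogously since $\E|1-\xi_j^2|^p(\u_j^\top\C\u_j)^p\leq 6^p l_j (d/s)^{p-1}$ whose $p$-th root is $O(d/s)$. This yields $(\E|Z_2|^p)^{1/p} = O(pd/\sqrt{s})$, which is comfortably within the target bound.

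For $Z_1$, conditioning on $\xib$ and invoking the classical Bai-Silverstein inequality (Lemma \ref{original_bai_silver}) with $\nu_l=1$ gives
\begin{align*}
\E_{\y}\bigl[|Z_1|^p \,\big|\, \xib\bigr] \;\leq\; (2p)^p\bigl(\tr(\B\B^\top)^{p/2} + \tr((\B\B^\top)^{p/2})\bigr).
\end{align*}
After taking expectation over $\xib$, the decisive step is controlling $(\E\tr(\B\B^\top)^{p/2})^{1/p}$. I will split on the event $\mathcal{A}_\xib$ from Lemma \ref{chernoff_bound} (Matrix Chernoff) chosen with a very small failure probability $\delta' = (12d)^{-2p}$. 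On $\mathcal{A}_\xib$, the sub-Gaussian regime of the Chernoff tail translates into the moment-level bound $\|\M\|\leq 1+O(\sqrt{dp\log(d/\delta)/s})$, giving $\tr(\B\B^\top)\leq \|\M\|^2\tr(\C^2)=O\bigl(d(1+\sqrt{dp\log(d/\delta)/s})^2\bigr)$. On the complement $\neg\mathcal{A}_\xib$, Lemma \ref{small_event_lemma} with $k=p/2$ bounds $\E[\tr(\B\B^\top)^{p/2}\mid\neg\mathcal{A}_\xib] \leq (2p)^{2p}$, so its contribution after the $p$-th root is only $O(p^2)$ and absorbs into the final prefactor. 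Multiplying by the $O(p\sqrt{d})$ factor from Bai-Silverstein then produces the claimed $cp^3\sqrt{d}\bigl(1+\sqrt{dp\log(d/\delta)/s}\bigr)$ bound.

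The main technical obstacle will be securing the square-root dependence $\sqrt{dp\log(d/\delta)/s}$ rather than the naive $dp\log(d/\delta)/s$ one would read off directly from the statement of Lemma \ref{chernoff_bound}. This requires converting the Matrix Chernoff tail into a sharp moment bound via a careful integration of the tail, exploiting the sub-Gaussian portion $\Pr(\|\M\|\geq 1+\epsilon)\leq d\exp(-c\epsilon^2 s/d)$ valid for moderate $\epsilon$, and tuning the failure probability $\delta'$ inside $\mathcal{A}_\xib$ as a function of the moment order $p$. The $p^3$ prefactor emerges from stacking the $O(p)$ factor of Bai-Silverstein with the $O(p^2)$ trace-moment bound from Lemma \ref{small_event_lemma}, and keeping the two $\delta$-dependencies (in $\mathcal{A}_\xib$ and in the final bound) consistent will be the bookkeeping challenge.
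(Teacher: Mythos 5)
Your decomposition
\begin{align*}
\tr(\C) - \x_i^\top\U\C\U^\top\x_i \;=\; Z_2 + Z_1, \quad Z_2 = \tr(\C)-\tr(\C\M),\ Z_1 = \tr(\B)-\y_i^\top\B\y_i,
\end{align*}
with $\M=\U_\xib^\top\U_\xib$, $\B = \U_\xib\C\U_\xib^\top$, is exactly the paper's $T_1 + T_2$, and your treatment of $Z_2$ via Rosenthal's inequality (variance $O(d^2/s)$, higher moments $O((d/s)^{p-1}d)$) matches the paper's treatment of $T_1$. Conditioning on $\xib$ and invoking the classical Bai--Silverstein inequality for $Z_1$, then splitting on the Matrix Chernoff event $\mathcal A_\xib$ with Lemma~\ref{small_event_lemma} picking up the complement, is also the same structure.

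However, there is a genuine gap in how you plan to recover the $\sqrt{dp\log(d/\delta)/s}$ scaling from the Chernoff event. You bound $\tr(\B\B^\top) \leq \|\M\|^2\tr(\C^2)$, which places $\|\M\|$ at the \emph{second} power, and then propose to extract a square root by invoking a sub-Gaussian tail $\Pr(\|\M\|\geq 1+\epsilon)\leq d\exp(-c\epsilon^2 s/d)$. That tail is only valid for $\epsilon\lesssim 1$; to make the failure probability small enough for Lemma~\ref{small_event_lemma} (which needs $\delta'\lesssim (12d)^{-2p}$, hence $\log(d/\delta')\gtrsim p$) you need $\epsilon\gtrsim \sqrt{(d/s)\log(d/\delta')}$, and that is $\lesssim 1$ only when $s\gtrsim dp\log d$. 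In the sparse regime $s\ll dp\log d$ --- precisely the regime the lemma is designed to handle --- Matrix Chernoff is in the linear (Poisson) regime and only yields $\|\M\|\leq 1 + O((d/s)\log(d/\delta'))$ with no square root, and your bound would degrade to $\|\M\|^2\tr(\C^2)\lesssim d\,(d/s)^2\log^2(d/\delta)$, which after the $1/p$-th root gives $d\log(d/\delta)/s$ rather than $\sqrt{d\,p\log(d/\delta)/s}$.

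The paper sidesteps this by using the sharper inequality
\begin{align*}
\tr(\B\B^\top) = \tr(\C\M\C\M) \;\leq\; \|\M\|\cdot \tr(\C^2\M) \;=\; \|\M\|\cdot \tr(\U_\xib\C^2\U_\xib^\top),
\end{align*}
which keeps $\|\M\|$ at the \emph{first} power and pushes the remaining $\xib$-randomness into $\tr(\U_\xib\C^2\U_\xib^\top)$. Raising to the $p/2$ and taking the $1/p$-th root of the whole moment then produces $\|\M\|^{1/2}$, so even the crude linear Chernoff bound $\|\M\|\leq 1+3(d/s)\log(d/\delta)$ yields $(1+3(d/s)\log(d/\delta))^{1/2}\leq 1+\sqrt{3(d/s)\log(d/\delta)}$ with no sub-Gaussian argument at all. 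The price is that $\E\big[\tr(\U_\xib\C^2\U_\xib^\top)^{p/2}\big]$ is still random, which the paper handles with a second Rosenthal application (equation~\eqref{T2_v3}) contributing the other $\sqrt{d}$ factor. You should adopt this trace splitting; it is where the square root actually comes from, and without it the small-$s$ case does not close.
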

\begin{proof}
    Let $\x_i = \diag(\xib)\y_i$ where $\y_i$ is vector of Rademacher $\pm 1$ entries. Denote $\U_{\xib} = \diag(\xib)\U$.
\begin{align}
    \E \left[\left(\tr(\C) - \x_i^\top\U\C\U^\top\x_i\right)^p\right] &= \E\left[(\tr(\C) - \y_i^\top\U_{\xib}\C\U_{\xib}^\top\y_i)^p\right] \nonumber \\
    & = \E\left[(\tr(\C) - \tr(\U_{\xib}\C\U_{\xib}^\top) + \tr(\U_{\xib}\C\U_{\xib}^\top)-\y_i^\top\U_{\xib}\C\U_{\xib}^\top\y_i)^p\right] \nonumber\\
    & \leq 2^{p-1} \left(\underbrace{\E\left[\tr(\C) - \tr(\U_{\xib}\C\U_{\xib}^\top)\right]^p}_{T_1} + \underbrace{\E\left[\tr(\U_{\xib}\C\U_{\xib}^\top)-\y_i^\top\U_{\xib}\C\U_{\xib}^\top\y_i\right]^p}_{T_2}\right). \label{BS_1}
\end{align}
First, consider $T_1$, substitute $\xi_i^2 = \frac{b_i}{p_i}$, and assume exponent $p$ to be even. 
\begin{align*}
    \E\left[\tr(\C) - \tr(\U_{\xib}\C\U_{\xib}^\top)\right]^p &= \E\left[\tr(\U\C\U^\top) - \tr(\U_{\xib}\C\U_{\xib}^\top)\right]^p\\
    & = \E\left[\left(\sum_{i=1}^{n}(\xi_i^2-1)\u_i^\top\C\u_i\right)^p\right]\\
    &=\E\left[\left(\sum_{i=1}^{n}(\frac{b_i}{p_i}-1)\u_i^\top\C\u_i\right)^{p}\right].
\end{align*}
For $ 1 \leq i\leq n$ consider random variables $R_i$ where $R_i = \frac{b_i}{p_i}\u_i^\top\C\u_i$. Furthermore let $Y_i =R_i - \u_i^\top\C\u_i$. Then $\sum_{i=1}^{n}{Y_i} =  \sum_{i=1}^{n}{(\frac{b_i}{p_i}-1)\u_i^\top\C\u_i}$ and $\E[Y_i]=0$.
$Y_i$ are independent mean zero random variables with finite $p^{th}$ moments and therefore we can use Rosenthal's inequality (for symmetric random variables, Lemma \ref{Rosenthal}) to get,
\begin{align}
    \E\left[\sum_{i=1}^{n}{Y_i}\right]^p < A(p) \left(\underbrace{\sum_{i=1}^{n}\E[Y_i]^p}_{T_1^{1}} + \underbrace{\left(\sum_{i=1}^{n}{\E[Y_i]^2}\right)^{p/2}}_{T_1^2}\right) \label{T1}
\end{align}
where $A(p)$ is a constant depending on $p$.
We bound $T_1^1$ and $T_1^2$ separately, starting with $T_1^1$ as,
\begin{align*}
    T_1^1 = \sum_{i=1}^{n}{\E[Y_i]^p}.
\end{align*}
Recall that $Y_i = R_i - \u_i^\top\C\u_i$. $R_i$ is always non-negative because $\C$ is a positive semi-definite matrix. Therefore we have,
\begin{align*}
    \E(Y_i)^p \leq \E(R_i)^p.
\end{align*}
We find the $p^{th}$ moment of $R_i$, $\E(R_i)^p =(\u_i^\top\C\u_i)^{p}$ if $p_i=1$. If $p_i<1$ then,
\begin{align*}
    \E(R_i)^p = p_i^{p-1}(\u_i^\top\C\u_i)^{p} &= \frac{d^{p-1}}{s^{p-1}}\cdot\frac{1}{(\beta_1 \tilde l_i)^{p-1}}(\u_i^\top\C\u_i)^{p} \leq \frac{d^{p-1}}{s^{p-1}}\cdot\frac{1}{(\beta_1 \tilde l_i)^{p-1}}\cdot \norm{\u_i}^{2(p-1)}\u_i^\top\C^p\u_i\\
    &\leq  \frac{d^{p-1}}{s^{p-1}}\u_i^\top\C^p\u_i
\end{align*}
where in the last inequality we use $\norm{\u_i}^{2(p-1)} = l_i^{p-1} \leq (\beta_1 \tilde l_i)^{p-1} $. Summing over $i$ from $1$ to $n$ we get an upper bound for $T_1^1$ as the following:
\begin{align}
    \sum_{i=1}^{n}{\E(R_i)^p}\leq \sum_{i=1}^{n}{\E(R_i)^p} \leq \max\left(1,\frac{d^{p-1}}{s^{p-1}}\right)\cdot\tr(\U\C^p\U^\top).\label{T11}
\end{align}
Now we upper bound $T_1^2$. Note that $\E(Y_i)^2 = \E[R_i]^2 = (\u_i^\top\C\u_i)^2$ if $p_i=1$ and if $p_i<1$ we have $\E[R_i]^2 = \frac{1}{p_i}\cdot(\u_i^\top\C\u_i)^2 \leq \frac{d}{s}\cdot\frac{1}{\beta_1\tilde l_i}\norm{\u_i}^2\u_i^\top\C^2\u_i \leq \frac{d}{s}\u_i^\top\C^2\u_i$. Summing over $i$ from $1$ to $n$ we get an upper bound for $T_1^2$ as,
 \begin{align}
    \left( \sum_{i=1}^{n}{\E(Y_i)^2}\right)^{p/2} \leq \left( \max\left(1,\frac{d}{s}\right)\tr(\U\C^2\U^\top)\right)^{p/2}. \label{T12}
\end{align}
Substituting (\ref{T11}) and (\ref{T12}) in (\ref{T1}) we have,
\begin{align}
    \E\left[\tr(\C) - \tr(\U_{\xib}\C\U_{\xib}^\top)\right]^p \leq A(p)\cdot\left( \max\left(1,\frac{d^{p-1}}{s^{p-1}}\right)\cdot\tr(\U\C^p\U^\top) +  \left( \max\left(1,\frac{d}{s}\right)\tr(\U\C^2\U^\top)\right)^{p/2}\right). \label{BS_2}
\end{align}
Now we aim to upper bound the term $T_2$ in (\ref{BS_1}) i.e., $\E\left[\tr(\U_{\xib}\C\U_{\xib}^\top)-\y_i^\top\U_{\xib}\C\U_{\xib}^\top\y_i\right]^p$. First, we condition over $\xib$ and take expectation over $\y_i$. This requires using standard Bai-Silverstein inequality (Lemma \ref{original_bai_silver}), we get,
\begin{align*}
\E\left[\tr(\U_{\xib}\C\U_{\xib}^\top)-\y_i^\top\U_{\xib}\C\U_{\xib}^\top\y_i\right]^p \leq B(p) \cdot\E\left[\left(\nu_4 \tr(\U_{\xib}\C\U_{\xib}^\top\U_{\xib}\C\U_{\xib}^\top)\right)^{p/2} + \nu_{2p}\tr(\U_{\xib}\C\U_{\xib}^\top\U_{\xib}\C\U_{\xib}^\top)^{p/2}\right]
\end{align*}
where $B(p)$ is a constant depending on $p$. Since $\y_i$ consists of $\pm 1$ entries we have $\nu_4, \nu_{2p} \leq 1$. Also using $\tr(\A\B) \leq\tr(\A)\tr(\B)$ and considering the high probability event $\mathcal{A}_{\xib}$ capturing $\norm{\U_{\xib}^\top\U_{\xib}} \leq \left(1+\max\left(\frac{3 d\log(d/\delta)}{s}, 3\log(d/\delta)\right)\right)$. We get the following, 
\begin{align}
      &\E\left[\tr(\U_{\xib}\C\U_{\xib}^\top)-\y_i^\top\U_{\xib}\C\U_{\xib}^\top\y_i\right]^p \leq 2B(p)\cdot\E\left(\tr(\U_{\xib}\C\U_{\xib}^\top\U_{\xib}\C\U_{\xib}^\top)\right)^{p/2} \nonumber\\
=&2B(p)\cdot\left[\E\left(\tr(\U_{\xib}\C\U_{\xib}^\top\U_{\xib}\C\U_{\xib}^\top)\cdot\ind_{\mathcal{A}_{\xib}}\right)^{p/2}\right]
       + 2B(p)\cdot\left[\E\left(\tr(\U_{\xib}\C\U_{\xib}^\top\U_{\xib}\C\U_{\xib}^\top)\cdot\ind_{\neg \mathcal{A}_{\xib}}\right)^{p/2}\right] \nonumber\\
      \leq &2B(p)\cdot\left(1+\max\left(\frac{3 d\log(d/\delta)}{s}, 3\log(d/\delta)\right)\right)^{p/2}\E\left(\tr(\U_{\xib}\C^2\U_{\xib}^\top)\right)^{p/2}+ 2B(p)\cdot (2p)^{2p}. \label{T2}
\end{align}
The first term in the last inequality follows from the Matrix-Chernoff (Lemma \ref{matrix_chernoff}) and the second term follows from Lemma \ref{small_event_lemma} by considering $k=p/2$ (assuming that $\delta$ is small enough so that Lemma \ref{small_event_lemma} is satisfied). We now upper-bound $\E\left(\tr(\U_{\xib}\C^2\U_{\xib}^\top)\right)^{p/2}$ using Rosenthal's inequality for uncentered (non-symmetric) random variables,
\begin{align*}
    \tr(\U_{\xib}\C^2\U_{\xib}^\top) = \sum_{i=1}^{n}{\xi_i^2\u_i^\top\C^2\u_i} = \sum_{i=1}^{n}{\frac{b_i}{p_i}\u_i^\top\C^2\u_i}.
\end{align*}
For $1\leq i \leq n$ consider independent random variables $R_i'=\frac{b_i}{p_i}\u_i^\top\C^2\u_i$. We have,
\begin{align*}
    \sum_{i=1}^{n}{R_i'} = \tr(\U_{\xib}\C^2\U_{\xib}^\top).
\end{align*}
Here $R_i'$ are positive random variables with finite $(p/2)^{th}$ moment. Using Rosenthal's inequality we get,
\begin{align}
    \E\left(\sum_{i=1}^{n}{R_i'}\right)^{p/2} \leq C(p/2) \cdot\left[\underbrace{\sum_{i=1}^{n}{\E(R_i')^{p/2}}}_{T_2^1} + \underbrace{\left(\sum_{i=1}^{n}{\E R_i'}\right)^{p/2}}_{T_2^2}\right] .\label{T2_v2}
\end{align}
It is straightforward to upper bound $\E(R_i')^{p/2}$ as,
\begin{align*}
    \E(R_i')^{p/2} &= (\u_i^\top\C^2\u_i)^{p/2} \ \ \text{if} \ \ p_i =1,\\
    &\leq \frac{d^{\frac{p}{2}-1}}{s^{\frac{p}{2}-1}}\u_i^\top\C^{p}\u_i \ \ \text{if} \ \ p_i <1.
\end{align*}
Summing over $i$ from $1$ to $n$ we get upper bound for $T_2^1$ as,
\begin{align}
    \sum_{i=1}^{n}{\E(R_n')^{p/2}} \leq  \max\left(1,\frac{d^{\frac{p}{2}-1}}{s^{\frac{p}{2}-1}}\right)\tr(\U\C^p\U^\top) .\label{T21}
\end{align}
Now we consider $T_2^2$. It is simply  given as,
\begin{align}
    \left(\sum_{i=1}^{n}{\E R_j'}\right)^{p/2} = \left(\tr(\U\C^2\U^\top)\right)^{p/2}. \label{T22}
\end{align}
Combining (\ref{T21}) and (\ref{T22}) and substituting in (\ref{T2_v2}) we get,
\begin{align}
\E\left(\tr(\U_{\xib}\C^2\U_{\xib}^\top)\right)^{p/2} \leq C(p/2)\cdot\left(\max\left(1,\frac{d^{\frac{p}{2}-1}}{s^{\frac{p}{2}-1}}\right)\tr(\U\C^p\U^\top) + \left(\tr(\U\C^2\U^\top)\right)^{p/2} \right). \label{T2_v3}
\end{align}
Substituting (\ref{T2_v3}) in (\ref{T2}) and let $w=\max(1,d/s)$,
\begin{align}
&\E\left[\tr(\U_{\xib}\C\U_{\xib}^\top)-\y_i^\top\U_{\xib}\C\U_{\xib}^\top\y_i\right]^p \leq 2B(p)\cdot (2p)^{2p} 
    \nonumber\\ + &2B(p)C(p/2)\cdot \left(1+3w\log(d/\delta)\right)^{p/2}\cdot
    \left[w^{\frac{p}{2}-1}\tr(\U\C^p\U^\top) + \left(\tr(\U\C^2\U^\top)\right)^{p/2} \right]. \label{BS_3}
\end{align}
Combining the bounds for $T_1$ (\ref{BS_2}) and $T_2$ (\ref{BS_3}) substituting in (\ref{BS_1}), and noting that $\tr(\U\C^k\U^\top)=\tr(\C^k)$ for any $k$, we get,
\begin{align*}
     \E \left[\left(\tr(\C) - \x_i^\top\U\C\U^\top\x_i\right)^p\right] &\leq2^{p-1}A(p)\cdot\left( w^{p-1}\cdot\tr(\C^p) +  \left( w\cdot\tr(\C^2)\right)^{p/2}\right) \\ + 2^pB(p)C(p/2)\cdot &\left(1+3w\log(d/\delta)\right)^{p/2}\cdot
    \left[w^{\frac{p}{2}-1}\tr(\C^p) + \left(\tr(\C^2)\right)^{p/2} \right]\\
    + 2^{p}B(p)(2p)^{2p}.
\end{align*}
Now we specify the various constants depending on $p$. We have $A(p) \leq (2p)^{p}, B(p) \leq (2p)^{p}, C(p/2) \leq p^{p/2} $. Also we use $\tr(\C^k) \leq 6^{k}\cdot d$ since $\C \preceq 6\I$. This implies for an absolute constant $c$ we have,
\begin{align*}
     \left(\E\left[\tr(\C) - \x_i^\top\U\C\U^\top\x_i\right]^p\right)^{1/p}
     \leq c\cdot p^3\sqrt{d}\cdot\left(1+\sqrt{\frac{dp\log(d/\delta)}{s}}\right)
\end{align*}
where $\delta>0$ is now arbitrary.
\end{proof}

\vspace{-5mm}
\section{Numerical Experiments}
\label{s:experiments}

Here, we provide a small set of numerical experiments to empirically examine the relative error of distributed averaging estimates from individual machines to return an estimator $\hat{\bf{x}} = \frac{1}{q}\sum_{i=1}^q \tilde{\bf{x}}_i$. First, we show that when the sketching-based estimates $\tilde {\bf{x}}_i$ do have a non-negligible bias, so that the distributed averaging estimator $\hat{\bf{x}}$ remains inconsistent in the number of machines -- even with an unlimited number of machines, as long as the space on each machine is limited, the averaging estimator's performance will be limited by the bias of the individual estimates. Second, we show that one can use sketching to compress a data subsample at no extra computational cost, without increasing its bias, which we refer to as the free lunch in distributed averaging via sketching.

We examine three benchmark regression datasets, \texttt{Abalone} (4177 rows, 8 features; Figure~\ref{fig:lessuniform}a), \texttt{Boston} (506 rows, 13 features; Figure~\ref{fig:lessuniform}b), and \texttt{YearPredictionMSD} (truncated to the first 2500 rows, with 90 features; Figure~\ref{fig:msd}), from the \texttt{libsvm} repository from \cite{libsvm}. We repeat the experiment 100 times, with the shaded region representing the standard error. We visualize the relative error of the averaged sketch-and-solve estimator $\frac{L(\hat{\textbf{x}})-L(\textbf{x}^*)}{L(\textbf{x}^*)}$,
against the number of machines $q$ used to generate the estimate $\hat{\bf{x}} = \frac{1}{q}\sum_{i=1}^q \tilde{\bf{x}}_i$.

Each estimate $\tilde{\bf{x}}_i$ was constructed with the same sparsification strategy used by LESS, except that instead of sparsifying the sketch with leverage scores, we instead sparsify them with uniform probabilities. Following \cite{newton-less}, we call the resulting method LESSUniform. 
Within each dataset, we perform four simulations, each with different sketch sizes and different numbers of nonzero entries per row. We vary these so that the product (sketch size $\times$ nnz per row) stays the same, so as to ensure that the total cost of sketching is fixed in each plot.

\begin{figure}
    \centering
    a)\includegraphics[scale=0.7]{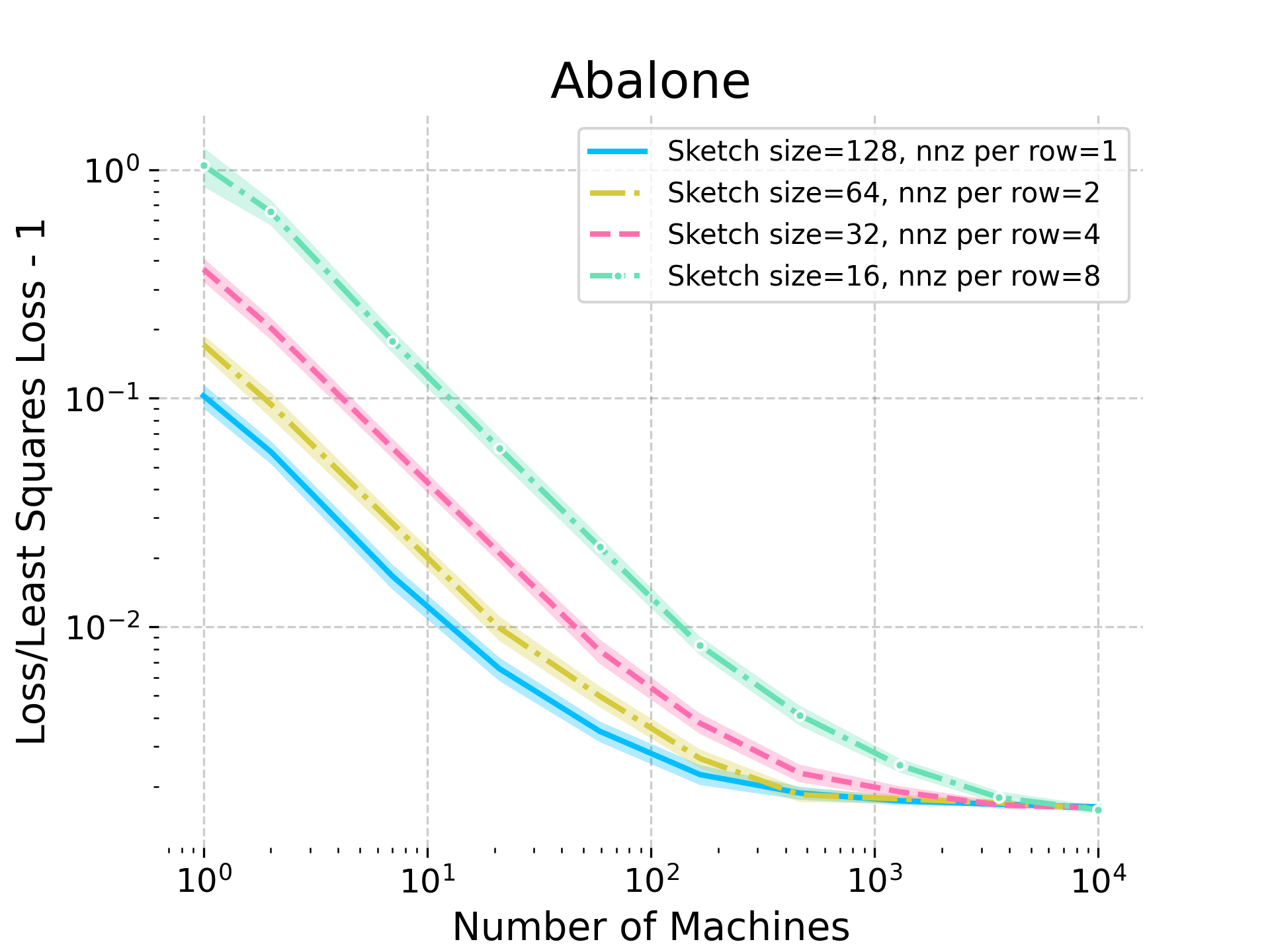}
    b)\includegraphics[scale=0.7]{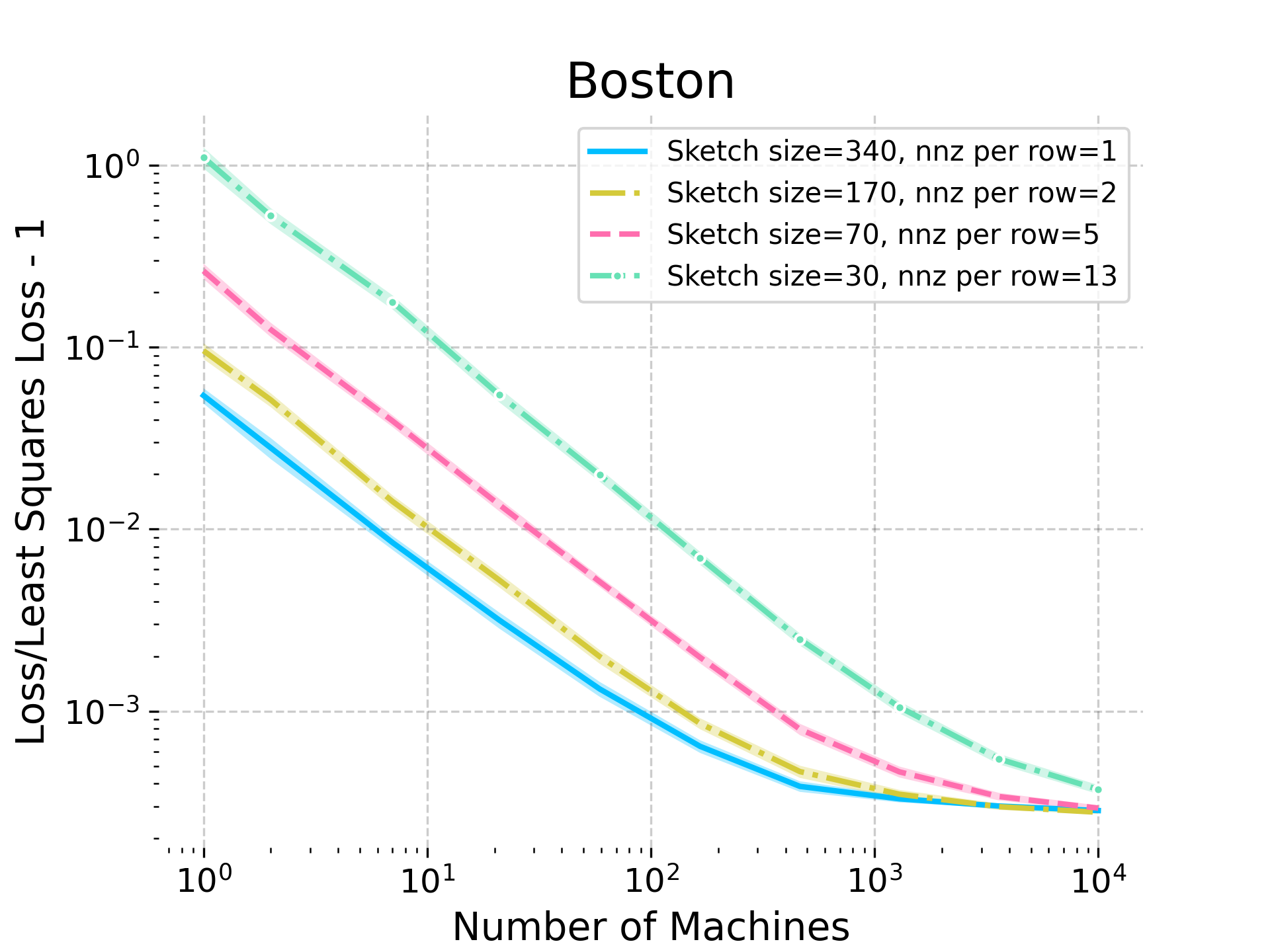}
    \caption{Comparison of the relative error of the distributed averaging estimator of sketch-and-solve least squares estimates where the sketches are constructed with sparse sketching matrices with uniform probabilities (LESSUniform) on \texttt{libsvm} datasets \texttt{Abalone} and \texttt{Boston} (see Figure \ref{fig:msd} for results on \texttt{YearPredictionMSD}). For each dataset, the computational cost of sketching is the same in all four parameter settings. 
    Remarkably, sketching to a smaller size appears to preserve near-unbiasedness without incurring any additional computational cost. }
    \label{fig:lessuniform}
\end{figure}

As expected, decreasing the sketch size while increasing the number of nonzeros per row (effectively increasing the amount of "compression" occurring here by sparse sketching) increases the error in all three datasets. However, remarkably, it does not seem to affect the bias. We can therefore conclude that sparse sketches preserve near-unbiasedness, while enabling us to reduce the sketch size from subsampling without incurring any additional computational cost. The increase in error can be mitigated in a distributed setting by increasing the number of estimates/machines.

\end{document}